\documentclass[journal,twoside,web]{ieeecolor}
\usepackage{generic}
\usepackage{cite}
\usepackage{amsmath,amssymb,amsfonts,amsthm}
\usepackage{algorithmic}
\usepackage{graphicx}
\usepackage{textcomp}
\usepackage{subcaption}
\usepackage{mathtools}
\usepackage[acronym]{glossaries}
\usepackage{cleveref}
\usepackage{xcolor}
\usepackage{array}
\usepackage{tikz}

\newcommand{\Vspace}{\vphantom{\vdots}}

\usepackage[shortlabels]{enumitem}

\newcommand\undermat[2]{%
  \makebox[0pt][l]{$\smash{\underbrace{\phantom{%
    \begin{matrix}#2\end{matrix}}}_{\text{$#1$}}}$}#2}

\DeclareMathOperator*{\argmin}{arg\,min}

\newtheorem{theorem}{Theorem}
\newtheorem{lemma}{Lemma}

\newtheorem{remark}{Remark}
\newtheorem{corollary}{Corollary}
\newtheorem{definition}{Definition}

\newtheorem{assumption}{Assumption}

\newcommand{\ak}[1]{{\color{black}{#1}}}

\newacronym{ediss}{\emph{E-$\delta$-ISS}}{exponentially incrementally input-to-state stable}
\newacronym{mpc}{MPC}{model predictive control}
\newacronym{roa}{ROA}{region of attraction}
\newacronym{es}{ES}{exponential stability}
\newacronym{lqmpc}{LQMPC}{linear quadratic MPC}

\newcommand\copyrighttext{%
  \footnotesize \copyright 2025 IEEE. Personal use of this material is permitted. Permission from IEEE must be obtained for all other uses, in any current or future media, including reprinting/republishing this material for advertising or promotional purposes, creating new collective works, for resale or redistribution to servers or lists, or reuse of any copyrighted component of this work in other works.}
\newcommand\copyrightnotice{%
\begin{tikzpicture}[remember picture,overlay]
\node[anchor=south,yshift=2pt] at (current page.south) {\fbox{\parbox{\dimexpr\textwidth-\fboxsep-\fboxrule\relax}{\copyrighttext}}};
\end{tikzpicture}%
}

\newcommand\publishedtext{%
  \footnotesize \copyright Published in IEEE Transactions on Automatic Control\\
  DOI: 10.1109/TAC.2025.3539988 }
\newcommand\publishednotice{%
\begin{tikzpicture}[remember picture,overlay]
\node[anchor=north,yshift=-3pt] at (current page.north) {\fbox{\parbox{\dimexpr\textwidth-\fboxsep-\fboxrule\relax}{\publishedtext}}};
\end{tikzpicture}%
}

\begin{document}
\title{Closed-Loop Finite-Time Analysis of \\Suboptimal Online Control}
\author{Aren Karapetyan, Efe C. Balta, Andrea Iannelli, and John Lygeros
\thanks{This work has been supported by the Swiss National Science Foundation under NCCR Automation (grant agreement  $51\text{NF}40\_225155$) and by the German Research Foundation (DFG) under Germany’s Excellence Strategy - EXC 2075 – 390740016.}%
\thanks{A. Karapetyan, and J. Lygeros are with the Automatic Control Laboratory, Swiss Federal Institute of Technology (ETH Z\"urich), 8092 Z\"urich, Switzerland
       (E-mails: {\tt\small \{akarapetyan, lygeros\}@control.ee.ethz.ch}).}%
\thanks{E. C. Balta  is  with  Inspire AG, 8005 Zürich, Switzerland (E-mail: {\tt\small efe.balta@inspire.ch}).}
\thanks{A. Iannelli is with the Institute for Systems Theory and Automatic Control, University of Stuttgart,  Stuttgart 70569,  Germany
        (E-mail: {\tt\small andrea.iannelli@ist.uni-stuttgart.de}).}%
}

\maketitle

\publishednotice

\copyrightnotice

\begin{abstract}
Suboptimal methods in optimal control arise due to a limited computational budget, unknown system dynamics, or a short prediction window among other reasons. \ak{In this work, we study the transient closed-loop performance of such methods by providing finite-time suboptimality gap guarantees.  We consider the control of discrete-time, nonlinear time-varying dynamical systems and establish sufficient conditions for such guarantees.} These allow the control design to distribute a limited computational budget over a time horizon and estimate the on-the-go loss in performance due to suboptimality. We study exponential incremental input-to-state stabilizing policies and show that for nonlinear systems, under some mild conditions, this property is directly implied by exponential stability without further assumptions on global smoothness. The analysis is showcased on a suboptimal model predictive control use case.

\end{abstract}

\begin{IEEEkeywords}
Nonlinear Systems, Optimization Algorithms, Predictive Control
\end{IEEEkeywords}

\section{Introduction} \label{sec:introduction}

Optimal control aims to compute an input signal to drive a dynamical system to a given target state, while optimizing a performance cost subject to constraints. In the absence of uncertainty, the problem has been studied using calculus of variations \cite{pontryagin1987mathematical} and dynamic programming \cite{bellman1966dynamic}. However, in many practical applications with limited computational power, it becomes difficult or infeasible to solve due to the curse of dimensionality  \cite{bellman1966dynamic}. This is further exacerbated if there are unknown system and/or cost parameters. As a result, control designers rely on approximate or suboptimal methods \cite{bertsekas2022lessons} to solve the problem. If there are adequate computational resources and an accurate simulator of the true system, the problem can be solved up to an arbitrary accuracy using approximate dynamic programming \cite{bertsekas2012dynamic} or reinforcement learning \cite{sutton2018reinforcement} techniques. When this is not the case, e.g. the system has unpredictable dynamics or the cost to be optimized for is changing adversarially, offline methods alone are not sufficient. In such cases, the policy is updated online or adaptively as more data becomes available. We consider a  controller to be \textit{online}, rather than \textit{offline} or pre-fixed if it is capable of adapting to changes in the control problem during its single trajectory execution. For example, a model predictive controller (MPC) can utilize a finite window of predictions for possibly time-varying dynamics, disturbances, references, or costs to generate an input at each step. While such a definition partially overlaps with that of adaptive control \cite{aastrom2013adaptive}, it defines a wider class of controllers, that also includes online optimization-based methods, like \cite{li2019online,hazan2020nonstochastic}.

\begin{figure}%
    \begin{center}
    \includegraphics[width=0.88\columnwidth]{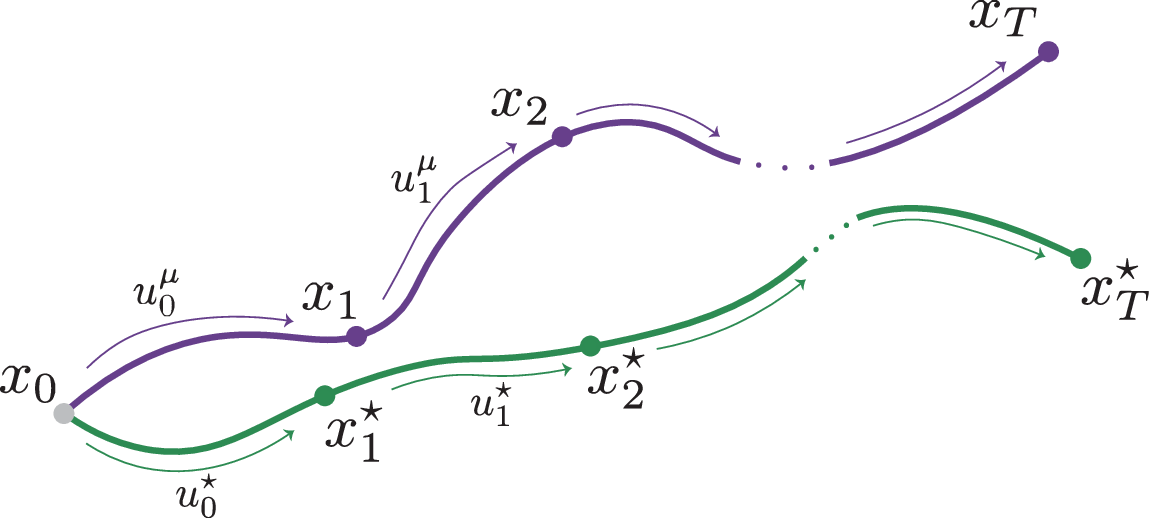}    
    \caption{Two separate closed-loop trajectories, generated by applying a suboptimal input signal $\boldsymbol{u}^\mu$, and a benchmark input signal $\boldsymbol{u^\star}$.}
    \label{fig:trajectories}
    \end{center}
\end{figure}

The focus of this work is the closed-loop suboptimality analysis of online controllers in the finite-time or transient domain. Given their real-time implementation, online methods need to stay computationally efficient while stabilizing the system. Additionally, their performance is measured in terms of the accumulated cost that needs to be kept to a minimum. To quantify this, we fix a benchmark policy that we  may deem to be close to the desired optimal one,  visualized in Figure \ref*{fig:trajectories}, and study the suboptimality gap of the given online algorithm in terms of the additional incurred cost due to its suboptimality with respect to the benchmark. Such an analysis provides a relative measure on the performance of the given algorithm, since, in general, the benchmark policy attains a non-zero cost. In this context, we pose the following questions.
\begin{enumerate}
    \item \emph{How does the transient cost performance of an online algorithm scale with a measure of its suboptimality?}
    \item \emph{How should the benchmark policy be chosen to achieve meaningful finite-time bounds?}
\end{enumerate}

We consider nonlinear time-varying systems
and derive conditions on the suboptimal and benchmark policies such that the suboptimality gap, defined as the difference of their respective closed-loop costs, can be quantified. We show that it scales with the pathlength of the suboptimal trajectory and the rate of convergence of the suboptimal policy to the benchmark. The former can be computed online providing an on-the-go estimate of the suboptimality, and the latter usually depends on suboptimal algorithm parameters, allowing the control designer to tune these accordingly. We study the suboptimal, projected gradient method (PGM)-based linear-quadratic MPC \cite{liao2021analysis} as an example of online control satisfying the proposed assumptions, and bound its suboptimality gap with respect to optimal MPC. Our contributions are summarized below:
\begin{enumerate}[a)]
    \item \emph{We show that if the system dynamics in closed-loop with the benchmark policy are \acrfull{ediss}, and the suboptimal policy is linearly converging,} then the suboptimality gap scales with the pathlength of the closed-loop suboptimal trajectory, and the convergence rate.
    \item  \emph{We derive sufficient conditions under which \acrfull*{es} of a non-smooth nonlinear time-varying system implies \acrshort{ediss}, making the condition on the benchmark policy easier to verify.}
    \item \emph{We study the suboptimal, PGM-based linear-quadratic \acrshort*{mpc} problem as an example satisfying these assumptions, and bound its suboptimality gap}.
\end{enumerate}

The \acrshort{ediss} assumption on the benchmark policy is crucial for the validity of the results. Incremental input-to-state-stability is studied in \cite{angeli2002lyapunov}, in the continuous-time and in \cite{tran2018convergence}, in the discrete-time settings, providing a condition on the deviation of two separate trajectories of the same system. If the dynamics are smooth, a sufficient condition  for \acrshort*{ediss} to hold is that of contraction \cite{lohmiller1998contraction,tsukamoto2021contraction,davydov2022non, FB-CTDS}. However, this is often not the case when one considers closed-loop dynamics under optimal controllers, e.g. under a constrained MPC \cite{bemporad2002explicit} policy. Since we are often interested in such benchmarks, we devote Section \ref{sec:sufficient_conditions}  to the derivation of sufficient conditions for \acrshort{ediss} to hold when the dynamics are non-smooth in general.

Such an incremental stability analysis allows for the derivation of asymptotically tight bounds in the sense that they scale with the \emph{level of suboptimality}, or the rate of convergence of the suboptimal policy to the optimal, converging to zero when the algorithm matches with the benchmark. The bounds also scale with the pathlength of the suboptimal trajectory, allowing an on-the-go calculation of the suboptimality gap that is independent of the benchmark states. Hence, for the suboptimal MPC example we achieve asymptotically tighter bounds under the same assumptions compared to \cite{karapetyan2023finite}. Moreover, our result is independent of the asymptotic properties of the suboptimal algorithm, providing finite-time performance bounds even when the closed-loop is not exponentially stable. 

Several notable examples of settings where such finite-time suboptimality analysis can be of use include  adaptive control \cite{hovakimyan2010L1,boffi2021regret,karapetyan2024regret}, with suboptimality due to unknown system parameters, online feedback optimization \cite{belgioioso2022online,he2022model} and online control \cite{li2019online,hazan2020nonstochastic}, with suboptimality due to unknown future costs, or real-time optimization-based control, such as real-time MPC \cite{diehl2005nominal,zanelli2020lyapunov, liao2020time}, suboptimal due to finite computational resources. 

While our analysis and results hold for any algorithm satisfying the outlined assumptions, optimization-based methods, including MPC, and their suboptimal variants are of particular interest given their wide applicability in practice. Suboptimal MPC has been studied extensively in the literature \cite{zeilinger2009real, richter2011computational,zeilinger2011real, diehl2005nominal,liao2020time,zanelli2020lyapunov,grune2010analysis,ferramosca2013cooperative}. Real-time MPC with iterative optimization methods has been studied in \cite{zeilinger2009real, richter2011computational,zeilinger2011real, diehl2005nominal,liao2020time,zanelli2020lyapunov}. Efficient warm-start methods are explored in \cite{zeilinger2011real, richter2011computational} with \cite{richter2011computational} also showing a lower bound on the number of gradient descent steps needed to achieve an open-loop suboptimality level, but no closed-loop analysis is performed. The closed-loop stability is analyzed in \cite{zeilinger2009real} using a robust MPC formulation, and also in later works in \cite{liao2020time,zanelli2020lyapunov} that extend the real-time MPC work of \cite{diehl2005nominal} and show asymptotic stability using Lyapunov methods and the small gain theorem of interconnected systems. Other sources of suboptimality for MPC have been studied in \cite{grune2010analysis}, analyzing the effect of the prediction horizon on the suboptimality of the MPC closed-loop cost, or in \cite{ferramosca2013cooperative} in the context of distributed control, to name a few. However, in this line of work the suboptimality gap as we define here is not considered. This is a common performance metric in online learning, where it is referred to as \emph{regret}. While there are several works \cite{li2019online,wabersich2022cautious, lin2022bounded} studying the regret of various model predictive controllers, their settings are different from ours. For instance, \cite{wabersich2022cautious} considers Bayesian MPC, \cite{li2019online}  studies the effect of predictions for linear time-invariant systems, and \cite{lin2022bounded} studies optimal controllers with inexact predictions using perturbation analysis, thus leading to settings that are not considered under the assumptions in this paper. The earlier work \cite{karapetyan2023finite} also analyzes the suboptimality of PGM-based MPC but does not utilize  nonlinear incremental stability analysis achieving a weaker bound that is not asymptotically tight.

The article is structured as follows. In Section \ref*{sec:preliminaries} we provide the preliminaries and the problem setup. In Section \ref*{sec:suboptimality}, we conduct the suboptimality gap analysis. Sufficient conditions for \acrshort{ediss} are derived in Section \ref*{sec:sufficient_conditions}, and in Section \ref*{sec:use_cases},  the suboptimal PGM-based linear MPC use case is studied with a numerical example.

\textit{Notation}: The sets of positive real numbers, positive integers, and non-negative integers are denoted by $\mathbb{R}_{+}$, $\mathbb{N}_+$ and $\mathbb{N}$, respectively. For some $k_0 \in \mathbb{N}$, the set of integers greater than or equal to $k_0$ is denoted by $\mathbb{N}_{\geq k_0}$. For a given vector $x$, its  Euclidean norm is denoted by $\|x\|$, and the two-norm weighted by a $Q\succ 0$  by $\|x\|_Q = \sqrt{x^{\top}Qx}$.  For a square matrix $W$, the spectral radius and the spectral norm are denoted by $\rho(W)$, and  $\|W\|$, respectively. Given a symmetric, positive definite matrix $M\succ 0$, we define its unique square-root by $M^\frac{1}{2}$, such that $M= M^\frac{1}{2} M^\frac{1}{2}$ and $M^\frac{1}{2}\succ 0$. For a matrix $W$ and $M\succ 0$, $\lambda_M^-(W)$ and $\lambda_M^+(W)$ denote the minimum and maximum eigenvalues of ${M}^{-\frac{1}{2}}W{M}^{-\frac{1}{2}}$; for any vector $x$, they satisfy $\lambda_M^{-}(W)\|x\|_M^2 \leq\|x\|_W^2 \leq \lambda_M^{+}(W)\|x\|_M^2$. The Euclidean point-to-set distance of a vector $x$ from a nonempty, closed, convex set $\mathcal{A}$ is denoted by $|x|_{\mathcal{A}} := \min_{y\in \mathcal{A}}\|x-y\|$, and the projection onto it by $\Pi_\mathcal{A}[x] = \argmin_{y\in \mathcal{A}}\|x-y\|$. The  $n-$dimensional closed ball of radius $r>0$, centered at the origin is denoted by $\mathcal{B}(r):= \{x\in \mathbb{R}^n\ | \ \|x\|\leq r\}$.  

\section{Preliminaries and Problem Setup} \label{sec:preliminaries}

We consider the optimal control problem for discrete-time, nonlinear time-varying systems of the form
\begin{equation}
    \label{eq:lti_system}
    x_{k+1} = f_0(k,x_k) + g(k,x_k,u_k),\quad k \in \mathbb{N}_{\geq k_0},
\end{equation}
where $x_k\in \mathbb{R}^n$ and $u_k \in \mathbb{R}^m$ denote the state and control input at time $k$, respectively, $f_0:\mathbb{N}_{\geq k_0} \times \mathbb{R}^n \rightarrow \mathbb{R}^n$ denotes the unforced nominal dynamics and $g: \mathbb{N}_{\geq k_0} \times \mathbb{R}^n \times \mathbb{R}^m \rightarrow \mathbb{R}^n$ the controlled dynamics. The system evolution starts from some initial state $x_{k_0} \in \mathbb{R}^n$ at some initial time $k_0\in \mathbb{N}$. The optimal control objective is to find the sequence of control inputs $\boldsymbol{u} = [u_{k_0}^{\top} \hdots u_{T-1}^{\top}]^{\top}$ that minimizes the finite-time cost
\begin{equation}
    J_T(x_{k_0},\boldsymbol{u}) = F\left(T,x_T\right) + \sum_{k=k_0}^{T-1}c(k,x_k,u_k),
     \label{eq:closed_loop_cost}
\end{equation}
where $c:\mathbb{N}_{\geq k_0}\times \mathbb{R}^n \times \mathbb{R}^m \longrightarrow \mathbb{R}$ is the stage cost at time $k$, $F:\mathbb{N}_{\geq k_0}\times \mathbb{R}^n \longrightarrow \mathbb{R}$ is the terminal cost, and $T\in \mathbb{N}_{\geq {k_0+1}}$ is the control horizon. In addition, the control input has to satisfy $u_k\in\mathcal{U}$ for all $k$, for some bounded $\mathcal{U} \subset \mathbb{R}^m$.  

An admissable policy $\pi (k,x): \mathbb{N}_{\geq k_0} \times \mathbb{R}^n \rightarrow \mathcal{U}$ maps the state at time $k$ to a  control input, generating the control signal $\boldsymbol{u}^\pi = [u_{k_0}^{\pi\top} \hdots u_{T-1}^{\pi\top}]^{\top}$  and the associated trajectory $\boldsymbol{x}^\pi= [x_{k_0}^{\pi \top} \hdots x_{T}^{\pi \top}]^{\top}$. With a slight abuse of notation, its associated cost is denoted by $J_T(x_{k_0},\pi)$. We consider time-varying systems for generality, and note that the analysis holds directly for time-invariant systems and policies as a special case. We showcase this in Section \ref*{sec:use_cases}.

We are interested in the relation of a policy $\mu$, corresponding to a given suboptimal algorithm, with respect to another benchmark policy $\mu^*$ that is equipped with desirable characteristics, e.g. optimality. The two policies are defined as follows.

\textbf{Benchmark dynamics}: Consider a benchmark policy $\mu^\star:\mathbb{N}_{\geq k_0}  \times \mathbb{R}^n \rightarrow \mathcal{U}$. Given an initial state $x_{k_0}^\star \in \mathbb{R}^n$, the benchmark dynamics are given by\footnote{For readability, we place the time $k$ in the subscripts of $\mu$, $c$, and $F$.}
\begin{equation} \label{eq:optimal_system}
    x^{\star}_{k+1} = f_0(k,x_k^{\star}) +g(k,x_k^\star,\mu_k^{\star}(x_k^{\star})):= f(k,x^{\star}_k),
\end{equation}
for all $k \in \mathbb{N}_{\geq k_0}$. We assume that there exists a set $\mathcal{D}^\star \subseteq \mathbb{R}^n$ that is forward invariant under the closed-loop dynamics \eqref{eq:optimal_system}, and restrict attention to $x_{k_0}^\star \in \mathcal{D}^\star$. Hence, $x_k^\star \in \mathcal{D}^\star$ for all $k \in \mathbb{N}_{\geq k_0}$.

\textbf{Suboptimal dynamics}: The suboptimal state evolution for a given policy $\mu:\mathbb{N}_{\geq k_0}  \times \mathbb{R}^n \rightarrow \mathcal{U}$ can be represented in the following form\footnote{We drop the explicit reference to $\mu$ from the superscript of $x$ for readability.} for any $x_{k_0} \in \mathbb{R}^n$
\begin{equation}\label{eq:suboptimal_perturbed}
    \begin{split}
        x_{k+1} = f(k,x_k) +  \underbrace{g(k, x_k, \mu_k(x_k)) - g(k,x_k, \mu_k^\star(x_k))}_{:=w_k(x(k))},
    \end{split}
\end{equation}
for all $k\in \mathbb{N}_{\geq k_0}$. The mapping $w:\mathbb{N}_{\geq k_0} \times \mathbb{R}^n\rightarrow \mathbb{R}^n$ can be thought of as a state-dependent disturbance acting on the benchmark state dynamics \eqref{eq:optimal_system}, introduced due to suboptimality. It is assumed to be such that the closed-loop suboptimal dynamics \eqref{eq:suboptimal_perturbed} evolve within a set $\mathcal{\mathcal{D}^\mu} \subseteq \mathcal{D}^\star$ that is forward invariant under \eqref{eq:suboptimal_perturbed}. Restricting  attention to initial states $x_{k_0} \in \mathcal{D}^\mu$, it then holds that  $x_k \in \mathcal{D}^\mu$ for all $k \in \mathbb{N}_{\geq k_0}$.

Figure \ref{fig:trajectories} shows the pictorial evolution of the two considered trajectories starting from the same initial state $x_0$ at time $k_0=0$.  For each $x^\star_k$, $u^{\star}_k$ denotes the control input generated by $\mu_k^\star(x_k^\star)$ and for each $x_k$, $u_k^\mu := \mu_k(x_k)$ the input generated by the suboptimal policy. 

To quantify the relation between $\mu$ and $\mu^\star$, we define the suboptimality gap of the policy $\mu$ as the additional incurred cost compared to the benchmark
\begin{equation}
    \mathcal{R}_T^{\mu}(x_{k_0}):= J_T(x_{k_0},{\mu}) - J_T(x_{k_0},\mu^{\star}),
    \label{eq:suboptimality_gap}
\end{equation}
given some $x_{k_0} \in \mathcal{D}^\mu$. An informative finite-time bound on it, which depends on fundamental quantities of the online control problem, can provide a quantifiable tradeoff between the effort needed to compute the suboptimal policy $\mu$ and the additional cost incurred by using it instead of $\mu^\star$.

We assume the benchmark policy $\mu^{\star}$ to have good performance, since otherwise,  $\mathcal{R}_T^{\mu}$ can be uninformative. We characterize this performance in terms of  \acrshort*{ediss}. 
\begin{definition}  
    A nonlinear dynamical system $x_{k+1} = f(k,x_k)$ is \acrshort*{ediss} in some forward invariant $\mathcal{D} \subseteq \mathbb{R}^n$, if there exist constants $c_0,c_w, r_w \in \mathbb{R}_+$ and $\rho \in (0,1)$, such that for any $(x_{k_0}, y_{k_0}) \in \mathcal{D} \times \mathcal{D}$, and $w_k \in \mathcal{B}(r_w), k \in \mathbb{N}_{\geq k_0}$, the perturbed dynamics $y_{k+1} = f(k,y_k)+w_k$ satisfy
    \begin{equation*}
        \|x_k-y_k\| \leq c_0 \rho^{{k-k_0}}\|x_{k_0}-y_{k_0}\|+c_w\sum_{i=k_0}^{k-1}\rho^{{k-i-1}}\|w_i\|,
    \end{equation*}
    for all $k\in \mathbb{N}_{\geq k_0}$, where the disturbances $w_k$ are such that $y_k \in \mathcal{D}$ for all $k\in \mathbb{N}_{\geq k_0}$. If $\mathcal{D} = \mathbb{R}^n$ the system is called globally \acrshort*{ediss}.
\end{definition}
\acrshort*{ediss} for continuous-time systems has been introduced in \cite{angeli2002lyapunov}. For an in-depth discussion and analysis of incremental stability in discrete-time, and its relation to contraction \cite{FB-CTDS} and convergent dynamics \cite{demidovich1967lectures}, we refer the interested reader to \cite{tran2018convergence}, and the references therein.  We assume the following for the benchmark policy.
\begin{assumption}
    (Benchmark Policy). Given the closed-loop system \eqref{eq:optimal_system}, the benchmark policy $\mu^*$ is such that
    \begin{enumerate}[label=\roman*.]
    \item \label{assum:Lipschitz} it is uniformly $L-$Lipschitz continuous in $\mathcal{D}^\star$, i.e. there exists a constant $L\in \mathbb{R}_+$, such that for all $(x,y) \in \mathcal{D}^\star \times \mathcal{D}^\star$, and $k\in \mathbb{N}_{\geq k_0}$ 
    \begin{equation*}
     \|\mu^\star_k(x)-\mu^\star_k(y)\| \leq L\|x-y\|,
    \end{equation*}
    \item  \label{assum:slow} there exist $a_k \geq 0, k\in \mathbb{N}_{\geq k_0}$, such that for all $x \in \mathbb{\mathcal{D}^\star}$ and $k\in \mathbb{N}_{\geq k_0}$
    \begin{equation*}
        \|\mu_{k+1}^\star(x) - \mu_{k}^\star(x)\| \leq a_k,
    \end{equation*} 
    \item \label{assum:ediss} the closed-loop dynamics \eqref{eq:optimal_system} are \acrshort{ediss} in $\mathcal{D}^\star$ with a rate $\rho \in (0,1)$.
\end{enumerate}
 \label{assum:mu_star}
\end{assumption}

 Although the Lipschitz continuity condition excludes benchmark policies with abrupt changes, such as discontinuities or jumps in the policy, it still holds for a relevant class of policies. In particular, when $\mu^\star$ is defined as a solution to an optimization problem, as is, for example, the MPC policy, then Lipschitz continuity holds\cite{dontchev2013euler} if the optimal solution mapping is strongly regular \cite{robinson1980strongly}. This can be verified for a range of problems by checking for strong second order sufficient conditions (SOSC), which is well studied in the literature, e.g. \cite{dontchev2019lipschitz, liao2020time}. Additionally, when state constraints are present, these can be relaxed by introducing soft constraints \cite{zeilinger2014soft,chatzikiriakos2024learning}, making the SOSC conditions easier to verify. Furthermore, we stress that the benchmark policy is not necessarily the optimal policy but any comparator policy as long as the conditions in Assumption \ref{assum:mu_star} are satisfied. The second assumption limits how fast the benchmark policy changes given the same state between two timesteps. Since  $\mathcal{U}$ is bounded, such an $a_k$ always exists, for all $k$, and can be set equal to the diameter of $\mathcal{U}$. However, it can also encode additional information, such as stationarity of the benchmark policy, in which case $a_k=0$ for all $k$. The \acrshort{ediss} condition, on the other hand, is often harder to verify, unless one assumes the existence of an incremental Lyapunov equation \cite{angeli2002lyapunov,tran2018convergence}, or uses contraction to infer incremental stability \cite{tsukamoto2021contraction}, which in turn assumes a smooth policy. Since a large class of benchmark policies, including MPC, are non-smooth globally, we show in Section \ref*{sec:sufficient_conditions}, that under further mild conditions on $f$, ES is enough to guarantee \acrshort{ediss}.

Next, we impose a linear convergence condition on the suboptimal policy $\mu$. 
\begin{assumption}
    \label{assum:mu_contractive}
    (Suboptimal Policy). Let $u_k^\mu= \mu_k(x_k)$ denote the suboptimal input evaluated on the suboptimal trajectory \eqref{eq:suboptimal_perturbed}. There exist $\eta_k \in [0,1), k\in \mathbb{N}_{\geq k_0}$ and some $u^\mu_{k_0-1}=\nu\in \mathcal{U}$, such that for all $x_{k_0} \in \mathcal{D}^\mu$, $k\in \mathbb{N}_{\geq k_0}$
    \begin{equation}\label{eq:linear_convergence}
        \|u^\mu_k-\mu^{\star}_{k}(x_k)\| \leq \eta_{k}\|u^\mu_{k-1}-\mu^{\star}_{k}(x_k)\|.
    \end{equation}
\end{assumption}
The assumption imposes at least a linear rate of convergence on the suboptimal input with respect to the benchmark given the suboptimal state. In some cases, the rate $\eta_k$ can be thought of as a design parameter that can be tuned to control the desired level of suboptimality depending on the available computational budget. Notably, such rates appear in iterative optimization algorithms, such as PGM \cite{taylor2018exact} or alternating direction method of multipliers (ADMM) \cite{nishihara2015general}, where the benchmark input is optimal with respect to some objective function, and the suboptimal input is obtained by the corresponding optimization method. When such methods are applied to optimization-based control, warm-started with the previous input, the exact linear convergence form of \eqref{eq:linear_convergence} appears in \cite{zanelli2020lyapunov,liao2020time,liao2021analysis} for PGM and in \cite{srikanthan2024closed} for ADMM. In Section \ref*{sec:use_cases}, we consider the PGM-based MPC setting of \cite{liao2021analysis} and show how both assumptions are verified.

We restrict our attention to systems where the controlled dynamics $g$ are Lipschitz continuous with respect to $u$, uniformly in $x$ and $k$. 
\begin{assumption}\label{assum:Lipschitz_g}
    There exists a constant $L_u \in \mathbb{R}_+$, such that for any $(u,v) \in \mathbb{R}^m \times \mathbb{R}^m$, for all $x\in \mathcal{D}^\mu$ and $k\in \mathbb{N}_{\geq k_0}$
    \begin{equation*}
        \|g(k,x,u) - g(k,x,v)\| \leq L_u \|u - v\|.
    \end{equation*}
\end{assumption}
This is satisfied, for instance, in linear time-invariant systems or control-affine nonlinear systems of the form $x_{k+1} = f(x_k) + g(x_k)u_k$, often studied in the context of feedback linearization (see \cite{khalil2002nonlinear,haddad2008nonlinear} for details). Finally, we restrict our analysis to local Lipschitz continuous stage costs. 
\begin{assumption}
    \label{assum:stage_cost}
    There exist constants $M_x,M_u \in \mathbb{R}_+$, such that for all $(x,y) \in \mathcal{D}^\mu \times \mathcal{D}^\mu$, $(u,z) \in \mathcal{U} \times \mathcal{U}$ and $k\in \mathbb{N}_{\geq k_0}$
    \begin{align*}
        \|c_k(x,u) - c_k(y,z)\| &\leq M_x \|x-y\|+M_u\|u-z\|,\\
        \|F_k(x) - F_k(y)\|\ &\leq \ M_x\|x-y\|.
    \end{align*}
\end{assumption}

\section{Suboptimality Gap Analysis} \label{sec:suboptimality}
In this section, we analyze the suboptimality gap for a given policy and show that $\mathcal{R}_T$ scales with the product of the pathlength of the suboptimal dynamics and a vector dependent on the convergence rates.  In the analysis we take $k_0=0$ without loss of generality. We define the backward difference path vector, ${\Delta}\in \mathbb{R}^{T-1}$, to be 
\begin{equation*}
    {\Delta}: = 
    \begin{bmatrix}
        \|\delta x_1\| & \|\delta x_2\| & \hdots & \|\delta x_{T-1}\|
    \end{bmatrix}^\top,
\end{equation*} 
where $\delta x_k = x_k-x_{k-1},\; k\in \mathbb{N}_+$, where $x_k$ is the state at time $k$ for the suboptimal dynamics \eqref{eq:suboptimal_perturbed}. The pathlength of the suboptimal trajectory is then defined as $\mathcal{S}_T = \|{\Delta}\|_1$ and the Euclidean pathlength as $\mathcal{S}_{T,2}:= \|{\Delta}\|$.

The policy convergence rate vector, ${\tilde{\eta}}\in \mathbb{R}^{T-1}$ is defined as
\begin{equation*}
     {\tilde{\eta}}: = 
     \begin{bmatrix}
        \tilde{\eta}_1 & \tilde{\eta}_2 & \hdots & \tilde{\eta}_{T-1}
    \end{bmatrix}^\top,
\end{equation*} 
where 
\begin{equation*}
    \tilde{\eta}_k := \sum_{i=k}^{T-1}\prod_{j=k}^i\eta_j, \qquad \forall k \in [0,T-1].
\end{equation*}
Note that $\tilde{\eta}_k = \mathcal{O}(\eta_k)$ and provides a weighting on the influence of the $\delta x_k$ on $\mathcal{R}_T$. This is analyzed further in Section \ref*{sec:analysis}. We denote the Euclidean norm of the suboptimality vector by $\bar{\eta}:= \|{\tilde{\eta}}\|$. The rate of change of the benchmark input $\mu^\star(x^\star)$ is captured by the vector $a \in \mathbb{R}_+^{T-1}$, defined as
\begin{equation*}
    a: =\begin{bmatrix}
        a_1 & a_2 & \hdots & a_{T-1}
    \end{bmatrix}^\top.
\end{equation*}

\subsection{Upper Bound}
The bound in the following theorem captures the tradeoff between suboptimality and  the additional  cost in closed-loop.
\begin{theorem}
\label{the:incurred_suboptimality}
    Let Assumptions \ref{assum:mu_star}, \ref*{assum:Lipschitz_g} and \ref*{assum:stage_cost} hold, then the suboptimality gap  of any policy, $\mu$, fulfilling Assumption \ref{assum:mu_contractive} satisfies
    \begin{equation*}
        \mathcal{R}_T^{\mu}(x_0) =\mathcal{O}\left(\left(a + {\Delta} \right)^\top \tilde{\eta}\right),
     \end{equation*}
    for all $x_0 \in \mathcal{D}^\mu$. Specifically, it is bounded by
    \begin{equation*}
        \mathcal{R}_T^\mu(x_0) \leq \bar{M}\left(\tilde{\eta}_0\|\delta u_0\|+ \left(a +  L{\Delta} \right)^\top{\tilde{\eta}} \right),
    \end{equation*}
    for all $x_0 \in \mathcal{D}^\mu$, where $\delta u_0 := \nu - \mu_0^\star(x_0)$ and $ \bar{M} := \left(M_u+\frac{c_wL_u\left(M_uL+M_x\right)}{1-\rho}\right)$.
\end{theorem}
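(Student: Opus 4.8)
The plan is to reduce the suboptimality gap to a single scalar sum $\sum_{k}e_k$, where $e_k := \|u_k^\mu-\mu_k^\star(x_k)\|$ measures how far the suboptimal input is from the benchmark policy \emph{evaluated along the suboptimal trajectory}, and then to bound that sum by unrolling the linear‑convergence recursion of Assumption~\ref{assum:mu_contractive}. Concretely, I would first run the benchmark policy from the same initial state $x_0$, producing $\boldsymbol{x}^\star$, and write $\mathcal{R}_T^\mu(x_0) = \bigl(F(T,x_T)-F(T,x_T^\star)\bigr) + \sum_{k=0}^{T-1}\bigl(c(k,x_k,u_k^\mu)-c(k,x_k^\star,u_k^\star)\bigr)$. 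Applying the local Lipschitz bounds of Assumption~\ref{assum:stage_cost} gives $\mathcal{R}_T^\mu(x_0) \le M_x\sum_{k=1}^{T}\|x_k-x_k^\star\| + M_u\sum_{k=0}^{T-1}\|u_k^\mu-u_k^\star\|$, where the $k=0$ state term drops because the two trajectories share $x_0$. Inserting $\mu_k^\star(x_k)$ and using the Lipschitz continuity of $\mu^\star$ (Assumption~\ref{assum:mu_star}.\ref{assum:Lipschitz}) yields $\|u_k^\mu-u_k^\star\|\le e_k + L\|x_k-x_k^\star\|$, so the whole gap is controlled by $\sum_k e_k$ and $\sum_k\|x_k-x_k^\star\|$.

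Next I would close the loop on $\sum_k\|x_k-x_k^\star\|$ via \acrshort*{ediss}. Writing the suboptimal trajectory in the perturbed form \eqref{eq:suboptimal_perturbed} with $\|w_i\| = \|g(i,x_i,u_i^\mu)-g(i,x_i,\mu_i^\star(x_i))\|\le L_u e_i$ by Assumption~\ref{assum:Lipschitz_g}, and using that $x_0=x_0^\star$ kills the homogeneous $c_0\rho^{k}$ term, the \acrshort*{ediss} estimate of Assumption~\ref{assum:mu_star}.\ref{assum:ediss} gives $\|x_k-x_k^\star\|\le c_w L_u\sum_{i=0}^{k-1}\rho^{k-i-1}e_i$. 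Summing over $k$, exchanging the order of summation and using the geometric series $\sum_{j\ge 0}\rho^j = (1-\rho)^{-1}$ produces $\sum_{k=1}^{T}\|x_k-x_k^\star\|\le \tfrac{c_w L_u}{1-\rho}\sum_{k=0}^{T-1}e_k$. Combining with the previous step collapses the bound to $\mathcal{R}_T^\mu(x_0)\le \bar M\sum_{k=0}^{T-1}e_k$, with exactly the stated constant $\bar M = M_u+\tfrac{c_wL_u(M_uL+M_x)}{1-\rho}$.

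It remains to bound $\sum_k e_k$, which is where the pathlength and the $\tilde{\eta}$ weights enter. From Assumption~\ref{assum:mu_contractive}, $e_k\le \eta_k\|u_{k-1}^\mu-\mu_k^\star(x_k)\|$; inserting $\mu_{k-1}^\star(x_{k-1})$ and applying the Lipschitz bound on $\mu^\star$ (cost $L\|\delta x_k\|$) together with the slow‑variation bound of Assumption~\ref{assum:mu_star}.\ref{assum:slow} (cost $a_{k-1}$) gives the recursion $e_k\le \eta_k\bigl(e_{k-1}+L\|\delta x_k\|+a_{k-1}\bigr)$, with base case $e_0\le\eta_0\|\delta u_0\|$ coming from $u_{-1}^\mu=\nu$. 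Unrolling this linear recursion, the coefficient multiplying $L\|\delta x_i\|$ (and $a_{i-1}$) inside $e_k$ is $\prod_{j=i}^k\eta_j$; summing over $k$ and exchanging summations identifies the weight $\sum_{k\ge i}\prod_{j=i}^k\eta_j = \tilde{\eta}_i$, and the base case contributes $\tilde{\eta}_0\|\delta u_0\|$. This yields $\sum_{k=0}^{T-1}e_k\le \tilde{\eta}_0\|\delta u_0\| + (a+L{\Delta})^\top\tilde{\eta}$; multiplying by $\bar M$ gives the explicit bound, and the $\mathcal{O}$‑form follows since $\tilde{\eta}_0=\mathcal{O}(\eta_0)$ and the dominant contribution scales with $(a+{\Delta})^\top\tilde{\eta}$.

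The main obstacle is the recursion step: the linear‑convergence hypothesis compares $u_k^\mu$ against $\mu_k^\star(x_k)$ but expresses the contraction in terms of $u_{k-1}^\mu$, so to telescope one must pass through $\mu_{k-1}^\star(x_{k-1})$, and this is precisely where the suboptimal step $\|\delta x_k\|$ (hence the pathlength $\mathcal{S}_T$) and the temporal drift $a_{k-1}$ of the benchmark policy enter. Keeping the indices of $e_k$, $\delta x_k$, $a_k$ and the weights $\tilde{\eta}_k$ consistent through the unrolling and the double‑sum exchanges is the delicate bookkeeping; the other structural point worth care is that \acrshort*{ediss} must be invoked with the two trajectories sharing the initial state, which is exactly what removes any dependence on the benchmark states $x_k^\star$ from the final bound.
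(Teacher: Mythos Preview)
Your proposal is correct and follows essentially the same route as the paper: the paper packages the same three steps into Lemmas~\ref{lem:delta_mu}, \ref{lem:delta_x} and \ref{lem:delta_u} (bounding $\sum_k e_k$, then $\sum_k\|x_k-x_k^\star\|$, then $\sum_k\|u_k^\mu-u_k^\star\|$) and combines them exactly as you do, arriving at the same constant $\bar M$. The only cosmetic differences are that the paper invokes the Cauchy product inequality \eqref{eq:cauchy_product} where you exchange the order of summation directly, and its recursion carries $a_k$ rather than your $a_{k-1}$ (a harmless index shift in how Assumption~\ref{assum:mu_star}.\ref{assum:slow} is applied).
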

The asymptotic variable in the $\mathcal{O}(\cdot)$ notation is the time horizon length $T$, with all terms in the product $\left(a + {\Delta} \right)^\top \tilde{\eta}$ scaling  with $T$, as  $T\rightarrow \infty$. The bound tends to zero as $\tilde{\eta}$ decreases. This is intuitive, as smaller $\tilde{\eta}$ suggests that the benchmark and suboptimal  trajectories are closer to each other. Additionally, the suboptimality gap is a relative measure, but the bound is fully decoupled from the performance of $\mu^{\star}$ and only depends on the performance of the suboptimal state evolution if $a=\boldsymbol{0}$. In this case, the bound can also be given as $\mathcal{R}_T(x_0) = \mathcal{O}(\bar{\eta}\mathcal{S}_{T,2})$, where $\mathcal{S}_{T,2}$ captures the transient behavior of the suboptimal system and is well-defined in the limit as ${T\!\rightarrow\! \infty}$ when, for example, \eqref{eq:suboptimal_perturbed} is exponentially stable. 

Before we prove Theorem \ref*{the:incurred_suboptimality}, we introduce several supporting lemmas, and the Cauchy Product inequality defined for two finite series $\{a_i\}_{i=1}^T$ and $\{b_i\}_{i=1}^T$ as follows
\begin{equation}
    \label{eq:cauchy_product}\textstyle{
    \sum_{i=0}^T\left|\sum_{j=0}^{i}a_jb_{i-j}\right| \leq  \left(\sum_{i=0}^T|a_i|\right) \left(\sum_{j=0}^T|b_j|\right)}.
\end{equation}
We let $u_k^\mu = \mu_k(x_k)$ denote the suboptimal input as in Assumption \ref{assum:mu_contractive}. 

\begin{lemma} \label{lem:delta_mu}
    Let Assumption \ref{assum:mu_star} hold, then for any policy, $\mu$, fulfilling Assumption \ref{assum:mu_contractive}, and for all $x_0 \in \mathcal{D}^\mu$
    \begin{equation}\label{eq:d_k_bound}
        \begin{split}
        \sum_{k=0}^{T-1}\|d_k\| \leq \tilde{\eta}_0\|\delta u_0\|+ \left(a +  L{\Delta} \right)^\top {\tilde{\eta}},
        \end{split}
    \end{equation}
    where we define $d_k := u_k^\mu- \mu_k^{\star}(x_k),$ and $\delta u_0:= \nu-\mu^{\star}_0(x_0)$.
\end{lemma}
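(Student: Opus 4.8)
The plan is to obtain a recursive bound on $\|d_k\|$ by inserting the benchmark input $\mu_k^\star(x_k)$ as an intermediate point, then unroll the recursion and sum over $k$, using the Cauchy product inequality \eqref{eq:cauchy_product} at the end. First I would write, for each $k$, the triangle inequality
\begin{equation*}
\|u_{k-1}^\mu - \mu_k^\star(x_k)\| \leq \|u_{k-1}^\mu - \mu_{k-1}^\star(x_{k-1})\| + \|\mu_{k-1}^\star(x_{k-1}) - \mu_k^\star(x_k)\|,
\end{equation*}
and then split the last term further by inserting $\mu_k^\star(x_{k-1})$, so that Assumption~\ref{assum:mu_star}.\ref{assum:slow} controls $\|\mu_k^\star(x_{k-1}) - \mu_{k-1}^\star(x_{k-1})\| \le a_{k-1}$ and Assumption~\ref{assum:mu_star}.\ref{assum:Lipschitz} controls $\|\mu_k^\star(x_{k-1}) - \mu_k^\star(x_k)\| \le L\|\delta x_k\|$. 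Combining this with the linear-convergence bound \eqref{eq:linear_convergence}, i.e. $\|d_k\| = \|u_k^\mu - \mu_k^\star(x_k)\| \le \eta_k \|u_{k-1}^\mu - \mu_k^\star(x_k)\|$, yields the scalar recursion
\begin{equation*}
\|d_k\| \leq \eta_k\bigl(\|d_{k-1}\| + a_{k-1} + L\|\delta x_k\|\bigr),
\end{equation*}
valid for $k \ge 1$, with the base case $\|d_0\| \le \eta_0\|\delta u_0\|$ obtained the same way from $u_{-1}^\mu = \nu$.

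Next I would unroll this recursion. Iterating gives, for each $k$,
\begin{equation*}
\|d_k\| \leq \Bigl(\prod_{j=0}^{k}\eta_j\Bigr)\|\delta u_0\| + \sum_{i=1}^{k}\Bigl(\prod_{j=i}^{k}\eta_j\Bigr)\bigl(a_{i-1} + L\|\delta x_i\|\bigr).
\end{equation*}
Summing over $k = 0,\dots,T-1$ and exchanging the order of summation, the coefficient multiplying $a_{i-1} + L\|\delta x_i\|$ becomes $\sum_{k=i}^{T-1}\prod_{j=i}^{k}\eta_j = \tilde\eta_i$, and the coefficient multiplying $\|\delta u_0\|$ becomes $\sum_{k=0}^{T-1}\prod_{j=0}^{k}\eta_j = \tilde\eta_0$. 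Re-indexing the $a$ sum ($i-1 \mapsto i$, so $a_i$ pairs with $\tilde\eta_{i+1}$; note $a_i$'s contribution can be conservatively upper bounded by aligning it with $\tilde\eta_i$ since $\tilde\eta$ is being dotted against $a$ componentwise as in the statement — I would double-check the exact index alignment against the definition $a \in \mathbb{R}^{T-1}_+$ and the vector $\tilde\eta$) gives exactly $\tilde\eta_0\|\delta u_0\| + (a + L\Delta)^\top\tilde\eta$, which is \eqref{eq:d_k_bound}.

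Alternatively, one can present the final summation step directly as an application of the Cauchy product inequality \eqref{eq:cauchy_product}: writing $\prod_{j=i}^k \eta_j$ and recognizing the double sum as a convolution-type sum in the indices, \eqref{eq:cauchy_product} bounds $\sum_k \|d_k\|$ by the product of $\sum \eta$-type terms and $\sum (a + L\|\delta x\|)$-type terms, which is how $\tilde\eta$ naturally arises. I expect the main obstacle to be purely bookkeeping: getting the index shifts exactly right so that $a_{i-1}$ lines up with the intended component of $\tilde\eta$, and making sure the telescoping/reindexing of the products $\prod_{j=i}^k\eta_j$ into $\tilde\eta_i$ is airtight (including the edge cases $k=0$ and $i=k$). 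No analytic difficulty is anticipated beyond the triangle inequality, Assumption~\ref{assum:mu_star}, Assumption~\ref{assum:mu_contractive}, and \eqref{eq:cauchy_product}.
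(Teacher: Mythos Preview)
Your approach is essentially the same as the paper's: derive the one-step recursion $\|d_k\|\le \eta_k\|d_{k-1}\|+\eta_k(a_{\cdot}+L\|\delta x_k\|)$ from Assumption~\ref{assum:mu_contractive}, the triangle inequality, and Assumption~\ref{assum:mu_star}.\ref{assum:Lipschitz}--\ref{assum:slow}, then unroll and sum. Two small remarks: (i) the paper inserts $\mu^\star_{k-1}(x_k)$ rather than your $\mu^\star_k(x_{k-1})$, which produces $a_k$ (not $a_{k-1}$) in the recursion and makes the final alignment with the vector $a=[a_1,\dots,a_{T-1}]^\top$ immediate, so the ``double-check'' you flag disappears; (ii) the Cauchy product inequality \eqref{eq:cauchy_product} is not used in this lemma---the paper just exchanges the order of summation directly to obtain $\tilde\eta_k$---so you can drop that alternative.
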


\begin{proof}
    Consider the bound on the per-step error in the input
    \begin{align*}
        \|d_k\| &\stackrel{{(a)}}{\leq} \eta_k\|u_{k-1}^\mu-\mu^{\star}_{k}(x_k)\|\\
        \begin{split}
        &\stackrel{{(b)}}{\leq} \eta_k\|u_{k-1}^\mu-\mu^{\star}_{k-1}(x_{k-1})\|\\
         &\qquad \qquad \qquad+ \eta_k\|\mu^{\star}_{k-1}(x_{k-1})- \mu^{\star}_{k}(x_k)\| 
        \end{split}\\
        \begin{split}
            &\stackrel{{(c)}}{\leq} \eta_k\|d_{k-1}\|+ \eta_k\|\mu^{\star}_{k-1}(x_{k-1}) - \mu^{\star}_{k-1}(x_{k})\|\\
             &\qquad \qquad \qquad + \eta_k\|\mu^\star_{k-1}(x_k) - \mu^\star_k(x_k)\|
            \end{split}\\
        \begin{split}
        &\stackrel{{(d)}}{\leq} \eta_k\|d_{k-1}\|+ \eta_k  L \|x_k-x_{k-1}\| + \eta_k  a_k,
        \end{split}
    \end{align*}
the inequality $(a)$ follows directly from  Assumption \ref*{assum:mu_contractive}, $(b)$ and $(c)$ follow from the triangle inequality for vector norms and $(d)$ from the uniform Lipschitz condition in Assumption \ref*{assum:mu_star}.\ref*{assum:Lipschitz} and Assumption \ref*{assum:mu_star}.\ref*{assum:slow}. Applying the above inequality recursively leads to
\begin{align*}
    \|d_k\|&\leq \|d_0\|\prod_{i=1}^k \eta_i + \sum_{j=1}^{k}\left(a_j+L\|\delta x_{j}\|\right)\prod_{i=j}^k \eta_i,
\end{align*}
for all $k\in \mathbb{N}_+$. Summing up for $k=0, \hdots T-1$
\begin{equation*}
    \begin{split}
     \sum_{k=0}^{T-1}\|d_k\|&\leq \|\delta u_0\| \sum_{k=0}^{T-1}\prod_{i=0}^k \eta_i +   \sum_{k=1}^{T-1}\sum_{j=1}^{k}\left(a_j+L\|\delta x_{j}\|\right)\prod_{i=j}^k \eta_i\\
     &= \tilde{\eta}_0\|\delta u_0\|+ \left(a +  L{\Delta} \right)^\top{\tilde{\eta}} ,
    \end{split}
 \end{equation*}
where the inequality follows from Assumption \ref*{assum:mu_contractive} and the equality from the definitions of $\tilde{\eta}$, $a$ and $\Delta$.
\end{proof}

The following lemma provides an upper bound on the norm of the finite-time trajectory mismatch.
\begin{lemma}
    \label{lem:delta_x}
     Let Assumptions \ref{assum:mu_star} and \ref*{assum:Lipschitz_g} hold, and consider any policy $\mu$ fulfilling Assumption \ref{assum:mu_contractive}. Then, there exist constants $c_0,c_W \in \mathbb{R}_+$, such that for all $(x_0, x_0^\star) \in (\mathcal{D}^\mu \times \mathcal{D}^\mu)$
     \begin{align*}
        &\sum_{k=0}^{T}\|x_k-x^{\star}_k\| \leq \|x_0 - x_0^{\star}\|\left(\frac{c_0\left(1-\rho^{T+1}\right)}{1-\rho} \right)\\
        &\qquad+ \frac{c_w L_u\left(1-\rho^T\right)}{1-\rho}\left(\tilde{\eta}_0\|\delta u_0\|+ \left(a +  L{\Delta} \right)^\top {\tilde{\eta}}\right),
     \end{align*}
     where $x_k$ and $x_k^{\star}$ denote the states at time $k$ under, respectively, the suboptimal and benchmark policies.
\end{lemma}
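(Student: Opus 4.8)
The goal is to bound $\sum_{k=0}^{T}\|x_k - x_k^\star\|$, where $x_k$ follows the suboptimal (perturbed) dynamics \eqref{eq:suboptimal_perturbed} and $x_k^\star$ follows the benchmark dynamics \eqref{eq:optimal_system}. The key observation is that both trajectories evolve under the \emph{same} map $f(k,\cdot)$, with $x_k$ additionally subject to the disturbance $w_k(x_k) = g(k,x_k,\mu_k(x_k)) - g(k,x_k,\mu_k^\star(x_k))$. Since the benchmark closed loop \eqref{eq:optimal_system} is \acrshort*{ediss} in $\mathcal{D}^\star$ (Assumption \ref{assum:mu_star}.\ref{assum:ediss}), and both trajectories remain in $\mathcal{D}^\mu \subseteq \mathcal{D}^\star$ by the forward-invariance assumptions, I can apply the \acrshort*{ediss} inequality directly with $x_{k_0} = x_0^\star$ (the nominal trajectory, zero disturbance) and $y_{k_0} = x_0$ (the perturbed trajectory). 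This yields, for each $k$,
\[
\|x_k - x_k^\star\| \leq c_0 \rho^{k}\|x_0 - x_0^\star\| + c_w \sum_{i=0}^{k-1}\rho^{k-i-1}\|w_i(x_i)\|.
\]
(One technical caveat: the \acrshort*{ediss} definition requires $w_i \in \mathcal{B}(r_w)$; I would note that forward invariance of $\mathcal{D}^\mu$ together with boundedness of $\mathcal{U}$ and Assumption \ref{assum:Lipschitz_g} makes the $w_i$ uniformly bounded, so this is a mild standing assumption — or one restricts attention to the regime where it holds.)

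\textbf{Bounding the disturbance.} Next I would bound $\|w_i(x_i)\|$. By Assumption \ref{assum:Lipschitz_g}, $\|w_i(x_i)\| = \|g(i,x_i,\mu_i(x_i)) - g(i,x_i,\mu_i^\star(x_i))\| \leq L_u \|\mu_i(x_i) - \mu_i^\star(x_i)\| = L_u\|u_i^\mu - \mu_i^\star(x_i)\| = L_u\|d_i\|$, in the notation of Lemma \ref{lem:delta_mu}. Substituting,
\[
\|x_k - x_k^\star\| \leq c_0 \rho^{k}\|x_0 - x_0^\star\| + c_w L_u \sum_{i=0}^{k-1}\rho^{k-i-1}\|d_i\|.
\]

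\textbf{Summing over $k$.} Now sum from $k=0$ to $T$. The first term gives $c_0 \|x_0-x_0^\star\|\sum_{k=0}^{T}\rho^k = c_0\|x_0-x_0^\star\|\frac{1-\rho^{T+1}}{1-\rho}$, matching the first term in the claimed bound. For the double sum, I would swap the order of summation: $\sum_{k=0}^{T}\sum_{i=0}^{k-1}\rho^{k-i-1}\|d_i\| = \sum_{i=0}^{T-1}\|d_i\|\sum_{k=i+1}^{T}\rho^{k-i-1} = \sum_{i=0}^{T-1}\|d_i\|\frac{1-\rho^{T-i}}{1-\rho} \leq \frac{1-\rho^{T}}{1-\rho}\sum_{i=0}^{T-1}\|d_i\|$, using $T-i \geq 1$ so that $1-\rho^{T-i} \leq 1-\rho^{T}$ is \emph{false} in general — I should instead just bound each $1-\rho^{T-i} \leq 1$, giving the cruder but valid $\frac{1}{1-\rho}\sum_i \|d_i\|$; to recover the exact $\frac{1-\rho^T}{1-\rho}$ factor stated, note the largest value of $1-\rho^{T-i}$ over $i \in \{0,\dots,T-1\}$ is at $i=0$, namely $1-\rho^T$, so $\sum_{k=i+1}^T \rho^{k-i-1} \leq \frac{1-\rho^T}{1-\rho}$ for every $i$, which is exactly what's needed. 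Finally, apply Lemma \ref{lem:delta_mu} to replace $\sum_{i=0}^{T-1}\|d_i\|$ by $\tilde\eta_0\|\delta u_0\| + (a + L\Delta)^\top\tilde\eta$, yielding the stated inequality with $c_W = c_w$ (the capitalization in the statement appears to be a typo for $c_w$).

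\textbf{Main obstacle.} There is no deep obstacle; the proof is essentially an application of \acrshort*{ediss} followed by a Fubini-type interchange of summation and then Lemma \ref{lem:delta_mu}. The only points requiring care are: (i) verifying the disturbances $w_i$ satisfy the $\mathcal{B}(r_w)$ bound needed to invoke \acrshort*{ediss} — this is where the boundedness of $\mathcal{U}$ and the forward-invariance of $\mathcal{D}^\mu$ are used, and it should probably be acknowledged explicitly; and (ii) getting the geometric-sum bookkeeping exactly right so the factors $\frac{1-\rho^{T+1}}{1-\rho}$ and $\frac{1-\rho^T}{1-\rho}$ come out as claimed rather than the looser $\frac{1}{1-\rho}$. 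Both are routine.
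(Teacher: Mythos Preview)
Your proposal is correct and follows essentially the same approach as the paper: invoke \acrshort*{ediss} to get the per-step bound, replace $\|w_i\|$ by $L_u\|d_i\|$ via Assumption~\ref{assum:Lipschitz_g}, sum over $k$, and finish with Lemma~\ref{lem:delta_mu}. The only cosmetic difference is that the paper bounds the double sum $\sum_{k=0}^{T-1}\sum_{i=0}^{k}\rho^{k-i}\|d_i\|$ by invoking the Cauchy product inequality~\eqref{eq:cauchy_product}, whereas you swap the order of summation directly and bound $1-\rho^{T-i}\le 1-\rho^T$ (which, incidentally, \emph{is} true for all $i\ge 0$ since $\rho\in(0,1)$, so your momentary self-doubt there was unwarranted); the two manipulations are equivalent and yield the identical factor $\frac{1-\rho^T}{1-\rho}$.
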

\begin{proof}
    Given the boundedness of $\mathcal{U}$, the uniform Lipschitz continuity of $g$ in $u$, and recalling the definition of $w_k$ from \eqref{eq:suboptimal_perturbed}, it follows that there exists a $r_w \in \mathbb{R}_+$, such that $w_k(x_k) \in \mathcal{B}(r_w), k \in \mathbb{N}$. Consider then the dynamics \eqref{eq:suboptimal_perturbed} as a perturbed version of the benchmark dynamics \eqref{eq:optimal_system}. Assumption \ref{assum:mu_star} ensures that there exist $c_0,c_w \in \mathbb{R}_+$ and $\rho \in (0,1)$, such that for all $k\in \mathbb{N}$ and $(x_0,x_0^{\star}) \in \mathcal{D}^\mu \times \mathcal{D}^\mu$
    \begin{align*}
        \|x_k-x_k^{\star}\| &\leq c_0 \rho^k\|x_0-x_0^{\star}\| +c_w \sum_{i=0}^{k-1}\rho^{k-i-1}\|w_i(x_i)\|\\
        &\leq  c_0 \rho^k\|x_0-x_0^{\star}\| +c_w L_u\sum_{i=0}^{k-1}\rho^{k-i-1}\|d_i\|, 
    \end{align*}
    where the second inequality follows from Assumption \ref*{assum:Lipschitz_g}, where $d_k := u_{k}^\mu- \mu_k^{\star}(x_k)$. Summing up over the whole trajectory and noting the resultant finite geometric series
    \begin{align*}
        &\sum_{k=0}^{T}\|x_k-x^{\star}_k\| \\
        &\; \leq\|x_0 - x_0^{\star}\|\left(\frac{c_0\left(1-\rho^{T+1}\right)}{1-\rho} \right) + c_w L_u\sum_{k=0}^{T-1}\sum_{i=0}^{k} \rho^{k-i}\|d_i\| \\
        &\; \leq  \|x_0 - x_0^{\star}\|\left(\frac{c_0\left(1-\rho^{T+1}\right)}{1-\rho} \right) + c_w L_u\sum_{k=0}^{T-1}\rho^{k}\;\sum_{i=0}^{T-1} \|d_i\|,
    \end{align*}
    where the last inequality follows from \eqref{eq:cauchy_product}. Finally, the result follows by using the bound \eqref{eq:d_k_bound} in Lemma \ref*{lem:delta_mu}.
 \end{proof}

 Similarly, the norm of the finite-time input signal mismatch due to the difference in applied inputs can be bounded in the following Lemma.
 \begin{lemma}
    \label{lem:delta_u}
     Let Assumptions \ref{assum:mu_star} and \ref*{assum:Lipschitz_g} hold,  and consider any policy $\mu$ fulfilling Assumption \ref{assum:mu_contractive}. Then, there exist constants $c_0,c_w \in \mathbb{R}_+$, such that for all $(x_0, x_0^\star) \in (\mathcal{D}^\mu \times \mathcal{D}^\mu)$ 
     \begin{align*}
        &\sum_{k=0}^{T-1}\|u_{k}^\mu-\mu_k^{\star}(x_k^{\star})\| \leq \|x_0-x_0^{\star}\|\frac{Lc_0\left(1-\rho^{T}\right)}{1-\rho} \\
        &+\left(\tilde{\eta}_0\|\delta u_0\|+ \left(a +  L{\Delta} \right)^\top {\tilde{\eta}}\right) \left(1 + \frac{Lc_w L_u\left(1-\rho^{T-1}\right)}{1-\rho}\right), 
    \end{align*}
    where $x_k$ and $x_k^{\star}$ are the states at time $k$ under, respectively, the suboptimal and benchmark policies, $\mu$ and $\mu^{\star}$.
\end{lemma}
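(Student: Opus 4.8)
The plan is to reduce the statement to the two previous lemmas by splitting the input mismatch measured at the \emph{benchmark} state $x_k^\star$ into a term already controlled by Lemma~\ref{lem:delta_mu} and a term governed by the state mismatch of Lemma~\ref{lem:delta_x}. Writing
\begin{equation*}
    u_k^\mu - \mu_k^\star(x_k^\star) = \bigl(u_k^\mu - \mu_k^\star(x_k)\bigr) + \bigl(\mu_k^\star(x_k) - \mu_k^\star(x_k^\star)\bigr),
\end{equation*}
the first summand is exactly $d_k$, and the second is bounded by $L\|x_k - x_k^\star\|$ via the uniform Lipschitz continuity in Assumption~\ref{assum:mu_star}.\ref{assum:Lipschitz}, so the triangle inequality gives $\|u_k^\mu - \mu_k^\star(x_k^\star)\| \leq \|d_k\| + L\|x_k - x_k^\star\|$. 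Summing over $k=0,\dots,T-1$, it then remains to bound $\sum_{k=0}^{T-1}\|d_k\|$ and $\sum_{k=0}^{T-1}\|x_k - x_k^\star\|$.

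For the first sum I would invoke \eqref{eq:d_k_bound} directly. For the second, rather than quote Lemma~\ref{lem:delta_x} (which sums up to $T$), I would repeat its short argument truncated at $k=T-1$: starting again from the \acrshort{ediss} estimate $\|x_k - x_k^\star\| \leq c_0\rho^k\|x_0-x_0^\star\| + c_wL_u\sum_{i=0}^{k-1}\rho^{k-i-1}\|d_i\|$ (which uses Assumption~\ref{assum:mu_star}.\ref{assum:ediss} and Assumption~\ref{assum:Lipschitz_g} exactly as in that proof, after noting that $w_k$ is bounded since $\mathcal{U}$ is), summing $k$ from $0$ to $T-1$ produces the factor $\sum_{k=0}^{T-1}\rho^k = \tfrac{1-\rho^T}{1-\rho}$ on the initial-condition term, while the double sum, after the substitution $m=k-1$ and one application of the Cauchy product inequality \eqref{eq:cauchy_product}, is bounded by $\tfrac{1-\rho^{T-1}}{1-\rho}\sum_{i=0}^{T-1}\|d_i\|$. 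Substituting \eqref{eq:d_k_bound} once more and factoring the common quantity $\tilde\eta_0\|\delta u_0\| + (a+L\Delta)^\top\tilde\eta$ out of the coefficient $1 + \tfrac{Lc_wL_u(1-\rho^{T-1})}{1-\rho}$ yields precisely the claimed bound.

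The only delicate point is the bookkeeping of the powers of $\rho$: because the outer sum now stops at $T-1$, the exponents appearing are $1-\rho^T$ and $1-\rho^{T-1}$ rather than the $1-\rho^{T+1}$ and $1-\rho^{T}$ of Lemma~\ref{lem:delta_x}, and obtaining the sharper ones requires shifting the inner summation index correctly before applying \eqref{eq:cauchy_product}. One could sidestep this entirely by bounding $\sum_{k=0}^{T-1}(\cdot)\le\sum_{k=0}^{T}(\cdot)$ and citing Lemma~\ref{lem:delta_x} verbatim, at the cost of slightly looser constants; redoing the finite summation is what gives the tighter coefficients stated here. The remaining manipulations --- triangle inequality, the Lipschitz estimate, and summing a finite geometric series --- are routine.
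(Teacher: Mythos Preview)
Your proposal is correct and follows essentially the same route as the paper: split $u_k^\mu-\mu_k^\star(x_k^\star)$ into $d_k$ plus $\mu_k^\star(x_k)-\mu_k^\star(x_k^\star)$, bound the latter by $L\|x_k-x_k^\star\|$ via Assumption~\ref{assum:mu_star}.\ref{assum:Lipschitz}, sum, and then invoke Lemmas~\ref{lem:delta_mu} and~\ref{lem:delta_x}. The paper's proof simply cites Lemma~\ref{lem:delta_x} at that last step without redoing the truncated summation, whereas you correctly observe that obtaining the exact exponents $1-\rho^{T}$ and $1-\rho^{T-1}$ (rather than the $1-\rho^{T+1}$ and $1-\rho^{T}$ of Lemma~\ref{lem:delta_x}) requires rerunning the geometric-series and Cauchy-product argument with the outer sum stopped at $T-1$; your bookkeeping here is more careful than the paper's own exposition.
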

\begin{proof}
    Using the triangle inequality for vector norms and defining $d_k := u_{k}^\mu- \mu_k^{\star}(x_k)$
    \begin{align*}
        \|u_{k}^\mu-\mu_k^{\star}(x_k^{\star})\| &\leq \|d_k\|+\|\mu_k^{\star}(x_k) - \mu_k^{\star}(x_k^{\star})\| \\
        & \leq \|d_k\| + L \|x_k-x_k^{\star}\|,
    \end{align*}
    where the last inequality follows from Lipschitz continuity of $\mu^{\star}$ from Assumption \ref*{assum:mu_star}.\ref*{assum:Lipschitz}. Summing up over the trajectory horizon, using the bound \eqref{eq:d_k_bound} from Lemma \ref*{lem:delta_mu} and the bound in Lemma \ref*{lem:delta_x} completes the proof.
\end{proof}
\begin{proof}[Proof of Theorem \ref*{the:incurred_suboptimality}]
    By Assumption \ref*{assum:stage_cost}, for all $(x_0,x_0^\star) \in \mathcal{D}^\mu \times \mathcal{D}^\mu$
    \begin{align*}
        &\mathcal{R}_T(x_0,x_0^\star): = J_T(x_0,{\mu}) - J_T(x_0^\star,\mu^{\star} )\\
        &\qquad \leq M_x\sum_{k=0}^{T} \|x_k-x_k^{\star}\| + M_u \sum_{k=0}^{T-1}\|u_{k}^\mu- \mu_k^{\star}(x_k^{\star})\|. 
    \end{align*}
    Then, using Lemmas \ref*{lem:delta_x} and \ref*{lem:delta_u} for the two respective sums
    \begin{align*}
        \mathcal{R}_T(x_0,x_0^\star) &< \|x_0-x_0^{\star}\|\frac{c_0\left(M_uL+M_x\right)}{1-\rho} \\
        &\qquad +\bar{M}\left(\tilde{\eta}_0\|\delta u_0\|+ \left(a +  L{\Delta} \right)^\top {\tilde{\eta}}\right).
    \end{align*}
    The result follows by taking $x_0 = x_0^\star$.
\end{proof}

For the special case when the convergence rate of the suboptimal policy is constant, the following Corollary follows.

\begin{corollary} \label{cor:incremental_stability_corollary}
    Let Assumptions \ref{assum:mu_star}, \ref*{assum:Lipschitz_g} and \ref*{assum:stage_cost} hold, and consider a suboptimal policy, $\mu$, fulfilling Assumption \ref{assum:mu_contractive} with $\eta_k =\eta,\; k\in \mathbb{N}$. Then, its suboptimality gap is given by
    \begin{equation*}
        \mathcal{R}_T^{\mu}(x_0) = \mathcal{O}\left(\frac{\eta}{1-\eta}  \left(\mathcal{S}_T+\|a\|_1\right)\right),
    \end{equation*}
    for all $x_0 \in \mathcal{D}^\mu$.
\end{corollary}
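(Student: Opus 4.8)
The plan is to specialize the explicit bound in Theorem~\ref{the:incurred_suboptimality} to the constant-rate case, so the only real work is evaluating the weights $\tilde{\eta}_k$ when $\eta_k\equiv\eta$. First I would compute, from the definition $\tilde{\eta}_k := \sum_{i=k}^{T-1}\prod_{j=k}^i\eta_j$ and $\prod_{j=k}^{i}\eta = \eta^{\,i-k+1}$, the closed form
\[
    \tilde{\eta}_k \;=\; \sum_{i=k}^{T-1}\eta^{\,i-k+1} \;=\; \sum_{\ell=1}^{T-k}\eta^{\ell} \;=\; \frac{\eta\left(1-\eta^{\,T-k}\right)}{1-\eta} \;\le\; \frac{\eta}{1-\eta},
\]
using $\eta\in[0,1)$. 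Hence every component of $\tilde{\eta}$, and in particular $\tilde{\eta}_0$, is bounded by the constant $\eta/(1-\eta)$, uniformly in the horizon $T$.

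Next I would substitute this into the bound $\mathcal{R}_T^\mu(x_0) \le \bar{M}\big(\tilde{\eta}_0\|\delta u_0\| + (a + L{\Delta})^{\top}\tilde{\eta}\big)$ from Theorem~\ref{the:incurred_suboptimality}. Since $a\ge 0$ and ${\Delta}\ge 0$ componentwise, bounding the inner product entrywise gives $(a + L{\Delta})^{\top}\tilde{\eta} \le \frac{\eta}{1-\eta}\,\|a + L{\Delta}\|_1 = \frac{\eta}{1-\eta}\big(\|a\|_1 + L\,\mathcal{S}_T\big)$, where I read off $\mathcal{S}_T = \|{\Delta}\|_1$ and $\|a\|_1$ from their definitions. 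Combining,
\[
    \mathcal{R}_T^\mu(x_0) \;\le\; \bar{M}\,\frac{\eta}{1-\eta}\Big(\|\delta u_0\| + \|a\|_1 + L\,\mathcal{S}_T\Big),
\]
and because $\bar{M}$, $L$, and $\|\delta u_0\| = \|\nu - \mu_0^\star(x_0)\|$ are constants that do not depend on $T$, the right-hand side is $\mathcal{O}\!\big(\tfrac{\eta}{1-\eta}(\mathcal{S}_T + \|a\|_1)\big)$, which is the claimed bound.

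There is no substantive obstacle here; the one point requiring a word of care is the absorption of the fixed term $\|\delta u_0\|$ into the $\mathcal{O}(\cdot)$. This is legitimate under the convention stated after Theorem~\ref{the:incurred_suboptimality} that $T$ is the asymptotic variable, as long as $\mathcal{S}_T + \|a\|_1$ does not vanish (which holds in any nontrivial instance, e.g.\ whenever the suboptimal trajectory is not already at rest with a matching benchmark input). If one prefers to avoid this, the displayed non-asymptotic inequality above can simply be kept as the statement, with the $\mathcal{O}$-form following immediately under the stated convention.
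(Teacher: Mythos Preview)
Your proof is correct and follows essentially the same route as the paper: compute the closed form $\tilde{\eta}_k = \eta(1-\eta^{T-k})/(1-\eta)\le \eta/(1-\eta)$, bound the inner product $(a+L\Delta)^\top\tilde{\eta}$ entrywise by $\tfrac{\eta}{1-\eta}(\|a\|_1+L\mathcal{S}_T)$, and invoke Theorem~\ref{the:incurred_suboptimality}. Your extra remark on absorbing $\|\delta u_0\|$ into the $\mathcal{O}(\cdot)$ is a nice point of care that the paper leaves implicit.
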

\begin{proof}
    If $\eta_k = \eta,\; k\in \mathbb{N}$, $\tilde{\eta}_k = \eta \left(1-\eta^{T-k}\right)/\left(1-\eta\right),$ and is  bounded by
    \begin{equation*}
        \tilde{\eta}_k\leq \frac{\eta}{1-\eta},\quad k \in [0,T].
    \end{equation*}
The complexity term then satisfies
\begin{equation*}
    \left(a+{\Delta}\right)^\top {\tilde{\eta}} \leq \frac{\eta \left(\mathcal{S}_T +\|a\|_1\right) }{1-\eta}.
\end{equation*}
The rest of the proof follows directly from Theorem \ref*{the:incurred_suboptimality} by replacing the complexity term with the new bound.
\end{proof}

For bounded $\mathcal{S}_T$, the corollary shows the effect of $\eta$ on the suboptimality gap. Crucially, the gap is $0$ in the absence of suboptimality. On the other hand, given a fixed $\eta$, one can calculate an upper bound on the suboptimality gap of the policy $\mu$ by only considering its pathlength $\mathcal{S}_T$. The independence from the pathlength of the benchmark policy is particularly useful for online suboptimality estimation since it cannot be generally accessed by the control designer.

\subsection{Interpretation of the Upper Bound} \label{sec:analysis}

The term $\tilde{\eta}_0\|\delta u_0\|$ in the bound of Theorem \ref*{the:incurred_suboptimality} captures the error due to the initial mismatch in the control input, $\delta u_0$. This term in general cannot be avoided, unless the initial ``guess" of the input $\nu$ is correct, or $\eta_0=0$, so that the suboptimal and benchmark policies match at the initial timestep.

The second term, $a^\top \tilde{\eta}$, scales with the magnitude of the rate of change of the time-varying benchmark policy $\mu^\star$, as defined in Assumption \ref*{assum:mu_star}.\ref*{assum:slow}. It vanishes either when $\mu^\star$ is stationary, or when the benchmark and suboptimal policies coincide.

\begin{figure}[ht]
    \centering
    \begin{subfigure}{.49\columnwidth}
      \centering\captionsetup{width=\linewidth}
      \includegraphics[width=\linewidth]{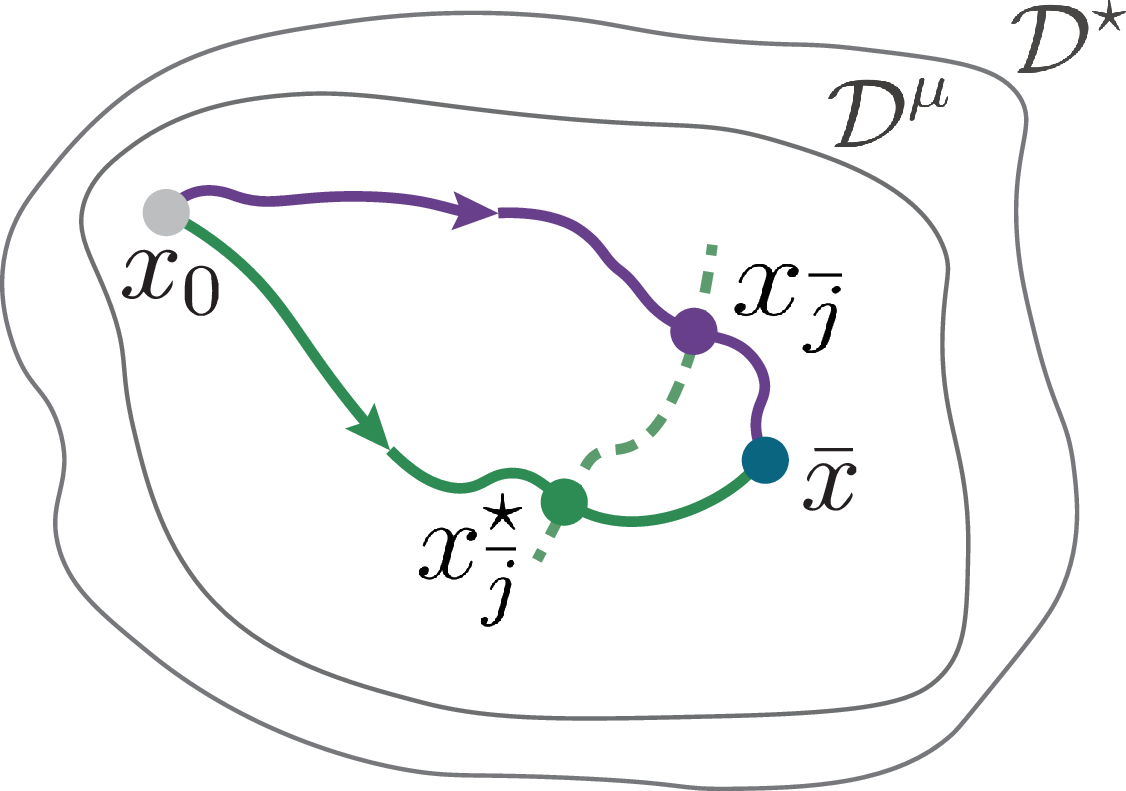}
      \caption{Exponentially stable case. Suboptimality is captured by \eqref{eq:equilibrium}.}
      \label{fig:equilibrium}
    \end{subfigure}
    \begin{subfigure}{.49\columnwidth}
        \centering\captionsetup{width=.95\linewidth}
        \includegraphics[width=\linewidth]{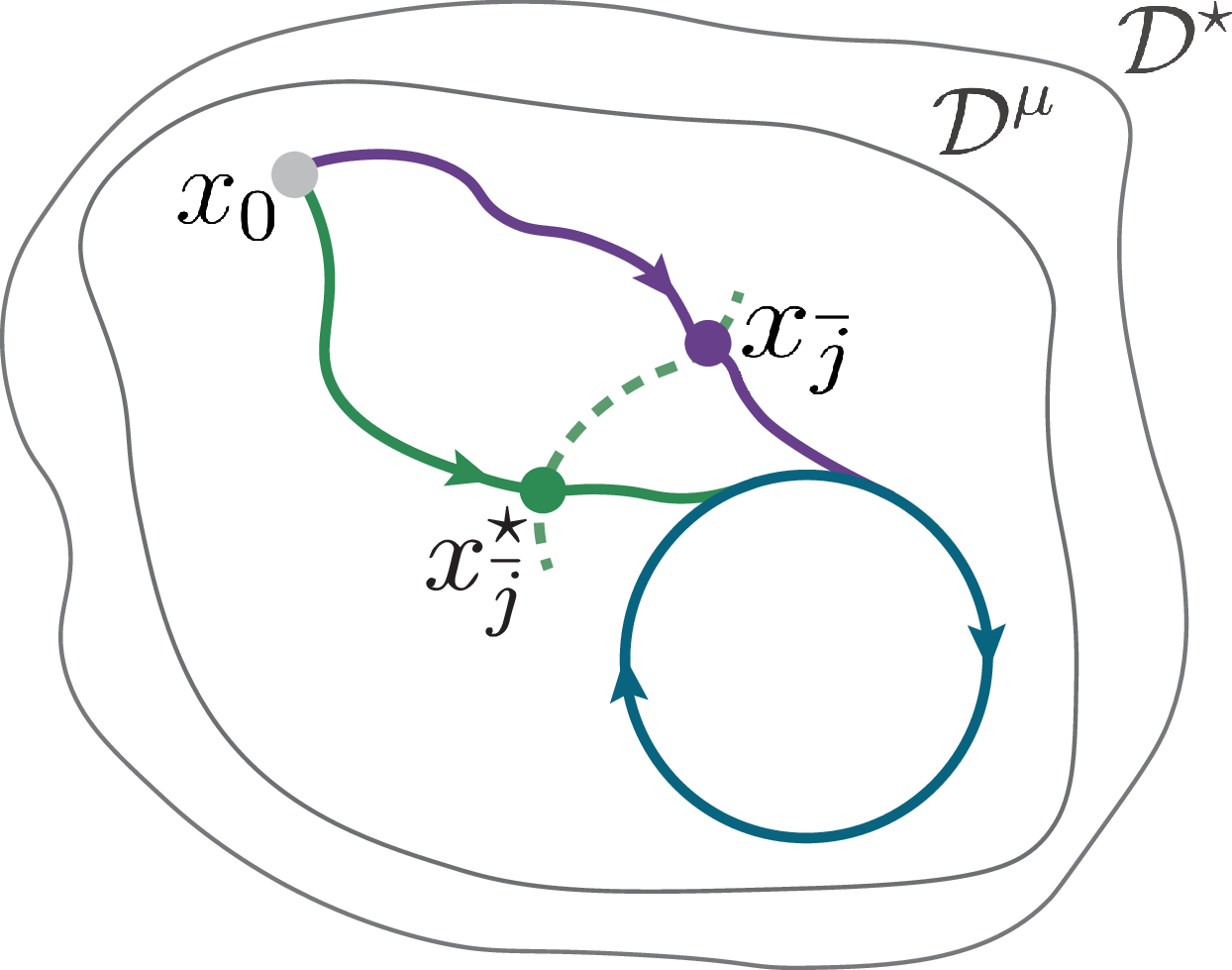}
        \caption{Case of limit cycles. Suboptimality is captured by \eqref{eq:limit_cycle}.}
        \label{fig:limit_cycle}
      \end{subfigure}%
    \caption{The pictorial evolution of suboptimal and benchmark trajectories evolving in $\mathcal{D}^\mu \subseteq \mathcal{D}^\star$.}
    \label{fig:analysis}
\end{figure}

The main complexity term of interest, $\Delta^\top \tilde{\eta}$ captures the suboptimality of the policy through the inner product of the path vector ${\Delta}$ and the suboptimality vector ${\tilde{\eta}}$. To study the interplay of these two quantities in more detail, let us consider the case when the benchmark dynamics \eqref{eq:optimal_system} have an equilibrium at some $\bar{x}\in \mathbb{R}^n$, and $k_0=0$ without loss of generality. If the suboptimal closed-loop system \eqref{eq:suboptimal_perturbed} is also exponentially stable with $\eta_k \neq 0, \; \forall k\in \mathbb{N}$, then $\|\delta x_j\|\approx 0,\; \forall j\geq \bar{j}$, for some $\bar{j} \in \mathbb{N}$, as visualised in Figure \ref*{fig:equilibrium}. Note that, in order to provide this interpretation it is assumed that the systems \eqref{eq:optimal_system} and \eqref{eq:suboptimal_perturbed} have the same single equilibrium point $\bar{x}$. In such a setting, the complexity term captures the fact that the suboptimality gap is finite as follows 
\begin{equation}
    \label{eq:equilibrium}
    \begin{split}
    &{\Delta}^\top {\tilde{\eta}}=\\
    &\left[
        \begin{array}{c c|c c c}
            \|\delta x_1\| \; \hdots \; \color{gray}\undermat{\approx \boldsymbol{0}}{\color{gray}\|\delta x_{\bar{j}}\| \; \color{gray}\hdots \; \color{gray}\|\delta x_{T-1}\|}
        \end{array}
    \right]
    \left[
        \begin{array}{c}
            \Vspace\tilde{\eta}_1\\
            \Vspace \vdots \\ 
            \hline 
            \Vspace \tilde{\eta}_{\bar{j}}\\
            \Vspace \vdots \\ 
            \Vspace \tilde{\eta}_{T-1}
        \end{array}
    \right]
    \begin{array}{@{\kern-\nulldelimiterspace}l@{}}
        \begin{array}{@{}c@{}}\Vspace\\\Vspace \end{array} \\
        \left.\begin{array}{@{}c@{}}\Vspace\\\Vspace \\ \Vspace\\\end{array}\right\}\neq 0
      \end{array}.
    \end{split}
\end{equation}
This example coincides with the suboptimal MPC use case discussed in detail in Section \ref*{sec:use_cases}. When there are multiple equilibria, a similar interpretation of the bound holds. Among other possibilities, one can also consider the case when the benchmark dynamics \eqref{eq:optimal_system} converge to a limit cycle. Since \eqref{eq:optimal_system} is \acrshort*{ediss} it follows from \eqref{eq:suboptimal_perturbed} that if at a given point in time $\bar{j}\in \mathbb{N}$, the suboptimal policy coincides with the considered benchmark, i.e. $\eta_j =0, \; \forall j\geq \bar{j}$, then the trajectories will necessarily coincide, as visualized in Figure \ref*{fig:limit_cycle}. This is captured by the complexity term as 
\begin{equation}
    \label{eq:limit_cycle}
    \begin{split}
    &{\Delta}^\top {\tilde{\eta}}=\\
    &\left[
        \begin{array}{c c|c c c}
            \|\delta x_1\| \; \hdots \; \undermat{\neq \boldsymbol{0}}{\|\delta x_{\bar{j}}\| \; \hdots \; \|\delta x_{T-1}\|}
        \end{array}
    \right]
    \left[
        \begin{array}{c}
            \Vspace\tilde{\eta}_1\\
            \Vspace \vdots \\ 
            \hline 
            \color{gray} \Vspace \tilde{\eta}_{\bar{j}}\\
            \color{gray} \Vspace \vdots \\ 
            \color{gray} \Vspace \tilde{\eta}_{T-1}
        \end{array}
    \right]
    \color{gray}
    \begin{array}{@{\kern-\nulldelimiterspace}l@{}}
        \begin{array}{@{}c@{}}\Vspace\\\Vspace \end{array} \\
        \left.\begin{array}{@{}c@{}}\Vspace\\\Vspace \\ \Vspace\\\end{array}\right\}=\boldsymbol{0}
      \end{array}\color{black}{.}
    \end{split}
\end{equation}
Even though the pathlength keeps increasing, the norm of the suboptimality vector is finite, resulting in a finite suboptimality gap, containing only the additional cost due to suboptimality at the first $\bar{j}$ timesteps.

\section{Exponentially Stable Policies and \acrshort*{ediss}} \label{sec:sufficient_conditions}

The analysis in the previous section is agnostic to the steady-state properties of \eqref{eq:optimal_system} and, therefore, has general validity. In this section, we analyze a common case when the dynamics \eqref{eq:optimal_system} have a unique equilibrium point at the origin. We derive sufficient conditions under which the exponential stability of globally  non-smooth nonlinear dynamics implies \acrshort*{ediss}, thus verifying the otherwise non-trivial Assumption \ref{assum:mu_star}.\ref{assum:ediss}.

We treat \eqref{eq:optimal_system} as a general nonlinear time-varying system of the form
\begin{equation} \label{eq:nonlinearsystem}
    x_{k+1} = f(k,x_k), \quad  k \in \mathbb{N}_{\geq k_0},
\end{equation}
where $f:\mathbb{N}_{\geq k_0} \times \mathcal{D} \rightarrow \mathcal{D}$ is continuous with respect to both arguments, $x_{k_0} = \xi \in \mathbb{R}^n$ for some $k_0 \in \mathbb{N}$, $\mathcal{D}\subset \mathbb{R}^n$. Moreover, the set $\mathcal{D}$ is assumed to be compact and to contain the origin. The solution of the system \eqref{eq:nonlinearsystem} at time $k \in \mathbb{N}_{\geq k_0}$ is characterized by the function $\phi:\mathbb{N}_{\geq k_0} \times \mathbb{N}_{\geq k_0} \times \mathcal{D}\rightarrow\mathcal{D}$ mapping the current time,  initial time and the initial state to the current state, i.e.   $\phi(k+1, k_0, \xi) = f(k,\phi(k,k_0,\xi))$ for all  $k \in \mathbb{N}_{\geq k_0}$. We consider the origin to be an equilibrium point for \eqref{eq:nonlinearsystem}, i.e. $f(k,\boldsymbol{0}) = \boldsymbol{0}$ for all $k\in \mathbb{N}_{\geq k_0}$. Although this restricts the attention to regulation problems, one can always recast a tracking problem into a regulation one by considering the nonlinear evolution of the error between the state and the reference. We impose the following assumption.
\begin{assumption} (Local Behavior). \label{assum:local_behavior}
    The dynamics $f$ in \eqref{eq:nonlinearsystem} are
    \begin{enumerate}[label=\roman*.]
        \item \label{assum:local_lipschitz}
        $L_f$-Lipschitz continuous in  $\mathcal{D} \subset \mathbb{R}^n$,  
        \item \label{assum:local_diff} continuously differentiable with respect to $x$, and the Jacobian matrix $[\partial f(k,x)/ \partial x]$ is bounded and Lipschitz continuous uniformly in $k$ within some region $\mathcal{D}_{0}\subseteq \mathcal{D}$ containing the origin.
    \end{enumerate}
\end{assumption}

For completeness,  we present a series of auxiliary definitions and theorems for (incremental) exponential stability before presenting the main results of this section.

\subsection{Preliminaries on Exponential Stability}

We define uniform exponential stability for discrete-time, nonlinear time-varying systems \cite{haddad2008nonlinear}.

\begin{definition}\label{def:exponential_stability}
    Given the system \eqref{eq:nonlinearsystem}, the equilibrium point $x=\boldsymbol{0}$ is  uniformly exponentially stable in $\mathcal{D}$ with a rate $\lambda$,  if there exist constants $d \in \mathbb{R}_+$ and $\lambda \in (0,1)$, such that for all $\xi \in \mathcal{D}$, and $k \in \mathbb{N}_{\geq k_0}$    
    \begin{equation}\label{eq:exponential_stability}
        \|\phi(k,k_0,\xi)\| \leq d\|\xi\|\lambda^{k-k_0}.    
    \end{equation}
    If $\mathcal{D}= \mathbb{R}^n$, then the equilibrium is uniformly globally exponentially stable.
\end{definition}

If the origin is exponentially stable, we also refer to the system \eqref{eq:nonlinearsystem} as such.  Lyapunov theory provides necessary and sufficient conditions for the exponential stability of nonlinear systems. Below are the discrete-time Lyapunov theorems for exponential stability.

\begin{theorem}\cite[Thm. 13.11]{haddad2008nonlinear} \label{the:direct_lyapunov}
    If there exists a continuous mapping $V: \mathbb{N}_{\geq k_0} \times \mathcal{D} \rightarrow \mathbb{R}_+$, and some constants $c_1,c_2 \in \mathbb{R}_+$, $\beta \in (0,1)$ and $p\geq 1$, such that for all $\xi \in \mathcal{D}$, and $k \in \mathbb{N}_{\geq k_0}$
    \begin{align*}
        c_1 \|\xi\|^p \leq V(k,\xi) \leq c_2 \|\xi\|^p,\\
        V\left(k+1,f(k,\xi)\right) \leq \beta^p V(k,\xi),
    \end{align*}
    then the nonlinear system \eqref{eq:nonlinearsystem} is uniformly  exponentially stable in $\mathcal{D}$, with a rate $\beta$.
\end{theorem}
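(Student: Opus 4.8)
The plan is to track the Lyapunov function along a single trajectory and convert the geometric decay of $V$ into geometric decay of the state norm via the two-sided bound. Fix $\xi \in \mathcal{D}$ and $k_0 \in \mathbb{N}$, abbreviate $x_k := \phi(k,k_0,\xi)$, and set $V_k := V(k,x_k)$. Since in \eqref{eq:nonlinearsystem} the map $f$ sends $\mathbb{N}_{\geq k_0}\times\mathcal{D}$ into $\mathcal{D}$, the entire trajectory $\{x_k\}_{k\geq k_0}$ remains in $\mathcal{D}$, so both hypotheses of the theorem are available at every time step.

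First I would apply the decrease condition recursively. Evaluated along the trajectory, $V\!\left(k+1,f(k,x_k)\right) \leq \beta^p V(k,x_k)$ reads $V_{k+1} \leq \beta^p V_k$, so a one-line induction on $k$ gives $V_k \leq \beta^{p(k-k_0)} V_{k_0}$ for all $k \in \mathbb{N}_{\geq k_0}$.

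Next I would combine this chain with the sandwich bound $c_1\|\xi\|^p \leq V(k,\xi)\leq c_2\|\xi\|^p$. The lower bound at time $k$ gives $c_1\|x_k\|^p \leq V_k$, while the recursion together with the upper bound at time $k_0$ gives $V_k \leq \beta^{p(k-k_0)} V_{k_0} \leq c_2\beta^{p(k-k_0)}\|\xi\|^p$. Chaining,
\[
    c_1\|x_k\|^p \;\leq\; c_2\,\beta^{p(k-k_0)}\|\xi\|^p ,
\]
and taking $p$-th roots (the map $t\mapsto t^{1/p}$ is increasing on $[0,\infty)$) yields $\|x_k\| \leq \left(c_2/c_1\right)^{1/p}\beta^{k-k_0}\|\xi\|$. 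Setting $d := \left(c_2/c_1\right)^{1/p} \in \mathbb{R}_+$ and recalling $\beta \in (0,1)$, this is precisely \eqref{eq:exponential_stability} with rate $\beta$, so the origin is uniformly exponentially stable in $\mathcal{D}$.

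I do not expect a genuine obstacle here; the proof is essentially bookkeeping. The only two points deserving a sentence of care are (i) that the trajectory never leaves $\mathcal{D}$, so that the Lyapunov inequalities may be invoked at each step — immediate from the codomain of $f$ in \eqref{eq:nonlinearsystem} — and (ii) carrying the exponent $p$ through the argument, which is harmless and is cleared by the final $p$-th root. Uniformity in the initial time $k_0$ is automatic, since the resulting constants $d$ and $\beta$ depend only on $c_1,c_2,\beta,p$ and not on $k_0$ or $\xi$.
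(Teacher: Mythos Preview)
Your proof is correct and is the standard textbook argument; the paper itself does not supply a proof for this theorem but cites \cite[Thm.~13.11]{haddad2008nonlinear}, whose proof follows exactly the route you take: iterate the decrease condition to get $V_k \le \beta^{p(k-k_0)}V_{k_0}$, sandwich with the two-sided norm bound, and take the $p$-th root.
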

The converse Lyapunov theorem for the discrete-time case shows the implication in the opposite direction.
\begin{theorem} \label{the:converse_lyapunov}
    If the nonlinear system \eqref{eq:nonlinearsystem} is uniformly exponentially stable in $\mathcal{D}$, then there exists a continuous function $V:\mathbb{N}_{\geq k_0} \times \mathcal{D} \rightarrow \mathbb{R}$ and constants $c_1,c_2 \in \mathbb{R}_+$ and  $\beta \in (0,1)$, such that for all $\xi \in \mathcal{D}$, and $k \in \mathbb{N}_{\geq k_0}$
    \begin{align} \label{eq:converse_lyapunov_1}
        c_1 \|\xi\|^2 \leq V(k,\xi) \leq c_2 \|\xi\|^2,\\
        \label{eq:converse_lyapunov_2}
        V\left(k+1,f(k,\xi)\right) \leq \beta^2 V(k,\xi).
    \end{align}
\end{theorem}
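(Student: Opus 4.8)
The plan is to construct the Lyapunov function explicitly as the tail energy of the trajectory, $V(k,\xi):=\sum_{j=0}^{\infty}\|\phi(k+j,k,\xi)\|^2$, and verify \eqref{eq:converse_lyapunov_1}--\eqref{eq:converse_lyapunov_2} directly. A finite-horizon truncation $\sum_{j=0}^{N-1}\|\phi(k+j,k,\xi)\|^2$, with $N$ chosen so that $d^2\lambda^{2N}<1$, works equally well and sidesteps convergence questions at the cost of a slightly less clean decrease estimate; I would present the infinite version. Throughout, recall that $\mathcal{D}$ is forward invariant and $f$ is continuous, so $\xi\mapsto\phi(k+j,k,\xi)$ is a well-defined continuous map $\mathcal{D}\to\mathcal{D}$ for every $j\in\mathbb{N}$ and every $k\in\mathbb{N}_{\geq k_0}$.

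First I would establish the sandwich bounds \eqref{eq:converse_lyapunov_1}. By Definition \ref{def:exponential_stability}, $\|\phi(k+j,k,\xi)\|\leq d\lambda^{j}\|\xi\|$, so the series is dominated termwise by the convergent geometric series $d^2\|\xi\|^2\sum_{j\geq 0}\lambda^{2j}$, giving $V(k,\xi)\leq\tfrac{d^2}{1-\lambda^2}\|\xi\|^2=:c_2\|\xi\|^2$. For the lower bound, the $j=0$ summand alone gives $V(k,\xi)\geq\|\phi(k,k,\xi)\|^2=\|\xi\|^2$, so $c_1=1$ works. Evaluating \eqref{eq:exponential_stability} at $k=k_0$ forces $d\geq 1$, hence $c_2\geq\tfrac{1}{1-\lambda^2}>1$, a fact needed below.

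Next I would prove the decrease \eqref{eq:converse_lyapunov_2} via the deterministic flow identity $\phi(k',k+1,f(k,\xi))=\phi(k',k,\xi)$ for all $k'\geq k+1$, which holds because $f(k,\xi)=\phi(k+1,k,\xi)$. Substituting and re-indexing, $V(k+1,f(k,\xi))=\sum_{j\geq 0}\|\phi(k+1+j,k,\xi)\|^2=\sum_{i\geq 1}\|\phi(k+i,k,\xi)\|^2=V(k,\xi)-\|\xi\|^2$. Combining with $\|\xi\|^2\geq V(k,\xi)/c_2$ from the upper bound yields $V(k+1,f(k,\xi))\leq(1-1/c_2)V(k,\xi)$, i.e.\ \eqref{eq:converse_lyapunov_2} with $\beta:=\sqrt{1-1/c_2}\in(0,1)$ since $c_2>1$.

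It then remains to establish continuity of $V(k,\cdot)$. On the compact set $\mathcal{D}$, set $R:=\max_{\xi\in\mathcal{D}}\|\xi\|$; then $\|\phi(k+j,k,\xi)\|^2\leq d^2R^2\lambda^{2j}$ uniformly in $\xi$, so by the Weierstrass $M$-test the series defining $V(k,\cdot)$ converges uniformly on $\mathcal{D}$. Each partial sum is continuous (a finite composition of continuous maps), hence so is the uniform limit $V(k,\cdot)$; also $V\geq 0$. I expect this continuity step to be the only one requiring genuine care — everything else is geometric-series bookkeeping and the flow identity — and it is exactly here that compactness of $\mathcal{D}$ enters, to supply a $\xi$-independent summable majorant.
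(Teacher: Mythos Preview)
Your proof is correct and takes essentially the same approach as the paper: in the appendix (proof of Lemma~\ref{lem:delta_V_bound}) the paper constructs exactly the finite-horizon variant $V(k_0,\xi)=\sum_{k=0}^{N-1}\|\phi(k+k_0,k_0,\xi)\|^2$ with $N$ chosen so that $d^2\lambda^{2N}<1$---the alternative you already mention---obtaining $c_1=1$, $c_2=d^2/(1-\lambda^2)$, and $\beta^2=1-(1-d^2\lambda^{2N})/c_2$, which tends to your $1-1/c_2$ as $N\to\infty$. The only substantive difference is that the finite sum is trivially continuous, whereas your infinite version needs the Weierstrass $M$-test step you supply via compactness of $\mathcal{D}$.
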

{The proof for Theorem \ref{the:converse_lyapunov} follows directly from \cite[Thm 2]{jiang2002converse}, by adapting the global result to the case with forward invariant subspace $\mathcal{D}$. The above theorems generalize to uniform global exponential stability if $\mathcal{D} = \mathbb{R}^n$ \cite{khalil2002nonlinear,jiang2002converse}. In continuous-time, the rate of change of the Lyapunov function with respect to the state is bounded by the norm of the state  \cite[Thm 4.14]{khalil2002nonlinear}. The following Lemma is the discrete-time equivalent of this.
\begin{lemma} \label{lem:delta_V_bound}
    Let Assumption \ref*{assum:local_behavior}.\ref*{assum:local_lipschitz} hold, then if the system \eqref{eq:nonlinearsystem} is  uniformly exponentially stable in $\mathcal{D}$, there exists a constant $c_3\in \mathbb{R}_+$ and a continuous Lyapunov function $V:\mathbb{N}_{\geq k_0}\times \mathcal{D} \rightarrow \mathbb{R}$, that, in addition to \eqref{eq:converse_lyapunov_1} and \eqref{eq:converse_lyapunov_2}, for all $ (\xi,\zeta) \in \mathcal{D} \times \mathcal{D}$, and $k \in \mathbb{N}_{\geq k_0}$ also satisfies 
    \begin{equation*}
        |V(k,\xi)-V(k,\zeta)| \leq c_3 \|\xi-\zeta\|\left(\|\xi\|+\|\zeta\|\right).
    \end{equation*}
\end{lemma}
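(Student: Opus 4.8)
The plan is to construct an \emph{explicit} Lyapunov function as a \emph{finite}-horizon sum of squared trajectory norms and to verify the three claimed properties by direct estimates. Concretely, fix a horizon length $N\in\mathbb{N}_+$ (to be pinned down later) and set
\[
   V(k,\xi):=\sum_{j=0}^{N-1}\big\|\phi(k+j,k,\xi)\big\|^2,\qquad (k,\xi)\in\mathbb{N}_{\geq k_0}\times\mathcal{D}.
\]
Since $f$, and hence each map $\xi\mapsto\phi(k+j,k,\xi)$, is continuous, and $\mathcal{D}$ is forward invariant so that $\phi(k+j,k,\xi)\in\mathcal{D}$ for every $j$, this $V$ is well defined and continuous. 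Note this construction simultaneously re-proves Theorem~\ref{the:converse_lyapunov} with a $V$ of known form, which is what we need in order to read off the extra inequality.

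First I would establish the converse Lyapunov bounds \eqref{eq:converse_lyapunov_1}--\eqref{eq:converse_lyapunov_2}. The lower bound $V(k,\xi)\geq\|\xi\|^2$ is just the $j=0$ term, while uniform exponential stability \eqref{eq:exponential_stability} together with summing the geometric series gives $V(k,\xi)\leq\big(\sum_{j\geq0}d^2\lambda^{2j}\big)\|\xi\|^2=\tfrac{d^2}{1-\lambda^2}\|\xi\|^2=:c_2\|\xi\|^2$, so $c_1=1$. For the decrease, iterating the recursion defining $\phi$ yields $\phi(k+1+j,k+1,f(k,\xi))=\phi(k+1+j,k,\xi)$, so the sum telescopes:
\[
   V\big(k+1,f(k,\xi)\big)=V(k,\xi)-\|\xi\|^2+\big\|\phi(k+N,k,\xi)\big\|^2\leq V(k,\xi)-\big(1-d^2\lambda^{2N}\big)\|\xi\|^2 .
\]
Choosing $N$ large enough that $d^2\lambda^{2N}<1$ (possible since $\lambda\in(0,1)$) and using $\|\xi\|^2\geq V(k,\xi)/c_2$ gives \eqref{eq:converse_lyapunov_2} with $\beta^2:=1-(1-d^2\lambda^{2N})/c_2\in(0,1)$.

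Then I would prove the incremental inequality, which is the heart of the lemma. Applying $\big|\|a\|^2-\|b\|^2\big|\leq\|a-b\|\,(\|a\|+\|b\|)$ termwise gives
\[
   |V(k,\xi)-V(k,\zeta)|\leq\sum_{j=0}^{N-1}\big\|\phi(k+j,k,\xi)-\phi(k+j,k,\zeta)\big\|\Big(\big\|\phi(k+j,k,\xi)\big\|+\big\|\phi(k+j,k,\zeta)\big\|\Big).
\]
Iterating the $L_f$-Lipschitz bound of Assumption~\ref{assum:local_behavior}.\ref{assum:local_lipschitz} (valid because both trajectories stay in $\mathcal{D}$) bounds the first factor by $L_f^{\,j}\|\xi-\zeta\|$, and \eqref{eq:exponential_stability} bounds the second by $d\lambda^{j}(\|\xi\|+\|\zeta\|)$; summing the resulting finite series yields the claim with $c_3:=d\sum_{j=0}^{N-1}(L_f\lambda)^{j}$.

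The main obstacle, and the reason for the finite horizon, is precisely this last estimate: with the textbook infinite-horizon choice $\sum_{j\geq0}\|\phi(k+j,k,\xi)\|^2$ the incremental bound would produce $\sum_{j\geq0}(L_f\lambda)^{j}$, which diverges whenever $L_f\lambda\geq1$, i.e. when the Lipschitz amplification outpaces the exponential contraction. Truncating at a finite $N$ — which the decrease condition already forces, since it only needs $d^2\lambda^{2N}<1$ — keeps $c_3$ finite for \emph{any} $L_f$. The remaining points (continuity of $V$, uniformity of $c_1,c_2,c_3,\beta$ in $k$ and $k_0$, and applicability of the estimates along the whole trajectory via forward invariance of $\mathcal{D}$) I expect to be routine.
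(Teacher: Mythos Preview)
Your proposal is correct and follows essentially the same approach as the paper: both construct the finite-horizon Lyapunov function $V(k,\xi)=\sum_{j=0}^{N-1}\|\phi(k+j,k,\xi)\|^2$, verify \eqref{eq:converse_lyapunov_1}--\eqref{eq:converse_lyapunov_2} with the same constants $c_1=1$, $c_2=d^2/(1-\lambda^2)$, $\beta^2=1-(1-d^2\lambda^{2N})/c_2$, and derive the incremental bound via $|\|a\|^2-\|b\|^2|\leq\|a-b\|(\|a\|+\|b\|)$ combined with the Lipschitz and exponential-stability estimates, arriving at $c_3=d\sum_{j=0}^{N-1}(L_f\lambda)^j$. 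Your additional remark explaining why the \emph{finite} horizon is essential (the infinite series $\sum_{j\geq0}(L_f\lambda)^j$ can diverge) is a nice clarification that the paper leaves implicit.
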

The proof of Lemma \ref*{lem:delta_V_bound} is provided in the appendix. 

\subsection{Preliminaries on Exponential Incremental Stability}

Exponential incremental stability shows the exponential convergence of two trajectories generated by the same system to each other \cite{tran2018convergence,tsukamoto2021contraction}. 

\begin{definition}\label{def:incremental_stability}
 The system \eqref{eq:nonlinearsystem} is uniformly   exponentially incrementally stable in  $\mathcal{D}$ if there exist constants $d\in \mathbb{R}_+$ and $\lambda\in(0,1)$, such that for all  $(\xi,\zeta) \in \mathcal{D} \times \mathcal{D}$ and $k \in \mathbb{N}_{\geq k_0}$
 \begin{equation*}
    \|\phi(k,k_0,\xi)-\phi(k,k_0,\zeta)\| \leq d\|\xi-\zeta\|\lambda^{k-k_0}.
 \end{equation*}
 If this holds only for a subset of initial conditions in $\mathcal{D}\times \mathcal{D}$ the system is uniformly locally exponentially incrementally stable in this subset. If $\mathcal{D} = \mathbb{R}^n$ the definition is global.
\end{definition} 

The theory of exponential convergence of two trajectories of the same system has first been studied as \emph{uniform convergence} by Demidovich \cite{demidovich1967lectures,pavlov2004convergent}, and later extended through contraction theory \cite{lohmiller1998contraction, tsukamoto2021contraction}. Contraction is a sufficient condition for exponential incremental stability, that when $f$ is smooth, can be checked by the following condition.
\begin{theorem} 
    \label{the:demidovich}
    Let Assumption \ref*{assum:local_behavior}.\ref*{assum:local_diff} hold, and suppose that there exists a sequence of positive-definite, bounded matrices $P(k) \in \mathbb{R}^{n\times n}$, $k\in \mathbb{N}_{\geq k_0}$, and some $\rho\in(0,1)$ such that
    \begin{equation}
        \label{eq:demidovich}
        D(k,x): = \frac{\partial f}{\partial x}(k,x)^T P(k+1) \frac{\partial f}{\partial x}(k,x) -\rho^2 P(k) \prec 0
    \end{equation}
    for all $x \in \mathcal{\tilde{D}}_0\subseteq \mathcal{D}_0$. Then, there exists a constant $r_{\mathcal{\tilde{D}}_0}\in \mathbb{R}_+$, such that \eqref{eq:nonlinearsystem} is uniformly locally exponentially incrementally stable in $ \mathcal{B}(r_{\mathcal{\tilde{D}}_0})\subseteq \mathcal{\tilde{D}}_0$ with a rate $\rho$.
\end{theorem}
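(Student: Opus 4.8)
The plan is to construct an incremental Lyapunov function from the matrix sequence $P(k)$ and show it contracts along the difference of any two trajectories that remain in a suitable ball. Define $V_\delta(k,x,y) := (x-y)^\top P(k)(x-y)$. The goal is to show that, restricted to a ball $\mathcal{B}(r_{\tilde{\mathcal{D}}_0})$ on which the linearization dominates the remainder, one has $V_\delta(k+1,f(k,x),f(k,y)) \leq \tilde\rho^2\, V_\delta(k,x,y)$ for some $\tilde\rho\in(\rho,1)$, which by the uniform bounds $\underline p\,\|z\|^2 \leq z^\top P(k) z \leq \overline p\,\|z\|^2$ (available since $P(k)$ is positive-definite and bounded uniformly in $k$) yields the exponential incremental estimate with rate $\tilde\rho$ and constant $d = \sqrt{\overline p/\underline p}$.

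The key steps, in order. First, I would write the exact increment using the mean value theorem / fundamental theorem of calculus along the segment joining $x$ and $y$: $f(k,x) - f(k,y) = A(k,x,y)(x-y)$ where $A(k,x,y) := \int_0^1 \frac{\partial f}{\partial x}(k, y + s(x-y))\,ds$; this is valid and gives a matrix that is close to $\frac{\partial f}{\partial x}(k,\cdot)$ when $x,y$ are both near the origin. Second, using the Lipschitz continuity of the Jacobian from Assumption \ref{assum:local_behavior}.\ref{assum:local_diff}, I would bound $\|A(k,x,y) - \frac{\partial f}{\partial x}(k,0)\|$ — or more directly, exploit continuity of $x \mapsto D(k,x)$ together with the strict inequality $D(k,x) \prec 0$ on the compact-closure region to get a uniform margin: there exists $\epsilon>0$ with $\frac{\partial f}{\partial x}(k,x)^\top P(k+1)\frac{\partial f}{\partial x}(k,x) - \rho^2 P(k) \preceq -\epsilon I$ for all $x$ in a slightly shrunk region and all $k$. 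Third, I would expand $V_\delta(k+1,f(k,x),f(k,y)) = (x-y)^\top A^\top P(k+1) A (x-y)$ and compare $A^\top P(k+1) A$ with $\frac{\partial f}{\partial x}^\top P(k+1)\frac{\partial f}{\partial x}$, absorbing the difference — which is $O(\|x\| + \|y\|)$ times $\|x-y\|^2$ by Jacobian-Lipschitzness and boundedness of $P$ — into the negative margin $-\epsilon I$. This forces a radius restriction: on $\mathcal{B}(r)$ with $r = r_{\tilde{\mathcal{D}}_0}$ small enough, the perturbation is at most $\tfrac{\epsilon}{2}\|x-y\|^2$, giving $V_\delta(k+1,\cdot,\cdot) \leq (\rho^2 P(k) - \tfrac{\epsilon}{2}I)[\,\cdot\,] \leq \tilde\rho^2 V_\delta(k,x,y)$ with $\tilde\rho^2 = \rho^2 + \tfrac{\epsilon}{2\underline p} < 1$ after possibly shrinking further. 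Fourth, I would check forward invariance: since $\rho < 1$ and the estimate already shows contraction toward the origin for a single trajectory (taking $y=0$, using $f(k,0)=0$), trajectories starting in a sufficiently small sub-ball stay in $\mathcal{B}(r_{\tilde{\mathcal{D}}_0})$, so the one-step bound can be iterated to yield $\|\phi(k,k_0,\xi) - \phi(k,k_0,\zeta)\| \leq d\,\tilde\rho^{\,k-k_0}\|\xi-\zeta\|$.

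The main obstacle is the third step: cleanly controlling the gap between the secant matrix $A(k,x,y)$ and the Jacobian at a reference point, and showing that the resulting error term is genuinely higher-order (quadratic in $\|x-y\|$ with a coefficient vanishing as $x,y\to 0$) so that it can be dominated by the strict negative-definiteness margin of $D(k,x)$ only after restricting to a small enough ball — and doing all of this uniformly in $k$, which is exactly why Assumption \ref{assum:local_behavior}.\ref{assum:local_diff} demands uniform boundedness and uniform Lipschitzness of the Jacobian. A secondary subtlety is that the statement only assumes $D(k,x) \prec 0$ pointwise on $\tilde{\mathcal{D}}_0$, not a uniform spectral bound, so I would need a compactness argument (continuity of $D$ in $x$, boundedness in $k$) to extract the uniform margin $\epsilon$, possibly at the cost of passing to the interior of a slightly smaller set — which is consistent with the theorem's conclusion being stated only on some ball $\mathcal{B}(r_{\tilde{\mathcal{D}}_0})$ rather than all of $\tilde{\mathcal{D}}_0$.
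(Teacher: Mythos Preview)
Your approach is sound but differs from the paper's route. The paper argues via contraction theory in the Riemannian sense: it works with the infinitesimal (virtual) displacement $\delta x$, takes the path length $V_\ell(\delta x_k,k)=\int_0^{x_k}\|\delta x\|_{P(k)}$ as the contraction certificate, applies \eqref{eq:demidovich} pointwise along the path, and invokes standard contraction results (the Tsukamoto and Tran references) for the one-step estimate $V_\ell(\delta x_{k+1},k+1)\le\rho\,V_\ell(\delta x_k,k)$; forward invariance is handled through the $P(k)$-ellipsoids $\mathcal{E}_0(k,r)=\{x:\|x\|_{P(k)}\le r\}$, setting $r_{\tilde{\mathcal{D}}_0}=r^\star/\bar P$ where $r^\star$ is the largest radius making $\mathcal{E}_0(k,r)\subseteq\tilde{\mathcal{D}}_0$ for all $k$. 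Your finite-increment route (quadratic $V_\delta$, secant matrix via the mean-value integral, perturbation absorption) is more elementary and fully self-contained, which is a genuine advantage since the paper offloads the core contraction estimate to external references. Two remarks. First, your rate computation has a slip: from $V_\delta(k+1)\le (x-y)^\top\bigl(\rho^2 P(k)-\tfrac{\epsilon}{2}I\bigr)(x-y)$ you already obtain $V_\delta(k+1)\le\rho^2 V_\delta(k)$, so the rate is exactly $\rho$ as the statement requires, and the weaker $\tilde\rho$ is unnecessary. Second, your ``main obstacle'' admits a cleaner bypass: for each fixed $v$ the map $M\mapsto\|P(k+1)^{1/2}Mv\|^2$ is convex, so Jensen gives $A(k,x,y)^\top P(k+1)A(k,x,y)\preceq\int_0^1 J(s)^\top P(k+1)J(s)\,ds\prec\rho^2 P(k)$ whenever the segment $[y,x]$ lies in $\tilde{\mathcal{D}}_0$, which eliminates both the comparison to the origin Jacobian and the uniform-margin extraction---this is precisely the finite-increment analogue of the paper's path-integral argument.
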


We also state the following converse Lyapunov theorem.
\begin{theorem}
    \label{the:converse_edelta_s}
    If the system \eqref{eq:nonlinearsystem} is uniformly  exponentially incrementally stable in $\mathcal{D}$, then there exists a function $V:  \mathbb{N}_{\geq k_0} \times \mathcal{D}\times \mathcal{D}\rightarrow \mathbb{R}_+$ and constants $c_1,c_2\in \mathbb{R}_+$ and $\beta \in (0,1)$, such that for all $(\xi,\zeta) \in \mathcal{D} \times \mathcal{D}$ and $k \in \mathbb{N}_{\geq k_0}$
    \begin{align}
        \label{eq:incremental_lyapunov_1}
        c_1 \|\xi-\zeta\|^2 \leq V(k,\xi,\zeta) \leq c_2 \|\xi-\zeta\|^2, \\
        \label{eq:incremental_lyapunov_2}
        V\left(k+1,f(k,\xi),f(k,\zeta)\right) \leq \beta^2 V(k,\xi,\zeta).
    \end{align}
    Moreover, under Assumption \ref*{assum:local_behavior}.\ref*{assum:local_lipschitz}, there exists a constant $c_3 \in \mathbb{R}_+$, such that for all $(\xi,\zeta,\tilde{\xi},\tilde{\zeta}) \in \mathcal{D}\times \mathcal{D}\times \mathcal{D}\times \mathcal{D}$ and $k \in \mathbb{N}_{\geq k_0}$
    \begin{equation}
    \label{eq:lyapunov_difference_incremental}
    \begin{split}
        &|V(k,\xi,\zeta)-V(k,\tilde{\xi},\tilde{\zeta})|\leq \\
        &\leq c_3 \left(\|\xi-\tilde{\xi}\| + \|\zeta-\tilde{\zeta}\|\right)\left(\|\xi - \zeta\|+\|\tilde{\xi}-\tilde{\zeta}\|\right).
    \end{split}
    \end{equation}
\end{theorem}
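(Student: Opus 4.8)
The plan is to adapt the finite-horizon construction of converse Lyapunov functions for exponential stability to \emph{pairs} of trajectories. Let $d\in\mathbb{R}_+$ (which we may take $\ge 1$ without loss of generality) and $\lambda\in(0,1)$ be the constants of Definition~\ref{def:incremental_stability}, fix an integer $N\in\mathbb{N}_+$ large enough that $d^2\lambda^{2N}<1$, and set
\begin{equation*}
    V(k,\xi,\zeta):=\sum_{j=0}^{N-1}\|\phi(k+j,k,\xi)-\phi(k+j,k,\zeta)\|^2 .
\end{equation*}
Since $f$ is continuous, each flow map $\phi(k+j,k,\cdot)$ is continuous, so $V$ is a finite sum of continuous nonnegative functions and is therefore continuous and nonnegative.

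First I would verify \eqref{eq:incremental_lyapunov_1}: the $j=0$ summand equals $\|\xi-\zeta\|^2$, giving the lower bound with $c_1=1$, and applying Definition~\ref{def:incremental_stability} to each summand and summing the geometric series gives the upper bound with $c_2:=d^2/(1-\lambda^2)>1$. For the decrease \eqref{eq:incremental_lyapunov_2}, put $\xi'=f(k,\xi)$ and $\zeta'=f(k,\zeta)$; the cocycle identity $\phi(k+1+j,k+1,f(k,x))=\phi(k+1+j,k,x)$ makes the sum telescope, so that $V(k+1,\xi',\zeta')=V(k,\xi,\zeta)-\|\xi-\zeta\|^2+\|\phi(k+N,k,\xi)-\phi(k+N,k,\zeta)\|^2$. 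Bounding the last term by $d^2\lambda^{2N}\|\xi-\zeta\|^2$ via Definition~\ref{def:incremental_stability} and then using $\|\xi-\zeta\|^2\ge V(k,\xi,\zeta)/c_2$ gives $V(k+1,\xi',\zeta')\le\beta^2 V(k,\xi,\zeta)$ with $\beta^2:=1-(1-d^2\lambda^{2N})/c_2$, which lies in $(0,1)$ precisely because $N$ makes $1-d^2\lambda^{2N}>0$ and $c_2>1$.

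For the incremental Lipschitz-type estimate \eqref{eq:lyapunov_difference_incremental}, write $a_j:=\phi(k+j,k,\xi)-\phi(k+j,k,\zeta)$ and $b_j:=\phi(k+j,k,\tilde\xi)-\phi(k+j,k,\tilde\zeta)$, so that $V(k,\xi,\zeta)-V(k,\tilde\xi,\tilde\zeta)=\sum_{j=0}^{N-1}(\|a_j\|^2-\|b_j\|^2)$, and bound each term by $\|a_j-b_j\|\,(\|a_j\|+\|b_j\|)$. Definition~\ref{def:incremental_stability} gives $\|a_j\|+\|b_j\|\le d\lambda^j(\|\xi-\zeta\|+\|\tilde\xi-\tilde\zeta\|)$, while iterating the $L_f$-Lipschitz bound of Assumption~\ref{assum:local_behavior}.\ref{assum:local_lipschitz} $j$ times — legitimate since $\mathcal{D}$ is forward invariant, so all iterates stay in $\mathcal{D}$ — gives $\|a_j-b_j\|\le L_f^j(\|\xi-\tilde\xi\|+\|\zeta-\tilde\zeta\|)$. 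Summing over $j=0,\dots,N-1$ yields \eqref{eq:lyapunov_difference_incremental} with $c_3:=d\sum_{j=0}^{N-1}(L_f\lambda)^j$.

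The one point that needs care — and the reason a finite horizon is used rather than the infinite sum that would suffice for \eqref{eq:incremental_lyapunov_1}--\eqref{eq:incremental_lyapunov_2} on their own — is that the iterated Lipschitz bound grows like $L_f^j$, so an infinite sum over $j$ would converge only under the unassumed condition $L_f\lambda<1$. Fixing $N$ once (its only role being to force $1-d^2\lambda^{2N}>0$) keeps $\sum_{j=0}^{N-1}(L_f\lambda)^j$ finite whatever the value of $L_f\lambda$, so the decrease property of the first part and the incremental Lipschitz estimate of the second hold simultaneously; the remaining manipulations are routine.
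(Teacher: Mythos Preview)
Your proof is correct and follows essentially the same construction as the paper: a finite-horizon sum $V(k,\xi,\zeta)=\sum_{j=0}^{N-1}\|\phi(k+j,k,\xi)-\phi(k+j,k,\zeta)\|^2$, with the same telescoping argument for the decrease and the same factorization $|\|a_j\|^2-\|b_j\|^2|\le\|a_j-b_j\|(\|a_j\|+\|b_j\|)$ for the incremental estimate, yielding the same constant $c_3=d\sum_{j}(L_f\lambda)^j$ up to a factor of $\sqrt{2}$. The only difference is presentational: the paper routes the argument through the augmented system $z=[\xi^\top,\zeta^\top]^\top$ and the distance $|z|_\Delta=\tfrac{\sqrt{2}}{2}\|\xi-\zeta\|$ to the diagonal set $\Delta$, invoking an auxiliary converse theorem for set-exponential stability, whereas you work directly with pairs of trajectories; since $|z|_\Delta^2=\tfrac12\|\xi-\zeta\|^2$, the two Lyapunov functions coincide up to the constant $\tfrac12$, and your route is simply the more direct unpacking of the same construction.
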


The proofs of Theorem \ref{the:demidovich}, adapted from \cite{tsukamoto2021contraction}, and Theorem \ref{the:converse_edelta_s}, extended from \cite{angeli2002lyapunov} and \cite{tran2018convergence}, are provided in the appendix. 

\subsection{Main Results}

In this subsection, we show that if the nonlinear dynamics \eqref{eq:nonlinearsystem} are uniformly exponentially stable in $\mathcal{D}$ and satisfy Assumption \ref*{assum:local_behavior}, then they are also \acrshort*{ediss} in the same region. First, we show that under the local Lipschitz continuity assumption, exponential incremental stability implies \acrshort*{ediss}.
\begin{theorem}
    \label{prop:eis-eiiss}
    Let Assumption \ref*{assum:local_behavior} hold, then if the nonlinear system \eqref{eq:nonlinearsystem} is uniformly exponentially incrementally stable in $\mathcal{D}$, it is \acrshort*{ediss} in the same region.
\end{theorem}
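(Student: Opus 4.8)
The plan is to use the converse Lyapunov theorem for incremental stability (Theorem \ref{the:converse_edelta_s}) to extract an incremental Lyapunov function $V(k,\xi,\zeta)$ satisfying \eqref{eq:incremental_lyapunov_1}, \eqref{eq:incremental_lyapunov_2}, and the difference bound \eqref{eq:lyapunov_difference_incremental}, and then run a standard perturbation argument on $V$ along the two trajectories $x_{k+1}=f(k,x_k)$ and $y_{k+1}=f(k,y_k)+w_k$. First I would fix $(x_{k_0},y_{k_0})\in\mathcal{D}\times\mathcal{D}$ and a disturbance sequence $w_k\in\mathcal{B}(r_w)$ with $y_k\in\mathcal{D}$ for all $k$, as in the \acrshort*{ediss} definition. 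Writing $\tilde y_{k+1} := f(k,y_k)$ for the nominal one-step image, the key estimate is
\begin{align*}
    V(k+1,x_{k+1},y_{k+1}) &\leq V(k+1,f(k,x_k),f(k,y_k)) \\
    &\quad + \left|V(k+1,f(k,x_k),f(k,y_k)+w_k) - V(k+1,f(k,x_k),f(k,y_k))\right|,
\end{align*}
where the first term is bounded by $\beta^2 V(k,x_k,y_k)$ via \eqref{eq:incremental_lyapunov_2}, and the second term is bounded using \eqref{eq:lyapunov_difference_incremental} with $(\tilde\xi,\tilde\zeta)=(f(k,x_k),f(k,y_k))$, $(\xi,\zeta)=(f(k,x_k),f(k,y_k)+w_k)$, giving $c_3\|w_k\|\left(\|f(k,x_k)-f(k,y_k)\|+\|f(k,x_k)-f(k,y_k)-w_k\|\right)$.

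Next I would control the factor multiplying $\|w_k\|$. Using Assumption \ref{assum:local_behavior}.\ref{assum:local_lipschitz} ($L_f$-Lipschitz continuity of $f$ on $\mathcal{D}$) we have $\|f(k,x_k)-f(k,y_k)\|\leq L_f\|x_k-y_k\|$, and hence the whole factor is bounded by $2L_f\|x_k-y_k\|+\|w_k\|$. Combining the pieces yields a scalar recursion of the form
\begin{equation*}
    V(k+1,x_{k+1},y_{k+1}) \leq \beta^2 V(k,x_k,y_k) + 2c_3 L_f \|w_k\|\,\|x_k-y_k\| + c_3\|w_k\|^2 .
\end{equation*}
Now I substitute $\|x_k-y_k\|^2 \asymp V(k,x_k,y_k)$ via \eqref{eq:incremental_lyapunov_1}; setting $s_k:=\sqrt{V(k,x_k,y_k)}$ this is essentially $s_{k+1}^2\leq \beta^2 s_k^2 + \alpha_1\|w_k\| s_k + \alpha_2\|w_k\|^2$ for suitable constants $\alpha_1,\alpha_2$ depending on $c_1,c_3,L_f,r_w$. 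Completing the square / using $\beta^2 s_k^2+\alpha_1\|w_k\|s_k+\alpha_2\|w_k\|^2 \leq (\beta s_k + \gamma\|w_k\|)^2$ for a $\gamma$ that exists because $r_w$ is a fixed bound (so the cross term can be dominated), I get $s_{k+1}\leq \beta s_k + \gamma\|w_k\|$, which unrolls to $s_k\leq \beta^{k-k_0}s_{k_0} + \gamma\sum_{i=k_0}^{k-1}\beta^{k-i-1}\|w_i\|$. Translating back through \eqref{eq:incremental_lyapunov_1} gives $\|x_k-y_k\|\leq \sqrt{c_2/c_1}\,\beta^{k-k_0}\|x_{k_0}-y_{k_0}\| + (\gamma/\sqrt{c_1})\sum_{i=k_0}^{k-1}\beta^{k-i-1}\|w_i\|$, which is exactly the \acrshort*{ediss} inequality with $c_0=\sqrt{c_2/c_1}$, $\rho=\beta$, $c_w=\gamma/\sqrt{c_1}$, and the same $r_w$.

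The main obstacle is the quadratic-in-$V$ nature of the Lyapunov bounds: the difference bound \eqref{eq:lyapunov_difference_incremental} is bilinear (it scales like $\|\delta\text{state}\|\cdot\|\text{separation}\|$ rather than like $\|\delta\text{state}\|^2$), so after one step the perturbation enters multiplied by the current separation $\|x_k-y_k\|$, not additively. The fix is precisely to work at the level of $s_k=\sqrt{V}$ rather than $V$ itself and absorb the cross term by a completion of the square that is only valid because $\|w_k\|\leq r_w$ is uniformly bounded — this is where the restriction to $\mathcal{B}(r_w)$ in the \acrshort*{ediss} definition is used. One should also note that $\mathcal{D}$ being forward invariant under the nominal dynamics and the disturbances being constrained so that $y_k\in\mathcal{D}$ ensures every application of \eqref{eq:incremental_lyapunov_2} and \eqref{eq:lyapunov_difference_incremental} is legitimate (all arguments stay in $\mathcal{D}$), so no issue arises from leaving the region on which $V$ is defined.
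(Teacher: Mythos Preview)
Your proposal is correct and follows essentially the same route as the paper: invoke Theorem~\ref{the:converse_edelta_s} to get $V$, split $V(k+1,x_{k+1},y_{k+1})$ into the nominal decrease plus the perturbation term bounded via \eqref{eq:lyapunov_difference_incremental}, use $L_f$-Lipschitzness of $f$, and then absorb the cross term. The only difference is algebraic: the paper completes the square at the $V$-level, trading part of the negative term $-(1-\beta^2)c_1\|x-y\|^2$ against the cross term to obtain a recursion $V_{k+1}\leq \rho^2 V_k + c_5\|w_k\|^2$ with $\rho^2 = 1 - (1-\beta^2)c_1/(2c_2) > \beta^2$, whereas you work at the $\sqrt{V}$-level and factor $\beta^2 s_k^2 + \alpha_1\|w_k\|s_k + \alpha_2\|w_k\|^2 \leq (\beta s_k + \gamma\|w_k\|)^2$, which preserves the rate $\beta$. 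Your factorization is valid with $\gamma = \max\{\alpha_1/(2\beta),\sqrt{\alpha_2}\}$ and in fact does \emph{not} require $\|w_k\|\leq r_w$; the $r_w$ bound is only needed so that the \acrshort*{ediss} definition applies (and $y_k$ stays in $\mathcal{D}$), not for the square-completion itself, so that remark in your write-up can be dropped.
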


For completeness, we  provide the proof in the appendix, and note that the implication  is similar to existing results in the literature, e.g. \cite{jiang2001input}, that studies the asymptotic stability of the equilibrium, or \cite[Lemma 2]{kohler2022globally} that considers a non-autonomous setting.

Next, we show that if in addition to local Lipschitz continuity, the nonlinear dynamics are also locally differentiable in some arbitrarily small region $\mathcal{D}_0$ around the equilibrium, then uniform exponential stability implies uniform exponential incremental stability and \acrshort*{ediss}. The results are formalized in Theorem \ref{the:exp_implies_edis} and Corollary \ref*{cor:exp_implies_ediss}.

\begin{theorem}\label{the:exp_implies_edis}
    Let Assumption \ref*{assum:local_behavior} hold, then if  the nonlinear system \eqref{eq:nonlinearsystem} is  uniformly exponentially stable in $\mathcal{D}$, it is also uniformly exponentially incrementally stable in the same region. 
\end{theorem}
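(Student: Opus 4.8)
The plan is to prove the statement locally near the origin and then globalize. \textbf{Step 1 --- reduce to the linearization.} Near the origin $f$ is $C^1$ with a Lipschitz Jacobian (Assumption~\ref{assum:local_behavior}.\ref{assum:local_diff}), so writing $A(k):=\tfrac{\partial f}{\partial x}(k,\boldsymbol 0)$ we have $f(k,\xi)=A(k)\xi+r(k,\xi)$ with $\|r(k,\xi)\|\le\tfrac12 L_J\|\xi\|^2$ on some ball inside $\mathcal{D}_0$, where $L_J$ is the uniform Jacobian--Lipschitz constant. I would first show the linearization $z_{k+1}=A(k)z_k$ is uniformly exponentially stable. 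Take the converse Lyapunov function $V$ of Theorem~\ref{the:converse_lyapunov}, which by Lemma~\ref{lem:delta_V_bound} also obeys $|V(k,\xi)-V(k,\zeta)|\le c_3\|\xi-\zeta\|(\|\xi\|+\|\zeta\|)$. Using this bound with the pair $f(k,\xi)$ and $A(k)\xi=f(k,\xi)-r(k,\xi)$ gives $V(k{+}1,A(k)\xi)\le\beta^2 V(k,\xi)+c\|\xi\|^3$ for a uniform $c$; since $\|\xi\|^3\le c_1^{-1}\|\xi\|\,V(k,\xi)$, on a sufficiently small ball $V$ also strictly decreases along $z_{k+1}=A(k)z_k$, so that system is locally, hence by homogeneity globally, uniformly exponentially stable. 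A standard construction (e.g.\ $P(k)=\sum_{j\ge k}\Phi_A(j,k)^{\!\top}\Phi_A(j,k)$ with $\Phi_A$ its transition matrix) then yields a bounded, uniformly positive-definite $P(k)$ with $A(k)^{\!\top}P(k{+}1)A(k)\preceq\beta_A^2 P(k)$ for some $\beta_A\in(0,1)$.

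\textbf{Step 2 --- Demidovich near the origin.} Pick $\rho\in(\beta_A,1)$; then $A(k)^{\!\top}P(k{+}1)A(k)-\rho^2 P(k)\preceq(\beta_A^2-\rho^2)P(k)\prec0$ with a uniform margin. Because $x\mapsto\tfrac{\partial f}{\partial x}(k,x)$ is bounded and Lipschitz uniformly in $k$, the matrix $D(k,x)$ in \eqref{eq:demidovich} depends continuously (uniformly in $k$) on $\tfrac{\partial f}{\partial x}(k,x)$ and equals $A(k)^{\!\top}P(k{+}1)A(k)-\rho^2 P(k)$ at $x=\boldsymbol 0$, so $D(k,x)\prec0$ for all $x$ in some ball $\mathcal{B}(r^\star)\subseteq\mathcal{D}_0$. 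Theorem~\ref{the:demidovich} then gives a radius $\bar r\le r^\star$ and a constant $\bar d$ such that \eqref{eq:nonlinearsystem} is uniformly exponentially incrementally stable in $\mathcal{B}(\bar r)$ with rate $\rho$.

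\textbf{Step 3 --- globalize.} As $\mathcal{D}$ is compact, $\|\xi\|\le R$ for all $\xi\in\mathcal{D}$; uniform exponential stability gives $\|\phi(k,k_0,\xi)\|\le dR\lambda^{k-k_0}$, hence a \emph{uniform} $K\in\mathbb{N}$ with $dR\lambda^K\le\bar r/d$, and then (by the cocycle identity and reusing the estimate from the current state) $\phi(k,k_0,\xi)\in\mathcal{B}(\bar r)$ for all $k\ge k_0+K$ and all $\xi\in\mathcal{D}$. For $k_0\le k\le k_0+K$ the global Lipschitz bound (Assumption~\ref{assum:local_behavior}.\ref{assum:local_lipschitz}) gives $\|\phi(k,k_0,\xi)-\phi(k,k_0,\zeta)\|\le L_f^{k-k_0}\|\xi-\zeta\|\le\max(1,L_f^K)\|\xi-\zeta\|$, while for $k\ge k_0+K$ Step~2, applied from time $k_0+K$ with the (in-ball) states $\phi(k_0{+}K,k_0,\xi),\phi(k_0{+}K,k_0,\zeta)$, gives $\|\phi(k,k_0,\xi)-\phi(k,k_0,\zeta)\|\le\bar d\max(1,L_f^K)\|\xi-\zeta\|\,\rho^{k-k_0-K}$. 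Splicing the two ranges, and using $\rho^{k-k_0}\ge\rho^K$ on the first one to absorb $\rho^{-K}$, yields $\|\phi(k,k_0,\xi)-\phi(k,k_0,\zeta)\|\le\tilde d\,\|\xi-\zeta\|\,\rho^{k-k_0}$ on $\mathbb{N}_{\ge k_0}$ with $\tilde d:=\max(1,\bar d)\max(1,L_f^K)\rho^{-K}$, i.e.\ uniform exponential incremental stability in $\mathcal{D}$.

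The hard part is Step~1: inferring exponential stability of the linearization at the origin from that of the nonlinear dynamics. This is exactly where Assumption~\ref{assum:local_behavior}.\ref{assum:local_diff} is indispensable --- the $O(\|\xi\|^2)$ control on the Taylor remainder is what lets the nonlinear converse Lyapunov function serve the linear system on a small ball, after which homogeneity removes the restriction to a ball. The rest is either a direct application of Theorem~\ref{the:demidovich} (Step~2) or elementary bookkeeping with the exponential and Lipschitz estimates, the only point needing care being that the entry time $K$ into $\mathcal{B}(\bar r)$ is uniform in $k_0$ and in the initial condition, which follows from compactness of $\mathcal{D}$ and the uniform form of the exponential-stability bound.
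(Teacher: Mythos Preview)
Your proposal is correct and follows essentially the same three-step strategy as the paper's proof: (i) use the converse Lyapunov function together with Lemma~\ref{lem:delta_V_bound} and the $O(\|\xi\|^2)$ Taylor remainder to show the linearization $A(k)$ is uniformly exponentially stable, then build a uniform $P(k)$; (ii) verify the Demidovich condition~\eqref{eq:demidovich} on a small ball and invoke Theorem~\ref{the:demidovich}; (iii) globalize by combining the Lipschitz bound over a finite, uniform entry time into the contraction ball with the local incremental estimate thereafter. The paper carries out Step~2 via an explicit perturbation calculation with $d_k(\bar x)=\tfrac{\partial f}{\partial x}(k,\bar x)-A(k)$ rather than your continuity argument, and in Step~1 it does not spell out the homogeneity observation you make (that local exponential stability of a linear system is automatically global), but these are presentational differences only.
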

\begin{proof}
    We start by showing that the exponential stability of $f$ implies that the linearized dynamics around the origin are also stable by following similar arguments to \cite{khalil2002nonlinear}. Let
    \begin{equation*}
        A(k) : = \frac{\partial f(k,x)}{\partial x}(k,\boldsymbol{0}),
    \end{equation*}
    which is well-defined given Assumption \ref*{assum:local_behavior}. Moreover, there exists a $\bar{A} \in \mathbb{R}_+$, such that $\|{A}(k)\| \leq \bar{A}$, $k \in \mathbb{N}_{\geq k_0}$. It follows from Theorem \ref*{the:converse_lyapunov}, that there exists a continuous mapping $V:k \in \mathbb{N}_{\geq k_0} \times \mathcal{D} \rightarrow \mathbb{R}$ satisfying \eqref{eq:converse_lyapunov_1}-\eqref{eq:converse_lyapunov_2}. Let us consider $V$ as a candidate Lyapunov function for $A(k)$. Then for all $x \in \mathcal{D}$, $k\in \mathbb{N}_{\geq k_0}$ there exist constants $\beta \in (0,1)$, and $c_4,d \in \mathbb{R}_+$ such that
    \begin{align*}
        &V(k+1,A(k)x) =\\
        & V(k+1,f(k,x)) + \left[V(k+1,A(k)x)-V(k+1,f(k,x))\right]\\
        &\leq \beta^2 V(k,x) + \left[V(k+1,A(k)x)-V(k+1,f(k,x))\right]\\
        &\leq \beta^2 V(k,x) + c_4 \|f(k,x)- A(k)x\|\|f(k,x) + A(k)x\|\\
        &\leq \beta^2 V(k,x) +  c_4 \|x\|\cdot \|f(k,x)- A(k)x\|\left(d\beta  + \|A(k)\|\right),
    \end{align*}
    where the first inequality follows from Theorem \ref*{the:converse_lyapunov}, the second from Lemma \ref*{lem:delta_V_bound} and the last from Definition \ref*{def:exponential_stability} and properties of induced norms. Denoting $h(k,x): = f(k,x)- A(k)x$, it follows from the Lipschitz continuity of  the Jacobian of $f$ \cite[Chpt. 4.6]{khalil2002nonlinear} that there exists a $L_h \in \mathbb{R}_+$, such that $\|h(k,x)\| \leq L_h \|x\|^2$, for all $k\in \mathbb{N}_{\geq k_0}$ for all $x\in \mathcal{D}_0$.  Using this
    \begin{align*}
        &V(k+1,A(k)x) \leq \beta^2 V(k,x)  +c_4 L_h\left(d\beta + \bar{A}\right) \|x\|^3\\
                    &\leq \left(\beta^2+ \frac{c_4L_h\left(d \beta + \bar{A}\right)}{c_1}\|x\|\right) V(k,x): = \gamma V(k,x),
    \end{align*}
    where the second inequality follows from the converse Lyapunov Theorem \ref*{the:converse_lyapunov} for some $c_1 \in \mathbb{R}_+$. Defining $r_1: = \min \left\{\frac{c_1\left(1-\beta^2\right)}{c_4 L_h \left(d\beta+\bar{A}\right)}, \max\{r>0 \ |\ \|x\|<r, x \in \mathcal{D}_0\}\right\}$, for all $\|x\| < r_1$, it holds that $\gamma <1$.    
    Hence, using Theorem \ref*{the:direct_lyapunov} the linearized dynamics $x_{k+1} =A(k) x_k$ are uniformly exponentially stable. Then, from \cite[Thm 23.3]{rugh1996linear} there exists a sequence of uniformly positive definite $P(k)$ that solves the difference Lyapunov equation $ A^\top(k) P(k+1)A(k) - P(k) \leq - c I,$ for some $c \in \mathbb{R}_+$.

    Considering now equation \eqref{eq:demidovich} for some $k \in \mathbb{N}_{\geq k_0}$, $\bar{x} \in \mathcal{D}_0$ and denoting $d_k(\bar{x}): = \frac{\partial f}{\partial x}(k,\bar{x}) - A(k)$, it holds that
    \begin{align*}
        &\frac{\partial f}{\partial x}(k,\bar{x})^\top P(k+1) \frac{\partial f}{\partial x}(k,\bar{x}) - P(k) \\
        &= \left[A(k)+d_k(\bar{x})\right]^\top P(k+1) \left[A(k)+d_k(\bar{x})\right] -P(k)\\
        &= A(k)^\top P(k+1) A(k) - P(k)\\
         &+A(k)^\top P(k+1) d_k(\bar{x}) + d_k(\bar{x})^\top P(k+1) \left[A(k)+d_k(\bar{x})\right]\\
        &\leq -cI + A(k)^\top P(k+1)d_k(\bar{x})\\
         &\qquad \qquad+ d_k(\bar{x})^\top P(k+1) \left[A(k)+d_k(\bar{x})\right],
    \end{align*}
    where the first equality follows from the definition of $d_k(\bar{x})$ and the last one from \cite[Thm 23.3]{rugh1996linear}. 

    Following the same arguments as in \cite[Chpt. 4.6]{khalil2002nonlinear}, there exists a constant $L_2 \in \mathbb{R}_+$, such that for all $k \in \mathbb{N}_{\geq k_0}$ and $\bar{x} \in \mathcal{D}_0$, $\|d_k(\bar{x})\|\leq L_2 \|\bar{x}\|$. Pre- and post-multiplying  the above with some $x^\top$ and $x$, respectively then yields
    \begin{align*}
        &x^\top \left[\frac{\partial f}{\partial x}(k,\bar{x})^\top P(k+1) \frac{\partial f}{\partial x}(k,\bar{x}) - P(k)\right]x \leq \\
         &-c\|x\|^2+ 2\|x\|^2\|A(k)^\top P(k+1) d_k(\bar{x})\|\\
         &\qquad \qquad + \|x\|^2\|d_k(\bar{x})^\top P(k+1) d_k(\bar{x})\| \\
        &\leq -\|x\|^2\left(c-2\bar{A}\bar{P}L_2 \|x\| - \bar{P}r_dL_2^2 \|x\|\right), 
    \end{align*}
    where $r_d = \max\limits_{x\in \mathcal{D}} \|x\|$, and $\|P(k)\| \leq \bar{P}, \; k \in \mathbb{N}_{\geq k_0}$. Note that the rate of exponential stability of the linear system $A(k)$ is $\sqrt{1 - \frac{c}{\bar{P}}} \in (0,1)$. Then, choosing $\rho^2> 1 - \frac{c}{\bar{P}}$, adding $x^\top\left(1-\rho^2\right) P(k)x$ to both sides of the above inequality and defining $r_2 :=  \frac{c - \left(1-\rho^2\right)\bar{P}}{2\bar{A}\bar{P}L_2 + \bar{P}r_dL_2^2}$ ensures that
    \begin{equation*}
        x^\top \left[\frac{\partial f}{\partial x}(k,x)^T P(k+1) \frac{\partial f}{\partial x}(k,x) -\rho^2 P(k)\right]x <0,
    \end{equation*}
    uniformly in $k$ and $x$, for all $\|x\| < r:= \min\left(r_1,r_2\right)$. This implies by Theorem \ref*{the:demidovich} that there exists a positive constant $r_{\mathcal{\tilde{D}}_0} \leq r$, such that for all $x \in \mathcal{B}(r_{\mathcal{\tilde{D}}_0})$ the system \eqref{eq:nonlinearsystem} is uniformly exponentially incrementally stable with rate of $\rho$.

    To show that the system is also uniformly exponentially incrementally stable in $\mathcal{D}$, consider any $\xi_1,\xi_2 \in \mathcal{D}$, then for all $k \in \mathbb{N}_{\geq k_0}$, and some $d\in \mathbb{R}_+$, the following two inequalities hold
        \begin{align}
        \label{eq:exp_bound}
        \begin{split}
            &\|\phi(k,k_0,\xi_1)-\phi(k,k_0,\xi_2)\| \leq \\
             &\qquad \|\phi(k,k_0,\xi_1)\|+ \|\phi(k,k_0,\xi_2)\| \leq 2r_d d\beta^{k-k_0},
        \end{split}\\
            \label{eq:lipschitz_bound}
            &\|\phi(k,k_0,\xi_1)-\phi(k,k_0,\xi_2)\| \leq L_f^{k-k_0}\underbrace{\|\xi_1-\xi_2\|}_{:=\Delta \xi}.
        \end{align}
        The bound in \eqref{eq:exp_bound} follows from the exponential stability of $f$, and the one in \eqref{eq:lipschitz_bound} from its Lipschitz continuity. Combining the two
        \begin{equation*}
            \|\phi(k,k_0,\xi_1)-\phi(k,k_0,\xi_2)\|\leq \min\{2r_d d \cdot\beta^{k-k_0}, L_f^{k-k_0}\|\Delta \xi\|\}.
        \end{equation*}
    Define $k'\in \mathbb{N}_{\geq k_0}$ such that both $\|\phi(k',k_0,\xi_1)\| < r_{\mathcal{\tilde{D}}_0},\|\phi(k',k_0,\xi_2)\|<r_{\mathcal{\tilde{D}}_0}$. Then, from the above analysis, there exists a $d'\in \mathbb{R}_+$, such that for all $k\in \mathbb{N}_{\geq k'}$
    \begin{align*}
        &\|\phi(k,k_0,\xi_1)-\phi(k,k_0,\xi_2)\|\\
        & \leq d'\rho^{k-k'}\cdot\|\phi(k', k_0,\xi_1)-\phi(k',k_0,\xi_2)\|.
    \end{align*}
    It then follows that
    \begin{align*}
        \|\phi(k,k_0,\xi_1)&-\phi(k,k_0,\xi_2)\| \\
        &\leq d'\rho^{k-k'}\cdot\min\{r_d d \cdot\beta^{k-k_0},L_f^{k-k_0}\|\Delta \xi\|\}.
    \end{align*}
    Note that for all $k=\{k_0, \hdots, k'\}$
    \begin{equation*}
        \|\phi(k,k_0,\xi_1)-\phi(k, k_0,\xi_2)\| \leq c_5 \|\Delta \xi\| \rho^{k-k_0},
    \end{equation*} 
    where $c_5:= \max\{1, \frac{L_f^{k'}}{\rho^{k'}}\}$, makes it a constant independent of $\|\Delta \xi\|$, and since $k'$ is finite, also independent of time. Combining the bounds, the following holds for all $k\in \mathbb{N}_{\geq k_0}$
    \begin{equation*}
        \|\phi(k,k_0,\xi_1)-\phi(k,k_0,\xi_2)\| \leq c_5d'\rho^{k-k_0} \|\Delta \xi\|,
    \end{equation*}
    which is the definition of uniform exponential incremental stability. 
\end{proof}
Combining Theorems \ref*{prop:eis-eiiss} and \ref*{the:exp_implies_edis} the following corollary follows directly.
\begin{corollary}
    \label{cor:exp_implies_ediss}
    Let Assumption \ref*{assum:local_behavior} hold, then if  the nonlinear system \eqref{eq:nonlinearsystem} is  uniformly exponentially stable in $\mathcal{D}$,  it is also \acrshort*{ediss} in the same region. 
\end{corollary}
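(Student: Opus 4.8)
The plan is to obtain the claim by simply composing the two preceding results, since both are stated under exactly Assumption \ref{assum:local_behavior}. First I would invoke Theorem \ref{the:exp_implies_edis}: since Assumption \ref{assum:local_behavior} holds and \eqref{eq:nonlinearsystem} is uniformly exponentially stable in $\mathcal{D}$, the system is uniformly exponentially incrementally stable in $\mathcal{D}$. Then I would feed this conclusion into Theorem \ref{prop:eis-eiiss}: again under Assumption \ref{assum:local_behavior}, uniform exponential incremental stability in $\mathcal{D}$ yields \acrshort*{ediss} in the same region $\mathcal{D}$. Chaining the two implications gives the corollary.

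The only thing worth checking carefully is that the hypotheses line up: Theorem \ref{the:exp_implies_edis} needs both parts \ref{assum:local_lipschitz} (local Lipschitz continuity) and \ref{assum:local_diff} (differentiability of $f$ with a bounded, Lipschitz Jacobian near the origin), while Theorem \ref{prop:eis-eiiss} likewise needs all of Assumption \ref{assum:local_behavior}; since we assume the full Assumption \ref{assum:local_behavior} here, no extra conditions are introduced and there is no gap between the two steps. In particular the region $\mathcal{D}$ is preserved through both implications, so the final statement holds on exactly $\mathcal{D}$ as claimed.

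There is essentially no obstacle at the level of the corollary itself — all the substantive work (the linearization argument around the origin, establishing \eqref{eq:demidovich} locally, and then globalizing the incremental bound over $\mathcal{D}$ via the Lipschitz/exponential interpolation) is already carried out inside the proof of Theorem \ref{the:exp_implies_edis}, and the disturbance-to-state estimate is handled in the proof of Theorem \ref{prop:eis-eiiss}. Hence the proof is a one-line composition, and I would write it simply as: "Combining Theorem \ref{the:exp_implies_edis} and Theorem \ref{prop:eis-eiiss} under Assumption \ref{assum:local_behavior} yields the result."
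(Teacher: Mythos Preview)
Your proposal is correct and matches the paper's own argument exactly: the paper states that ``Combining Theorems \ref*{prop:eis-eiiss} and \ref*{the:exp_implies_edis} the following corollary follows directly,'' which is precisely the one-line composition you describe. Your additional check that the hypotheses and the region $\mathcal{D}$ line up across both implications is a reasonable sanity check but introduces nothing beyond what the paper intends.
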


\begin{remark}
    The conditions on the Jacobian matrix of $f$ in Assumption \ref*{assum:local_behavior} are required only in the time-varying case, as these are satisfied trivially in the time-invariant setting; for further details, we refer the reader to \cite[Theorems 4.7, 4.12]{khalil2002nonlinear}.
\end{remark}

\begin{remark}
    It is worth highlighting that the implication  in Corollary \ref{cor:exp_implies_ediss} is also claimed in \cite[Thm. 12]{tran2018convergence} without assuming local differentiability. While the proof in \cite{tran2018convergence} holds for asymptotic stability, the same arguments break down in the exponential case since one cannot find constants $d\in \mathbb{R}_+$ and $\lambda\in (0,1)$ as in Definition \ref{def:incremental_stability} that are independent of the initial deviation $\|\xi-\zeta\|$. Hence, the results in this section serve to fill this gap by deriving sufficient conditions in the form of Assumption \ref{assum:local_behavior} under which the implication holds.
 \end{remark}

In the sequel, we use these insights to address the closed-loop dynamics under suboptimal and optimal linear MPC policies as a notable use case.
\section{Model Predictive Control - A Use Case}
\label{sec:use_cases}

In Section \ref*{sec:suboptimality}, we showed that under certain assumptions on the suboptimal policy $\mu$ (Assumption \ref*{assum:mu_contractive}) and the benchmark policy $\mu^\star$ (Assumption \ref*{assum:mu_star}), the suboptimality gap of $\mu$ can be bounded for a certain family of costs. We now exploit the results of Section IV to show that these assumptions are satisfied in the case of suboptimal, PGM-based linear-quadratic MPC, and derive an asymptotically tighter suboptimality gap bound compared to \cite{karapetyan2023finite} under the same assumptions.

Consider the control of linear time-invariant dynamical systems, modeled by 
\begin{equation*}
    x_{k+1} = Ax_k+Bu_k,\quad  k\in \mathbb{N},
\end{equation*}
where $A\in \mathbb{R}^{n\times n}$, and $B\in\mathbb{R}^{n\times m}$ are known system matrices, and the state and input vectors are defined as before. We consider the finite horizon linear-quadratic regulator (LQR) problem with the objective of minimizing the finite-time cost
\begin{equation}
    \label{eq:quadratic}
    J_T(x_0,\boldsymbol{u}) = \|x_T\|^{2}_P+\sum_{k=0}^{T-1}\|x_k\|^{2}_Q+ \|u_k\|^2_R,
\end{equation}
where $Q\in \mathbb{R}^{n \times n}$ and $R \in \mathbb{R}^{m\times m}$ are design matrices and $P$ is taken to be the solution of the discrete-time Algebraic Riccati Equation, $P = Q + K^\top R K + (A-BK)^\top P (A-BK)$, with $K = (R+B^\top PB)^{-1}(B^\top PA)$. The control inputs must satisfy
$u_k \in \mathcal{U}$ for all $k\in \mathbb{N}$, where  $\mathcal{U} \subseteq \mathbb{R}^m$ is a constraint set. The following standard assumptions ensure a unique minimizer for \eqref{eq:quadratic} always exists \cite{mayne2000constrained,bemporad2002explicit}.

\begin{assumption}
    (Well-posed problem)   
\label{assum:well_posed}
\begin{enumerate}[label=\roman*.]
    \item The pair $(A,B)$ is stabilizable, $Q\succ 0$, $R\succ 0$.
    \item The input constraint set $\mathcal{U}$ is closed, convex and contains the origin in its interior. 
    \label{assum:convex_u}
\end{enumerate}
\end{assumption}

The model predictive controller solves this problem in a receding horizon fashion, solving the following parametric optimal control problem (POCP) at each timestep $k$, having measured a state $x \in \mathbb{R}^n$
\begin{equation}
    \label{eq:MPC_POCP}
    \begin{split}
        \mu^{\star}(x) := &\argmin_{\nu} \;  J_N(\xi_0,\nu)\\
         \text{s.t.} \; & \xi_{i+1} = A\xi_i +B\nu_i, \; i \!= \!0,\dots, N-1,\\
        &\xi_0 = x, \; \nu_i \in \mathcal{U}, \; i \!= \!0,\dots, N-1.
    \end{split}
     \end{equation}
Here $N$ is the prediction horizon length, and  $\nu = [{\nu}_0^{\top} \hdots {\nu}_{N-1}^{\top}]^{\top}$ denotes the predicted input vector. We refer to the minimiser of \eqref{eq:MPC_POCP} for a given initial state (parameter) $x\in\mathbb{R}^n$, $\mu^{\star}(x):\mathbb{R}^n \rightarrow \mathbb{R}^{Nm}$, as the optimal mapping. For this example, we take as the benchmark policy $\mu^\star$, the map solving the POCP \eqref{eq:MPC_POCP}. The optimal cost attained by this mapping is denoted by $J_N^\star(x):= J_N(x,\mu^{\star}(x))$, which serves as an approximate value function for the problem. For each $k$, the model predictive controller applies the first element of $\mu^{\star}(x_k)$ to the system, and the process is repeated in a receding horizon fashion. The optimal state evolution under this optimal MPC policy is then given by
\begin{equation}\label{eq:lqmpc_optimal}
    x^{\star}_{k+1} = Ax_k^{\star} +\overline{B}\mu^{\star}(x_k^{\star}):= f(x^{\star}_k), \quad k\in \mathbb{N},
\end{equation}
where $x_0^{\star}:=x_0$, $\overline{B} := BS$,  and $S := \left[I_{m\times m}~\boldsymbol{0}~\hdots~\boldsymbol{0}\right] \in \mathbb{R}^{m \times Nm}$ is a matrix selecting the first control input. Note that the optimal MPC policy, $\mu^\star$ is time-invariant due to the structure of the problem. 

Problem \eqref{eq:MPC_POCP} is a parametric quadratic program and for a given parameter $x\in \mathbb{R}^n$ can be represented in an equivalent condensed form $J_N^\star(x) =\min_{\nu\in \mathcal{N}}  \|(x,\nu)\|_M^2$, where  $\mathcal{N} = \mathcal{U}^N \subseteq \mathbb{R}^{Nm}$
\begin{equation}
\label{eq:condensed_matrices}
    M = \begin{bmatrix}
W & G^{\top}\\
G & H
\end{bmatrix},
\end{equation}
and the definitions of $H\in \mathbb{R}^{Nm\times Nm}$, $W\in \mathbb{R}^{n \times n}$ and,  $G \in \mathbb{R}^{Nm \times n}$ can be found in \cite{liao2021analysis}.

As the optimal $\mu^\star(x)$ may often be infeasible to compute exactly, suboptimal schemes are often considered. In our setting, a suboptimal policy is computed by performing only a finite number of optimization steps for \eqref{eq:MPC_POCP}. In particular,  given $x\in \mathbb{R}^n$ and an input vector ${\nu} \in \mathbb{R}^{Nm}$, consider the operator that performs one step of the projected gradient method
\begin{equation}
\label{eq:pgm_operator}
    \mathcal{T}(x,\nu):= \Pi_{\mathcal{N}}[\nu - \alpha \nabla{\nu}{J_N}(x,\nu) ],
\end{equation}
where $\alpha\in\mathbb{R}$ is a step size. Applying \eqref{eq:pgm_operator} iteratively  $\ell_k \in \mathbb{N}_+$ times provides an approximation for the optimal input, and hence the optimal policy. The combined dynamics of the system and the approximate optimizer are then given by\footnote{The subscript of $\ell_k$ is dropped when it is taken to be a constant.}
\begin{subequations}
\label{eq:suboptimal_combined}
\begin{align}
  \label{eq:suboptimal_input}
  u_{k}^\mu &= \mathcal{T}^{\ell_k}(x_{k},u_{k-1}^\mu),\\
    x_{k+1} & = Ax_k + \overline{B}u_{k}^\mu,
    \label{eq:suboptimal_system}
\end{align}
\end{subequations}
where $u_{-1}^\mu \in \mathbb{R}^{Nm}$ is an initialization vector, and for some $l \in \mathbb{N}, x\in\mathbb{R}^n$ and $\nu\in\mathbb{R}^{Nm}$, we define
\begin{equation}
    \label{eq:composition}
    \mathcal{T}^{l}(x,\nu) = \mathcal{T}(x,\mathcal{T}^{l-1}(x,\nu)), \qquad \mathcal{T}^0(x,\nu) = \nu.
\end{equation}
The dynamics under the suboptimal policy are \eqref{eq:suboptimal_system} by taking $u_{k}^\mu =  \mu_k (x_k)$ for all $k\in \mathbb{N}$, i.e.
\begin{equation*}
    x_{k+1}  = \underbrace{Ax_k + \overline{B}{\mu}^{\star} (x_k)}_{f(x_k)} + \overline{B}\underbrace{\left(u_{k}^\mu - \mu^{\star}(x_k)\right)}_{=d_k}, \quad k\in \mathbb{N}.
\end{equation*}
Note that $\mu_k$ is also a function of the previous input state $u_{k-1}^\mu$. However, since the closed-loop evolution is noise-free, it can be uniquely determined given the initialization vector, the current time $k$, and the current state. Hence, the dependence on $u_{k-1}^\mu$ is encoded in the subscript of $\mu_k$. The suboptimal policy can in general be defined as a function of the information vector $\mathcal{I}_k = \{x_k, u_{k-1}, \hdots, u_0\}$; as long as Assumption \ref*{assum:mu_contractive} is satisfied, the results in this manuscript hold.
\subsection{Optimal MPC}
\label{sec:optimal_mpc}
In this subsection, we review the properties of the optimal mapping $\mu^{\star}(x)$. As shown in \cite{liao2021analysis, leung2021computable,limon2006stability}, the system \eqref{eq:lqmpc_optimal} is exponentially stable with the forward invariant ROA estimate
\begin{equation*}
    \Gamma_N: = \{x\in\mathbb{R}^n \mid \psi(x)\leq r_N\},
\end{equation*}
where $\psi(x):=\sqrt{J_N^\star(x)}$, $\textstyle{d = c\cdot {\lambda^{-}(Q)}/{\lambda^{+}(P)}}$, $r_N = \sqrt{N d + c}$ and $c>0$ is such that the following set
\begin{equation*}
    \Omega = \{x\in\mathbb{R}^n \mid \|x\|_P^2\leq c \},
\end{equation*}
also satisfies $\Omega \subset \{x\in\mathbb{R}^n \mid -Kx \in \mathcal{U}\}$. The function $\psi(x)$ is a Lyapunov function for the optimal MPC algorithm, satisfying
\begin{align}
    \label{eq:V_bounds}
    \|x\|_P \leq \psi(x) \leq \|x\|_W\\
    \label{eq:exp_rate_optimal_v}
    \psi\left(f(x)\right) \leq \beta \psi(x),
\end{align}
where $\beta\in (0,1)$ is the exponential decay rate.
The Lipschitz continuity of the optimal mapping is formalized in the following lemma. 
\begin{lemma} \cite[Corollary~2]{liao2021analysis}
    \label{lem:lispchitz}
        For any $(x,y)\in \Gamma_N \times \Gamma_N$, the optimal solution mapping, $\mu^{\star}(x)$, satisfies
        \begin{align*}
            &\|\mu^{\star}(x)-\mu^{\star}(y)\|\leq \|H^{-\frac{1}{2}}\|\|G(x-y)\|_{H^{-1}} \leq L \|x-y\|
        \end{align*}
        with a Lipschitz constant $L: = \|H^{-\frac{1}{2}}\|\cdot \|H^{-\frac{1}{2}}G\|$.
    \end{lemma}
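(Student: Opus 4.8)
The plan is to reduce the statement to the non-expansiveness of a metric projection onto a convex set, measured in a suitably weighted inner product. First I would use the condensed reformulation $J_N^\star(x) = \min_{\nu\in\mathcal N}\|(x,\nu)\|_M^2$ with $M$ as in \eqref{eq:condensed_matrices} and $\mathcal N = \mathcal U^N$. Expanding the quadratic form gives $\|(x,\nu)\|_M^2 = x^\top W x + 2\nu^\top G x + \nu^\top H\nu$, and since $H\succ 0$ --- which follows from $R\succ0$ in Assumption \ref{assum:well_posed} together with the structure of the condensed cost --- completing the square in $\nu$ yields
\begin{equation*}
\|(x,\nu)\|_M^2 = \|\nu + H^{-1}Gx\|_H^2 + x^\top\!\left(W - G^\top H^{-1}G\right)x ,
\end{equation*}
whose last term does not depend on $\nu$. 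Hence $\mu^\star(x) = \argmin_{\nu\in\mathcal N}\|\nu - (-H^{-1}Gx)\|_H^2$, i.e.\ $\mu^\star(x)$ is the projection of $-H^{-1}Gx$ onto $\mathcal N$ in the inner product $\langle a,b\rangle_H := a^\top H b$. Since $\mathcal U$ is closed and convex (Assumption \ref{assum:well_posed}.\ref{assum:convex_u}), $\mathcal N$ is closed and convex, so this projection is well-defined and single-valued --- which incidentally re-confirms that $\mu^\star$ is a genuine function of $x$.

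The second step invokes the standard fact that in any Hilbert space the metric projection onto a nonempty closed convex set is $1$-Lipschitz. Applying it in $(\mathbb{R}^{Nm},\langle\cdot,\cdot\rangle_H)$ to the points $-H^{-1}Gx$ and $-H^{-1}Gy$, and using $\|H^{-1}Gv\|_H^2 = v^\top G^\top H^{-1}G v = \|Gv\|_{H^{-1}}^2$, gives
\begin{equation*}
\|\mu^\star(x)-\mu^\star(y)\|_H \le \|H^{-1}G(x-y)\|_H = \|G(x-y)\|_{H^{-1}} .
\end{equation*}
Converting the left-hand side back to the Euclidean norm via $\|z\| = \|H^{-1/2}(H^{1/2}z)\| \le \|H^{-1/2}\|\,\|z\|_H$ produces the first claimed inequality. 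For the outer inequality I would simply write $\|G(x-y)\|_{H^{-1}} = \|H^{-1/2}G(x-y)\| \le \|H^{-1/2}G\|\,\|x-y\|$, which identifies the Lipschitz constant $L = \|H^{-1/2}\|\cdot\|H^{-1/2}G\|$.

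I do not anticipate any serious obstacle: the only delicate point is the well-posedness of the weighted projection, which is precisely where $Q,R\succ0$ (hence $H\succ0$) and the convexity of $\mathcal U$ enter; the rest is algebraic manipulation of norms. Note that the argument is global --- the bound holds for all $x,y\in\mathbb{R}^n$ --- so the restriction to $\Gamma_N\times\Gamma_N$ in the statement is immaterial and is kept only for consistency with the surrounding development. Since this is exactly \cite[Corollary~2]{liao2021analysis}, an alternative is to cite it verbatim; the sketch above records the short self-contained argument behind it.
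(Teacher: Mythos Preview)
Your proof is correct. The paper does not give its own proof of this lemma but simply cites \cite{liao2021analysis} and remarks that it follows from the parametric quadratic program structure (or, alternatively, from the explicit MPC viewpoint of \cite{bemporad2002explicit}); your completion-of-the-square followed by non-expansiveness of the $H$-weighted projection is precisely the standard argument behind that citation, so the approaches coincide.
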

The proof follows from the parametric quadratic program structure of the MPC problem and is derived in \cite{liao2021analysis} or \cite{bemporad2002explicit} from an explicit MPC point of view.

\subsection{Suboptimal MPC}

The suboptimal policy in this setting is defined by \eqref{eq:suboptimal_input}. The following well-established result shows the linear rate of convergence  of the PGM method.
\begin{theorem}\cite[Theorem~3.1]{taylor2018exact} 
    \label{the:pgm_contraction}
    For any $x \in \mathbb{R}^n$, $\nu \in \mathbb{R}^{N m}$, $\ell \in \mathbb{N}_+$, and for $\alpha = \frac{1}{\lambda^{+}(H)+\lambda^{-}(H)}$
    
    $$
    \left\|\mathcal{T}^{\ell}(x,\nu)-\mu^{\star}(x)\right\| \leq \eta^{\ell}\|\nu-\mu^{\star}(x)\|,
    $$
    where $\eta=(\lambda^{+}(H) -\lambda^{-}(H)) /(\lambda^{+}(H)+\lambda^{-}(H))$.
\end{theorem}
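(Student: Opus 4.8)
The plan is to recognise Theorem \ref{the:pgm_contraction} as the textbook linear-convergence estimate for the projected gradient method on a strongly convex quadratic, specialised to the condensed program $\min_{\nu\in\mathcal{N}} J_N(x,\nu)$ with $J_N(x,\nu)=\|(x,\nu)\|_M^2 = x^\top W x + 2\nu^\top G x + \nu^\top H\nu$ and $M$ as in \eqref{eq:condensed_matrices}. For a fixed parameter $x$ the gradient in $\nu$ is the affine map $\nabla_\nu J_N(x,\nu) = 2(H\nu + Gx)$, with constant Jacobian $2H$; moreover $H\succ 0$ (this is where $R\succ 0$ from Assumption \ref{assum:well_posed} enters, so that $0<\lambda^-(H)\le\lambda^+(H)$ and the step size $\alpha = 1/(\lambda^+(H)+\lambda^-(H))$ is well defined and positive). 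The argument then rests on two standard facts from convex analysis: the Euclidean projection onto the nonempty closed convex set $\mathcal{N}=\mathcal{U}^N$ is nonexpansive, and the minimiser of a differentiable convex function over $\mathcal{N}$ is exactly a fixed point of $\nu\mapsto\Pi_{\mathcal{N}}[\nu-\alpha\nabla_\nu J_N(x,\nu)]$ for any $\alpha>0$.

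First I would establish that $\mu^\star(x)$ is a fixed point of $\mathcal{T}(x,\cdot)$ in \eqref{eq:pgm_operator}. By convexity of $J_N(x,\cdot)$ and of $\mathcal{N}$, a point $\nu^\star\in\mathcal{N}$ minimises $J_N(x,\cdot)$ over $\mathcal{N}$ iff it satisfies the variational inequality $\langle \nabla_\nu J_N(x,\nu^\star),\,\nu-\nu^\star\rangle\ge 0$ for all $\nu\in\mathcal{N}$, and by the obtuse-angle characterisation of the projection onto a convex set this is equivalent to $\nu^\star = \Pi_{\mathcal{N}}[\nu^\star - \alpha\nabla_\nu J_N(x,\nu^\star)] = \mathcal{T}(x,\nu^\star)$. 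Hence $\mu^\star(x)=\mathcal{T}(x,\mu^\star(x))$, and by the composition rule \eqref{eq:composition} also $\mu^\star(x)=\mathcal{T}^{\ell}(x,\mu^\star(x))$ for every $\ell\in\mathbb{N}$.

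Next I would bound a single step. Using the fixed-point property and nonexpansiveness of $\Pi_{\mathcal{N}}$, for any $\nu\in\mathbb{R}^{Nm}$,
\begin{align*}
 \|\mathcal{T}(x,\nu)-\mu^\star(x)\|
 &= \bigl\|\Pi_{\mathcal{N}}[\nu-\alpha\nabla_\nu J_N(x,\nu)]-\Pi_{\mathcal{N}}[\mu^\star(x)-\alpha\nabla_\nu J_N(x,\mu^\star(x))]\bigr\|\\
 &\le \bigl\|(\nu-\mu^\star(x))-\alpha\bigl(\nabla_\nu J_N(x,\nu)-\nabla_\nu J_N(x,\mu^\star(x))\bigr)\bigr\|\\
 &= \bigl\|(I-2\alpha H)(\nu-\mu^\star(x))\bigr\|\\
 &\le \|I-2\alpha H\|\,\|\nu-\mu^\star(x)\|,
\end{align*}
where the second equality uses that $\nabla_\nu J_N(x,\cdot)$ is affine with Jacobian $2H$. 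Applying this inequality $\ell$ times, each time anchored at the same fixed point $\mu^\star(x)$, gives $\|\mathcal{T}^\ell(x,\nu)-\mu^\star(x)\|\le\|I-2\alpha H\|^\ell\,\|\nu-\mu^\star(x)\|$.

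It remains to evaluate $\|I-2\alpha H\|$ for $\alpha = 1/(\lambda^+(H)+\lambda^-(H))$. Since $H$ is symmetric positive definite, $I-2\alpha H$ is symmetric with eigenvalues $1-2\alpha\lambda$ as $\lambda$ ranges over the spectrum of $H$, which lies in $[\lambda^-(H),\lambda^+(H)]$; therefore $\|I-2\alpha H\| = \max\{\,|1-2\alpha\lambda^-(H)|,\ |1-2\alpha\lambda^+(H)|\,\}$. Substituting the stated $\alpha$ yields $1-2\alpha\lambda^-(H)=\eta$ and $1-2\alpha\lambda^+(H)=-\eta$ with $\eta=(\lambda^+(H)-\lambda^-(H))/(\lambda^+(H)+\lambda^-(H))\in[0,1)$, so $\|I-2\alpha H\|=\eta$ and the claim follows. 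The computation is entirely mechanical; the only point requiring care is the first step — verifying that the constrained minimiser $\mu^\star(x)$ genuinely coincides with a fixed point of $\mathcal{T}(x,\cdot)$, so that the contraction is anchored at $\mu^\star(x)$ at every iteration — together with checking that the chosen $\alpha$ is exactly the value equalising (in magnitude) the two extreme eigenvalues of $I-2\alpha H$.
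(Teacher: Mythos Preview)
Your proof is correct and is the standard textbook argument for linear convergence of the projected gradient method on a strongly convex quadratic over a closed convex set. Note, however, that the paper does not supply its own proof of this statement: Theorem~\ref{the:pgm_contraction} is quoted verbatim from \cite{taylor2018exact} as a well-established result, so there is nothing in the paper to compare against. Your derivation---fixed-point characterisation of $\mu^\star(x)$, nonexpansiveness of $\Pi_{\mathcal{N}}$, and the spectral computation $\|I-2\alpha H\|=\eta$ for the specific step size---is exactly the classical route and would serve as a self-contained replacement for the citation.
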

The suboptimal MPC scheme is treated by considering the combined evolution of the system-optimizer dynamics \eqref{eq:suboptimal_combined}. The stability of such a scheme, also referred to as Time-Distributed MPC (TD-MPC) or as real-time implementation of MPC is shown in \cite{zanelli2020lyapunov,liao2020time,liao2021analysis} for a fixed number of iterations $\ell$ and in \cite{karapetyan2023finite} for a time-varying $\ell_k$. In particular, if $\ell_k>\ell^{\star}$ for all $k\in \mathbb{N}$, where 
\begin{equation*}
    \ell^{\star} = \frac{\log(1-\beta) - \log(\sigma\kappa + \omega(1-\beta))}{\log(\eta)},
\end{equation*}
where $\beta$ is the same as in Section \ref{sec:optimal_mpc},  $\omega = 1 + \|H^{-\frac{1}{2}}\|\|H^{-\frac{1}{2}}G\overline{B}\|$,  $\sigma = \|W^{\frac{1}{2}}\overline{B}\|$, and
\begin{align*}
\kappa &=\|H^{-\frac{1}{2}}\|\|H^{-\frac{1}{2}}G(A-I)P^{-\frac{1}{2}}\|\\
&+\|H^{-\frac{1}{2}}\|\sqrt{\lambda_H^+(G\overline{B})(\lambda_P^+(W)-1)},
\end{align*}
Then, the combined dynamics \eqref{eq:suboptimal_combined} are exponentially stable in the following forward invariant ROA estimate
\begin{align*}
    \Sigma_N = \biggl\{ (x, z)\! \in\! \Gamma_N\! \times\! \mathcal{N} \mid~& \psi(x)\! \leq\! r_N, \biggr. \\ 
        &\biggl. \|z-\mu^{\star}(x)\| \leq \frac{(1-\beta)r_{N}}{\sigma} \biggl\},
\end{align*} 
recalling $r_N$ from Section \ref{sec:optimal_mpc}. For the rest of the section we take $u_{-1}^\mu=0$, with the assumption that the resulting $u_{0}^\mu = \mathcal{T}^\ell(x_0,u_{-1}^\mu)$ is such that $(x_0,u_{0}^\mu)\in \Sigma_N$. For further analysis of the choice of the initial input and the forward invariant dynamics, we refer the interested reader to \cite{leung2021computable}.

The exponential stability result from \cite{karapetyan2023finite} is summarized in the following Lemma.
\begin{lemma}\label{lem:x_upper_bound_varying_l}
    \cite[Lemma 5]{karapetyan2023finite}
    Given the dynamics \eqref{eq:suboptimal_combined}, the following holds for  all $(x_0,u_{0}^\mu)\in \Sigma_N$, $k\in \mathbb{N}$ and  $\ell_k> \ell^{\star}$
\begin{equation*}
    \|x_k\| \leq   h_0\|P^{-\frac{1}{2}}\|\cdot\|x_0\|_W\prod_{i=-1}^k\varepsilon_i,
\end{equation*}
where $\varepsilon_k := \max\{\beta+\tau\kappa\eta^{\ell_k}, \frac{\sigma+\tau\eta^{\ell_k}\omega}{\tau}\} \in (0,1)$, $\varepsilon_{-1}:=1$ and $h_0 = 1+\tau\eta^{\ell_0}L\|W^{-\frac{1}{2}}\|$.
\end{lemma}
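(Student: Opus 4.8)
The plan is to reconstruct the argument of \cite[Lemma~5]{karapetyan2023finite}. Since $\|x_k\|\le\|P^{-\frac12}\|\,\|x_k\|_P\le\|P^{-\frac12}\|\,\psi(x_k)$ by \eqref{eq:V_bounds}, with $\psi(x)=\sqrt{J_N^\star(x)}$, it suffices to bound $\psi(x_k)$. To this end I would track the composite quantity $V_k:=\psi(x_k)+\tau\|d_k\|$, where $d_k:=u_k^\mu-\mu^\star(x_k)$ is the optimizer error and $\tau>0$ is a free scaling fixed later. The standing hypotheses ($(x_0,u_0^\mu)\in\Sigma_N$, $\ell_k>\ell^\star$) keep the pair $(x_k,u_{k-1}^\mu)$ in the forward-invariant set $\Sigma_N$, so the optimal-MPC properties of Section~\ref{sec:optimal_mpc} (the Lyapunov decrease \eqref{eq:exp_rate_optimal_v}, the bounds \eqref{eq:V_bounds}, and the Lipschitz estimate of Lemma~\ref{lem:lispchitz}) hold along the entire trajectory.

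Next I would derive two coupled one-step inequalities. Using $x_{k+1}=f(x_k)+\overline{B}d_k$ from \eqref{eq:lqmpc_optimal} and the fact that the minimizer of the POCP at parameter $f(x_k)$ remains feasible for the POCP at parameter $x_{k+1}$ (the constraint set $\mathcal{N}$ is parameter-independent), one gets $\psi(x_{k+1})\le\psi(f(x_k))+\|\overline{B}d_k\|_W\le\beta\psi(x_k)+\sigma\|d_k\|$ with $\sigma=\|W^{\frac12}\overline{B}\|$. For the optimizer error, Theorem~\ref{the:pgm_contraction} applied with warm start $u_k^\mu$ gives $\|d_{k+1}\|\le\eta^{\ell_{k+1}}\big(\|d_k\|+\|\mu^\star(x_{k+1})-\mu^\star(x_k)\|\big)$; writing $x_{k+1}-x_k=(A-I)x_k+\overline{B}\mu^\star(x_k)+\overline{B}d_k$ and invoking the sharp parametric-QP Lipschitz bound of Lemma~\ref{lem:lispchitz} together with $\|x_k\|_P\le\psi(x_k)$, $\mu^\star(\boldsymbol{0})=\boldsymbol{0}$, and an estimate of the form $\|\mu^\star(x_k)\|_H\le\sqrt{\lambda_P^+(W)-1}\,\psi(x_k)$, the three pieces collect into $\|\mu^\star(x_{k+1})-\mu^\star(x_k)\|\le\kappa\psi(x_k)+(\omega-1)\|d_k\|$ with precisely the $\kappa$ and $\omega$ of the statement, whence $\|d_{k+1}\|\le\eta^{\ell_{k+1}}\big(\kappa\psi(x_k)+\omega\|d_k\|\big)$.

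Combining the two gives $V_{k+1}\le(\beta+\tau\kappa\eta^{\ell_{k+1}})\psi(x_k)+(\sigma+\tau\omega\eta^{\ell_{k+1}})\|d_k\|\le\varepsilon_{k+1}V_k$ with $\varepsilon_{k+1}=\max\{\beta+\tau\kappa\eta^{\ell_{k+1}},\,(\sigma+\tau\omega\eta^{\ell_{k+1}})/\tau\}$. A short calculation shows the two intervals of admissible $\tau$ coming from the two arguments of the $\max$ overlap exactly when $\eta^{\ell_{k+1}}<(1-\beta)/(\sigma\kappa+\omega(1-\beta))=\eta^{\ell^\star}$, i.e. when $\ell_{k+1}>\ell^\star$, so one can fix a single $\tau$ making $\varepsilon_k\in(0,1)$ for all $k$ (this is what $\ell^\star$ encodes). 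Iterating yields $V_k\le\big(\prod_i\varepsilon_i\big)V_0$, and bounding the initial term via $\psi(x_0)\le\|x_0\|_W$ and $\|d_0\|=\|\mathcal{T}^{\ell_0}(x_0,\boldsymbol{0})-\mu^\star(x_0)\|\le\eta^{\ell_0}\|\mu^\star(x_0)\|\le\eta^{\ell_0}L\|W^{-\frac12}\|\,\|x_0\|_W$ gives $V_0\le h_0\|x_0\|_W$; absorbing the initial factors into the product then gives $\psi(x_k)\le h_0\|x_0\|_W\prod_{i=-1}^{k}\varepsilon_i$, and the claim follows from $\|x_k\|\le\|P^{-\frac12}\|\psi(x_k)$.

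The main obstacle is the optimizer-error recursion: controlling $\mu^\star(x_{k+1})-\mu^\star(x_k)$ — equivalently $H^{-\frac12}G(x_{k+1}-x_k)$ — purely in terms of $\psi(x_k)$ and $\|d_k\|$ requires the parametric-QP structure behind Lemma~\ref{lem:lispchitz}, the relation between $\|\mu^\star(x)\|_H$ and $\psi(x)$, and enough care with the input constraints to keep the value-function perturbation inequality $\psi(y+e)\le\psi(y)+\|e\|_W$ valid on $\Sigma_N$; matching the resulting constants to $\kappa$, $\omega$, $\sigma$ is then routine bookkeeping.
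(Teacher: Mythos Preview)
The paper does not supply its own proof of this lemma; it is quoted verbatim from \cite[Lemma~5]{karapetyan2023finite}. Your reconstruction follows exactly the composite-Lyapunov argument used there (and in \cite{liao2021analysis}): track $V_k=\psi(x_k)+\tau\|d_k\|$, derive the two coupled one-step recursions for $\psi$ and the optimizer error $d_k$, combine them into $V_{k+1}\le\varepsilon_{k+1}V_k$, and iterate --- so the approach and the key steps match.
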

If the same number of optimization iterations are taken at all times the bound reduces to the following.
\begin{corollary}\label{lem:x_upper_bound}
    \cite[Corollary 3]{karapetyan2023finite} Given the dynamics \eqref{eq:suboptimal_combined}, the following holds for  all  $\ell> \ell^{\star}$, $(x_0,u_{0}^\mu)\in \Sigma_N$ and  $k\in \mathbb{N}$
   \begin{equation*}
       \|x_k\| \leq h\|P^{-\frac{1}{2}}\|\cdot\|x_0\|_W \cdot  \varepsilon^k,
   \end{equation*}
   where  $\varepsilon := \max\{\beta+\tau\kappa\eta^{\ell}, \frac{\sigma+\tau\eta^{\ell}\omega}{\tau}\} \in (0,1)$ and $h = 1+\tau\eta^{\ell}L\|W^{-\frac{1}{2}}\|$.
\end{corollary}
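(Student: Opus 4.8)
The plan is to derive the bound as a direct specialization of Lemma \ref{lem:x_upper_bound_varying_l} to the case of a fixed number of optimizer iterations. First I would set $\ell_k = \ell$ for all $k \in \mathbb{N}$ in \eqref{eq:suboptimal_combined}. Since the quantity $\varepsilon_k = \max\{\beta + \tau\kappa\eta^{\ell_k},\, \frac{\sigma + \tau\eta^{\ell_k}\omega}{\tau}\}$ depends on the iteration index only through $\ell_k$, it reduces to the constant value $\varepsilon = \max\{\beta + \tau\kappa\eta^{\ell},\, \frac{\sigma + \tau\eta^{\ell}\omega}{\tau}\}$ for every $k$, and $h_0 = 1 + \tau\eta^{\ell_0} L\|W^{-\frac{1}{2}}\|$ reduces to $h = 1 + \tau\eta^{\ell} L\|W^{-\frac{1}{2}}\|$. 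The hypothesis $\ell > \ell^{\star}$ is precisely what Lemma \ref{lem:x_upper_bound_varying_l} requires (with $\ell_k \equiv \ell$) to ensure $\varepsilon \in (0,1)$ and that its estimate holds on all of $\Sigma_N$.

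Next I would evaluate the product appearing in Lemma \ref{lem:x_upper_bound_varying_l}. With $\varepsilon_{-1} = 1$ and $\varepsilon_i = \varepsilon$ for $i \ge 0$, one has $\prod_{i=-1}^{k}\varepsilon_i = \varepsilon^{k+1}$, so the lemma already yields $\|x_k\| \le h\,\|P^{-\frac{1}{2}}\|\,\|x_0\|_W\,\varepsilon^{k+1}$ for all $(x_0,u_0^\mu)\in\Sigma_N$ and $k \in \mathbb{N}$. Since $\varepsilon \in (0,1)$ gives $\varepsilon^{k+1} \le \varepsilon^{k}$, the claimed (slightly looser) bound $\|x_k\| \le h\,\|P^{-\frac{1}{2}}\|\,\|x_0\|_W\,\varepsilon^{k}$ follows.

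I do not expect any genuine obstacle here: the statement is a corollary in the literal sense, and the only points to check are the arithmetic of the product and that collapsing $\ell_k$ to a constant preserves $\varepsilon < 1$ under $\ell > \ell^{\star}$. Were one instead to prove it from first principles, the route would retrace the proof of Lemma \ref{lem:x_upper_bound_varying_l}: couple the PGM contraction of Theorem \ref{the:pgm_contraction} for the input error $\|u_k^\mu - \mu^{\star}(x_k)\|$ with the Lyapunov decrease \eqref{eq:exp_rate_optimal_v} for $\psi(x_k)$, verify that the pair $(x_k, u_k^\mu)$ stays in the forward-invariant set $\Sigma_N$, and iterate the resulting $2 \times 2$ system of inequalities to extract the common geometric rate $\varepsilon$ — that iteration, rather than the specialization itself, is where essentially all the work would lie.
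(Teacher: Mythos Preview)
Your proposal is correct. The paper does not give its own proof of this corollary (it is cited verbatim from \cite{karapetyan2023finite}), and your derivation --- specializing Lemma \ref{lem:x_upper_bound_varying_l} by setting $\ell_k \equiv \ell$, collapsing $\varepsilon_k$ and $h_0$ to the constants $\varepsilon$ and $h$, and noting $\prod_{i=-1}^k \varepsilon_i = \varepsilon^{k+1} \le \varepsilon^k$ --- is exactly the intended route.
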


\subsection{Suboptimality Gap}

We define the suboptimality vector, ${\tilde{\eta}}_\ell$, for this use case as
    \begin{equation*}
         {\tilde{\eta}}_\ell: = 
         \begin{bmatrix}
            \tilde{\eta}_{\ell,1} & \tilde{\eta}_{\ell,2} & \hdots & \tilde{\eta}_{\ell,T-1}
        \end{bmatrix},
    \end{equation*} 
where $\tilde{\eta}_{\ell,k} := \eta^{\ell_k}(1+\tilde{\eta}_{\ell,k+1}),\; k\in [0,T-2]$, and $\tilde{\eta}_{\ell,T-1} := \eta^{\ell_{T-1}}$. We denote the Euclidean norm of the suboptimality vector by $\bar{\eta}_{\ell}:= \|{\tilde{\eta}}_\ell\|$. The main result for a suboptimal MPC scheme is summarized in the following Theorem.

 \begin{theorem}
    \label{the:lqmpc_tv}Let Assumption \ref{assum:well_posed} hold, then if $\ell_k> \ell^{\star}, \ \forall k\in \mathbb{N}$, the suboptimality gap of the suboptimal MPC \eqref{eq:suboptimal_combined} is given by
    \begin{equation*}
        \mathcal{R}_T(x_0) = \mathcal{O}\left( \bar{\eta}_\ell  \|x_0\|\right),
    \end{equation*}
    for all $(x_0,u_{0}^\mu) \in \Sigma_N$. Specifically, it is bounded by
    \begin{equation*}
        \mathcal{R}_T(x_0) \leq \bar{M}\left(\tilde{\eta}_{\ell,0}\|\delta u_0\|+ L{\Delta}^\top{\tilde{\eta}_{\ell}} \right),
    \end{equation*}
    with $\bar{M}, \Delta$ defined as in Theorem \ref*{the:incurred_suboptimality} and $\delta u_0 = u_{-1}^{\mu} - \mu_0^\star(x_0)$. Moreover, if  $\ell_k=\ell>\ell^\star, \forall k\in \mathbb{N}$, then, for all $(x_0,u_{0}^\mu) \in \Sigma_N$
    \begin{equation*}
        \mathcal{R}_T(x_0) = \mathcal{O}\left(\frac{\eta^\ell}{1-\eta^\ell} \|x_0\|\right).
    \end{equation*}    

\end{theorem}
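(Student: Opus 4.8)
The plan is to derive Theorem~\ref{the:lqmpc_tv} as a specialization of Theorem~\ref{the:incurred_suboptimality} (and of Corollary~\ref{cor:incremental_stability_corollary} for the constant-iteration case), by checking that the suboptimal MPC loop \eqref{eq:suboptimal_combined}, with the optimal mapping \eqref{eq:MPC_POCP} playing the role of $\mu^\star$, satisfies Assumptions~\ref{assum:mu_star}, \ref{assum:mu_contractive}, \ref{assum:Lipschitz_g} and \ref{assum:stage_cost} on the forward invariant sets $\mathcal{D}^\star = \Gamma_N$ and $\mathcal{D}^\mu = \{x \mid (x,z)\in\Sigma_N\}$, with $g(x,\nu) = \overline{B}\nu$ and input set $\mathcal{N}$. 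Because \eqref{eq:lqmpc_optimal} and $\mu^\star$ are time-invariant, Assumption~\ref{assum:mu_star}.\ref{assum:slow} holds with $a_k \equiv 0$, so $a = \boldsymbol 0$ and the complexity term in Theorem~\ref{the:incurred_suboptimality} reduces to $\tilde\eta_0\|\delta u_0\| + L\,\Delta^\top\tilde\eta$.

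First I would dispose of the routine hypotheses. Assumption~\ref{assum:mu_contractive} is immediate from Theorem~\ref{the:pgm_contraction}: the warm-started iteration $u_k^\mu = \mathcal{T}^{\ell_k}(x_k, u_{k-1}^\mu)$ satisfies \eqref{eq:linear_convergence} with $\eta_k = \eta^{\ell_k}\in[0,1)$ and $\nu = u_{-1}^\mu$, and one checks directly that the weights $\tilde\eta_k$ of Theorem~\ref{the:incurred_suboptimality} then coincide with the $\tilde\eta_{\ell,k}$ defined before the theorem. Assumption~\ref{assum:Lipschitz_g} holds with $L_u = \|\overline{B}\|$ since $g$ is linear in the input. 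Assumption~\ref{assum:stage_cost} holds because the cost \eqref{eq:quadratic} is quadratic, hence Lipschitz on the compact sets over which the closed-loop state and input range when $(x_0,u_0^\mu)\in\Sigma_N$. Finally, Assumption~\ref{assum:mu_star}.\ref{assum:Lipschitz} is exactly Lemma~\ref{lem:lispchitz}.

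The substantive step is Assumption~\ref{assum:mu_star}.\ref{assum:ediss}, i.e.\ the \acrshort*{ediss} property of \eqref{eq:lqmpc_optimal} on $\Gamma_N$. Here I would invoke Corollary~\ref{cor:exp_implies_ediss}: the closed loop \eqref{eq:lqmpc_optimal} is uniformly exponentially stable on $\Gamma_N$ by the Lyapunov estimates \eqref{eq:V_bounds}--\eqref{eq:exp_rate_optimal_v} recalled in Section~\ref{sec:optimal_mpc}, and $\Gamma_N$ is compact (a sublevel set of $\psi$ with $\psi(x)\geq\|x\|_P$). It then remains to verify Assumption~\ref{assum:local_behavior} for $f(x) = Ax + \overline{B}\mu^\star(x)$: part~\ref{assum:local_lipschitz} is again Lemma~\ref{lem:lispchitz}, while part~\ref{assum:local_diff} follows because on a sufficiently small ball around the origin the input constraints in \eqref{eq:MPC_POCP} are inactive (the origin lies in $\mathrm{int}\,\mathcal{U}$ by Assumption~\ref{assum:well_posed}.\ref{assum:convex_u}), so there $\mu^\star$ reduces to the unconstrained finite-horizon LQR law, which is linear in $x$ and therefore smooth with constant, trivially Lipschitz Jacobian, and time-invariance makes the uniform-in-$k$ requirement automatic. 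I expect this constraint-inactivity / active-set argument to be the main obstacle: it is standard for parametric quadratic programs and explicit MPC but needs to be made precise, including that the relevant neighbourhood is nonempty and that strong regularity of the QP solution holds there.

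With all four assumptions established, Theorem~\ref{the:incurred_suboptimality} with $a=\boldsymbol 0$ gives $\mathcal{R}_T(x_0)\leq\bar M\big(\tilde\eta_{\ell,0}\|\delta u_0\| + L\,\Delta^\top\tilde\eta_\ell\big)$, the stated explicit bound. For the order estimate, I would (i) apply Cauchy--Schwarz, $\Delta^\top\tilde\eta_\ell \leq \|\Delta\|\,\|\tilde\eta_\ell\| = \mathcal{S}_{T,2}\,\bar\eta_\ell$; (ii) observe that $\ell_k>\ell^\star$ makes $\varepsilon_k$ in Lemma~\ref{lem:x_upper_bound_varying_l} bounded by a uniform $\bar\varepsilon\in(0,1)$ (as $\varepsilon_k$ is increasing in $\eta^{\ell_k}$ and thus largest at the smallest admissible $\ell_k$), so that $\|x_k\|\leq C\|x_0\|\bar\varepsilon^{\,k}$, $\|\delta x_k\|\leq C'\|x_0\|\bar\varepsilon^{\,k}$, and, summing the geometric series, $\mathcal{S}_{T,2}\leq C''\|x_0\|$ uniformly in $T$; and (iii) bound $\|\delta u_0\| = \|\mu^\star(x_0)\| = \|\mu^\star(x_0)-\mu^\star(\boldsymbol 0)\|\leq L\|x_0\|$ via Lemma~\ref{lem:lispchitz} and $\mu^\star(\boldsymbol 0)=\boldsymbol 0$; together these yield $\mathcal{R}_T(x_0)=\mathcal{O}(\bar\eta_\ell\|x_0\|)$. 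For $\ell_k\equiv\ell$, I would instead apply Corollary~\ref{cor:incremental_stability_corollary} with $\eta=\eta^\ell$ and $a=\boldsymbol 0$ to get $\mathcal{R}_T(x_0)=\mathcal{O}\big(\tfrac{\eta^\ell}{1-\eta^\ell}\mathcal{S}_T\big)$, and bound $\mathcal{S}_T=\|\Delta\|_1\leq C\|x_0\|$ through Corollary~\ref{lem:x_upper_bound}, giving $\mathcal{R}_T(x_0)=\mathcal{O}\big(\tfrac{\eta^\ell}{1-\eta^\ell}\|x_0\|\big)$.
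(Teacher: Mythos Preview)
Your proposal is correct and follows essentially the same route as the paper: verify Assumptions~\ref{assum:mu_star}--\ref{assum:stage_cost} for the MPC setting (with the \acrshort{ediss} step handled via Corollary~\ref{cor:exp_implies_ediss}, using constraint inactivity near the origin to secure local differentiability), then invoke Theorem~\ref{the:incurred_suboptimality} with $a=\boldsymbol 0$, and finally bound the pathlength through Lemma~\ref{lem:x_upper_bound_varying_l}/Corollary~\ref{lem:x_upper_bound}. The only cosmetic differences are that the paper phrases the local-smoothness argument via the piecewise-affine explicit-MPC characterization and the nonemptiness of $\Omega$, and that it redefines the stage cost with $\overline{R}=S^\top R S$ to match the lifted input $\bar u_k\in\mathbb{R}^{Nm}$ before checking Assumption~\ref{assum:stage_cost}; your additional observation that $\|\delta u_0\|\leq L\|x_0\|$ (since $u_{-1}^\mu=0$ and $\mu^\star(\boldsymbol 0)=\boldsymbol 0$) is a clean way to absorb that term into the $\mathcal{O}(\bar\eta_\ell\|x_0\|)$ estimate.
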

\begin{proof}
    We start by showing that Assumptions \ref{assum:mu_star}-\ref{assum:stage_cost} are satisfied in the MPC use-case. In this setting, the linear dynamics to be controlled are given by
    \begin{equation}\label{eq:lqmpc_modified}
        x_{k+1} = Ax_k +\overline{B} \bar{u}_k, \quad k \in \mathbb{N}, 
    \end{equation}
    with the input $\bar{u}_k \in \mathbb{R}^{Nm}$. First, we note that Assumption \ref*{assum:Lipschitz_g} is satisfied trivially with $L_u=\|\overline{B}\|$. The benchmark controller is the optimal policy $\mu^\star$ that solves the POCP \eqref{eq:MPC_POCP}, and is given in \eqref{eq:lqmpc_optimal}. 

    Taking $\mathcal{D}^\star$ to be the forward invariant \acrshort*{roa} estimate $\mathcal{D}^\star = \Gamma_N $, it follows from Lemma \ref{lem:lispchitz} that Assumption \ref{assum:mu_star}.\ref{assum:Lipschitz} is satisfied. Since the policy is time-invariant, Assumption \ref*{assum:mu_star}.\ref*{assum:slow} is satisfied trivially with $a_k=0$ for all $k$. To show that the  Assumption \ref{assum:mu_star}.\ref{assum:ediss}  is  satisfied, we use the analysis from Section \ref{sec:sufficient_conditions}. Exploiting the structure of the optimization problem \eqref{eq:MPC_POCP}, it has been shown in \cite{bemporad2002explicit} that  under Assumption \ref{assum:well_posed}, the solution of the MPC problem is piecewise-affine in the state. Moreover, the set $\Omega$, that defines a forward invariant set around the origin for the dynamics \eqref{eq:lqmpc_optimal} with inactive input constraints, is non-empty \cite{bertsekas2012dynamic}. This and Lemma \ref{lem:lispchitz} imply that Assumption \ref{assum:local_behavior} is satisfied. As discussed in Section \ref*{sec:optimal_mpc}, the benchmark dynamics \eqref{eq:lqmpc_optimal} are exponentially stable in $\mathcal{D}^\star$, therefore, by Corollary \ref{cor:exp_implies_ediss} we conclude that they are also \acrshort*{ediss} in the same region.  The suboptimal policy is given by \eqref{eq:suboptimal_input}. The results in \cite{karapetyan2023finite} show that for all $\ell_k \geq \ell^\star$, $\Sigma_N$ is a forward invariant ROA estimate for the combined dynamics \eqref{eq:suboptimal_combined}. Given this and an initial $u_{-1}^\mu =0$, consider $\mathcal{D}^\mu =\{x\in \mathbb{R}^n \ | \ (x,\mathcal{T}^\ell(x,u_{-1}^\mu)) \in \Sigma_N\}$. Then Assumption \ref{assum:mu_contractive} is satisfied directly from Theorem \ref{the:pgm_contraction} and the suboptimal policy definition.

    Finally, to reconcile the quadratic cost defined in \eqref{eq:quadratic} and the modified dynamics \eqref{eq:lqmpc_modified}, we redefine the cost, as
    \begin{equation*}
        J_T(x_0, \bar{\boldsymbol{u}}) = \|x_T\|^{2}_P+\sum_{k=0}^{T-1}\|x_k\|^{2}_Q+ \|\bar{u}_k\|^2_{\overline{R}},
    \end{equation*}
    where $\overline{R} = S^\top RS$. For the quadratic costs in \eqref{eq:quadratic}, and for any $(x,y) \in \mathcal{D}^\mu \times \mathcal{D}^\mu$, and $(u,z) \in (\mathcal{N}\times \mathcal{N})$
    \begin{align*}
        |x^\top Qx - y^\top Qy &+ u^\top R u - z^\top R z|\\
        \leq & 2\|x-y\|\|Q\|x_m + 2 \|u-z\| \|R\|u_m,
    \end{align*}
    where $x_m$ and $u_m$ are such that, $\|x\|\leq x_m$ and $\|u\|\leq u_m$ for all $x\in \mathcal{D}^\mu, u\in \mathcal{N}$. Then the condition in Assumption \ref{assum:stage_cost} is satisfied with $M_x= 2x_m  \max\{\|Q\|,\|P\|\}$ and $M_u = 2u_m \|\overline{R}\|$. 

    Since Assumption \ref{assum:well_posed} implies Assumptions \ref{assum:mu_star}-\ref{assum:stage_cost} are satisfied, we can invoke the bound in Theorem \ref{the:incurred_suboptimality} for the suboptimality  gap. As the suboptimal dynamics are exponentially stable, its pathlength is finite. In particular
    \begin{align*}
        \mathcal{S}_{T,2}^2 &= \sum_{k=1}^{T}\|x_k-x_{k-1}\|^2 \leq 4\sum_{k=0}^{T}\|x_k\|^2\\
        &\leq 
        c_0^2 \|x_0\|^2\sum_{k=0}^{T}\prod_{i=0}^{k}\varepsilon^2_{i-1} \leq \frac{c_0^2}{1-\bar{\varepsilon}^2} \|x_0\|^2,
    \end{align*}
    where the first inequality follows by the triangle inequality, the second from Lemma \ref{lem:x_upper_bound_varying_l} and by denoting $c_0 : = 2 h_0\|P^{-\frac{1}{2}}\|\|W^{\frac{1}{2}}\|$, and the last one by bounding the geometric series and  denoting $\bar{\varepsilon}: = \max\{\varepsilon_i\}_{i=0}^{T-1}$. Noting that $a_k=0$ for all $k\in \mathbb{N}$ in this example, the suboptimality gap is  bounded by $\mathcal{O}(\mathcal{S}_{T,2}\bar{\eta}_\ell) = \mathcal{O}(\bar{\eta}_\ell \|x_0\|)$.

    In the case when $\ell_k = \ell>\ell^\star$ for all $k\in \mathbb{N}$, one can use the bound derived in Corollary \ref{cor:incremental_stability_corollary}, with a simplified expression for the pathlength
    \begin{equation*}
        \mathcal{S}_T \leq \frac{c}{1-\varepsilon}\|x_0\|.
    \end{equation*}
    The above is obtained directly from the bound in Corollary \ref{lem:x_upper_bound} and by denoting $c:  = 2 h\|P^{-\frac{1}{2}}\|\|W^{\frac{1}{2}}\|$. 
\end{proof}

The theorem shows that the suboptimality gap of the MPC suboptimal controller \eqref{eq:suboptimal_combined} scales with $\tilde{\eta}_\ell$ or $\eta^\ell$, where the number of iterations $\ell_k$ are design parameters. Note that the higher $\ell_k$ the more computation is required at each timestep, but the lower the suboptimality gap; in the limit, as $\ell \rightarrow \infty$, the suboptimality gap is zero. This is a tighter result than the one derived in \cite{karapetyan2023finite}, as in the latter no incremental properties of the optimal controller are used. Specifically, when looking at the limit case of $\eta_k = 0, \; k\in \mathbb{N}$, the suboptimality gap in \cite{karapetyan2023finite} is strictly positive, while the bound in Theorem \ref*{the:lqmpc_tv} vanishes, reflecting the exact matching of the suboptimal and benchmark trajectories. The derived bounds can be used by control designers to give a quantifiable measure of the finite-time suboptimality of the controller. This can then be utilized to find the best sequence of $\ell_k$ to deliver a desired tradeoff between suboptimality and computational power.

In practice, the suboptimal MPC can be asymptotically stable even when the number of optimization iterations $\ell$ is less than $\ell^\star$. In this case, the existence of a forward invariant region of attraction $\mathcal{D}^\mu$ is not given, but the bounds in Theorem \ref*{the:incurred_suboptimality} and Corollary \ref*{cor:incremental_stability_corollary} still hold, as long as the closed-loop suboptimal system stays stable. This is shown in the following numerical example.

\subsection{Numerical Example}
\label{sec:numerical}
The suboptimal TD-MPC scheme described in this section is implemented for  the following linearized, continuous-time model  of an inverted pendulum from \cite{leung2021computable}, \cite{karapetyan2023finite}
\begin{equation*}
    A_c=\left[\begin{array}{cc}
0 & 1  \\
14.7 & 0 \\
\end{array}\right],\: B_c =\left[\begin{array}{c}
0 \\
30
\end{array}\right],
\end{equation*}
where the state is $x= [\theta,~\dot{\theta}]^\top$, $\theta$ is the angle relative to the equilibrium position and the control input is the applied torque. We consider the control of the discretized model of the plant with a sampling time of $T_s=0.1$. The input constraint set is taken to be $\mathcal{U}= [-1,1]$, the cost matrices are  $Q =  I_2$, and $R = 1$ and the initial state is $x_0 = [-\pi/4~\pi/3]^{\top}$.

\begin{figure}%
    \begin{center}
    \includegraphics[width=\columnwidth]{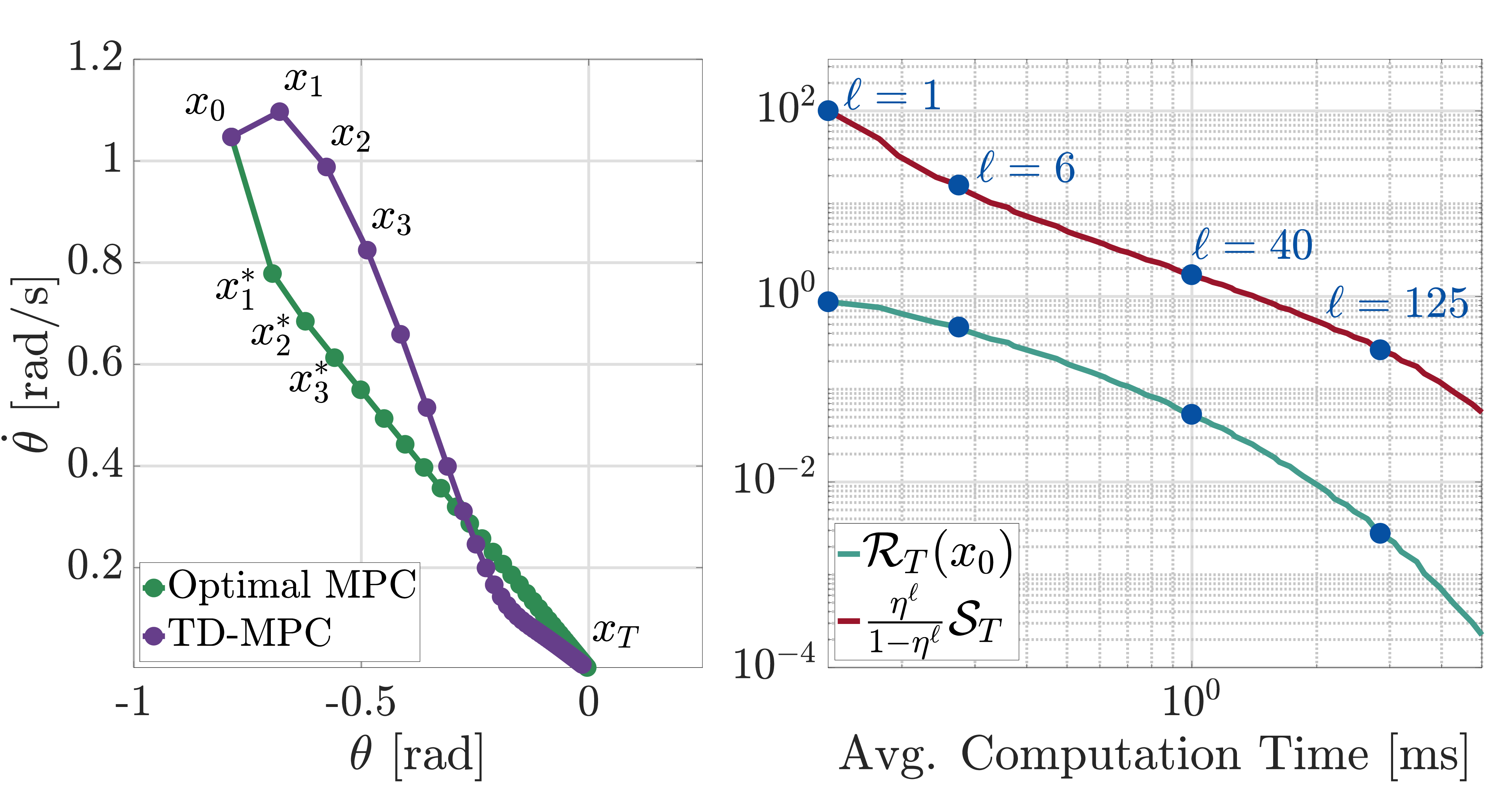}   
    \caption{The phase plot on the left shows two separate trajectories generated by applying respectively TD-MPC, with a constant $\ell=6$ (in green) and optimal MPC policies (in purple). The logarithmic scale plot on the right shows the empirical suboptimality gap (in green), as well as the order of the upper bound decrease (in red) as $\ell$, and therefore the computation time is increased. The two curves on the right are parameterized by $\ell$, ranging from $1$ to $5000$.}
    \label{fig:mpc_trajectories}
    \end{center}
\end{figure}

The left panel of Figure \ref*{fig:mpc_trajectories} shows the evolution of two trajectories in closed-loop with the TD-MPC policy with $\ell=6$ and with an optimal MPC. For this example $\ell^\star=849$. However, even with the  low value of $\ell=6$ the closed-loop system stays stable, as also observed in \cite{liao2021analysis,leung2021computable}. Although the asymptotic/exponential stability cannot be proven, the finite time bounds can still be computed online using only the suboptimal states as per Corollary \ref*{cor:incremental_stability_corollary}. The order of this upper bound, as well as the empirically observed suboptimality gap, $\mathcal{R}_T(x_0)$, are plotted in the right panel of Figure \ref*{fig:mpc_trajectories} for $T=30$ for a range of values of $\ell$, increasing from $1$ to $5000$. The decrease of the suboptimality gap for increasing values of $\ell$ is juxtaposed with the increase of simulation/computational time in the same figure. The simulation time for each $\ell$ is calculated as the sum of the times it takes to solve the TD-MPC for each timestep, over the horizon $T$.  To obtain an averaged value for this time, its average over $100$ repeated independent runs from the same initial conditions is taken. The initial states are intialized in $\Sigma_N$ following the procedure described in \cite{leung2021computable}. In the right panel of the figure, only the complexity $\frac{\eta^\ell}{1-\eta^\ell}\mathcal{S}_T$ of the suboptimality gap upper bound is plotted. The true constant multiplying the complexity term above is much larger; our aim here is not to compare the very conservative theoretical bound with the practical performance, but to give an estimate of whether the bound captures the order correctly. Indeed, it can be observed from the figure that the order of the true suboptimality gap is approximately captured by the upper bound with an underestimation.

Among other possible uses of the closed-loop suboptimality analysis, the insights in the figure can be used to design the allocation of finite computational resources. The right-side plot in the figure can be used to estimate the relative gain in computational time and loss in optimality for a given change in $\ell$. For example, a change of $\ell=6$ to $\ell=40$, or equivalently $\eta^\ell = 0.92$ to $\eta^\ell=0.56$ results in a $3.6$ times increase in computational time and a $9.3$ times decrease in the suboptimality gap bound. This provides an approximation for the variation of the true subopimality gap that decreased $8.7$ times in the same interval.

\section{Conclusions}

We study the finite-time suboptimality gap of policies for discrete-time nonlinear, time-varying systems. We show that when the benchmark policy is chosen to be exponentially incrementally stable, then given a geometric improvement condition on the suboptimal policy, its suboptimality gap scales with its pathlength and improvement factor. We further show, that for non-smooth nonlinear systems \acrshort*{ediss} is implied by exponential stability under certain conditions. The assumptions are verified on the suboptimal linear quadratic MPC use case and on a numerical example. The generality of the provided analysis enables the study of other examples where the suboptimality is due to unknown system parameters or cost functions, for example in the fields of adaptive and online control. We leave the analysis of these use cases as well as the treatment of measurement and process noise to future work.
\begin{appendix}
\subsection*{Proof of Lemma \ref*{lem:delta_V_bound}}
\begin{proof}
Given a state $\xi\in \mathcal{D}$ and some $N\in \mathbb{R}_+$, let
\begin{equation*}
    V(k_0,\xi) = \sum_{k=0}^{N-1}\phi(k+k_0,k_0,\xi)^\top \phi(k+k_0,k_0,\xi).
\end{equation*}
Then 
\begin{align*}
    V(k_0&,\xi)=\\
     &\xi^\top \xi + \sum_{k=1}^{N-1}\phi(k+k_0,k_0,\xi)^\top \phi(k+k_0,k_0,\xi) \geq \|\xi\|^2,
\end{align*}
and, from Definition \ref{def:exponential_stability}
\begin{equation*}
    V(k_0,\xi) \leq \sum_{k=0}^{N-1}d^2\|\xi\|^2\lambda^{2k}\leq \frac{d^2}{1-\lambda^{2}}\|\xi\|^2.
\end{equation*}
Thus, \eqref{eq:converse_lyapunov_1} is satisfied with $c_1=1$ and $c_2 = \frac{d^2}{1-\lambda^{2}}$. To show that \eqref{eq:converse_lyapunov_2} holds, consider
\begin{align*}
    &V(k_0 +1, f(k_0,\xi)) - V(k_0,\xi)\\
    &= \sum_{k=0}^{N-1} \left(\|\phi(k+k_0+1,k_0,\xi)\|^2 - \|\phi(k+k_0,k_0,\xi)\|^2\right)\\
    &= \|\phi(N+k_0,k_0,\xi)\|^2 - \|\xi\|^2 \leq d^2\lambda^{2N}\|\xi\|^2 - \|\xi\|^2\\
    &=- \left(1-d^2\lambda^{2N}\right)\|\xi\|^2.
\end{align*}
Choosing $N$ large enough such that  $d^2\lambda^{2N}<1$, ensures \eqref{eq:converse_lyapunov_2} holds with $\beta^2 = 1 - \frac{\left(1-d^2\lambda^{2N}\right)}{c_2}\in (0,1)$ since $c_2\geq c_1 =1$ and $c_2 \in \mathbb{R}_+$. Finally, for some $(\xi,\zeta)\in \mathcal{D} \times \mathcal{D}$ and $k_0 \in \mathbb{N}$, denote $\Delta \phi(k,k_0,\xi,\zeta):= \phi(k+k_0,k_0,\xi)-\phi(k+k_0,k_0,\zeta)$ and consider
\begin{align*}
    &|V(k_0,\xi) - V(k_0,\zeta)| \\
    &=\left|\sum_{k=0}^{N-1} \left(\|\phi(k+k_0,k_0,\xi)\|^2 - \|\phi(k+k_0,k_0,\zeta)\|^2\right)\right| \\
    \begin{split}
    &\leq \sum_{k=0}^{N-1}\left(\|\phi(k+k_0,k_0,\xi)\|+\|\phi(k+k_0,k_0,\zeta)\|\right)\\
    &\qquad \qquad \qquad \qquad \qquad \qquad \qquad\cdot\|\Delta \phi(k,k_0,\xi,\zeta)\|
    \end{split}\\
    &\leq \sum_{k=0}^{N-1}d\lambda^k\left(\|\xi\|+\|\zeta\|\right)\cdot L_f^k\|\xi-\zeta\|\\
    &= \left(\|\xi\|+\|\zeta\|\right)\|\xi-\zeta\| \sum_{k=0}^{N-1}d\lambda^k L_f^k,
\end{align*}
where the last inequality follows from the  exponential stability and $L_f$-Lipschitz continuity of the nonlinear mapping. Taking $c_3 =  \sum_{k=0}^{N-1}d\lambda^k L_f^k$ completes the proof.
\end{proof}

\subsection*{Proof of Theorem \ref{the:demidovich}}
\begin{proof}
    The condition in \eqref{eq:demidovich} implies the contraction of the dynamics within the contraction region $\mathcal{\tilde{D}}_0$, as defined in \cite{lohmiller1998contraction}. We will first show that there exists a $r_{\mathcal{\tilde{D}}_0}$  such that for all initial states $x \in \mathcal{B}(r_{\mathcal{\tilde{D}}_0})$ the dynamics \eqref{eq:nonlinearsystem} evolve in a forward invariant set contained in $\mathcal{\tilde{D}}_0$. To see this, for some $r \in \mathbb{R}_+$, and $k \in \mathbb{N}_{\geq k_0}$ define $
    \mathcal{E}_0(k,r):= \{x\in \mathbb{R}^n \ | \ \|x\|_{P(k)} \leq r\}$, and $r^\star:= \max\{r>0 \ |  \mathcal{E}_0(k,r) \subseteq \mathcal{\tilde{D}}_0, ,\;\forall k \in \mathbb{N}_{\geq k_0}\}$. This implies that if a state satisfies $x\in \mathcal{E}_0(k,r^\star)$, then  $x\in \mathcal{\tilde{D}}_0$ necessarily. The largest ball contained in the intersection of the sets $\mathcal{E}_0(k,r^\star),\; k \in \mathbb{N}_{\geq k_0}$ is then given by $\mathcal{B}(\frac{r^\star}{\bar{P}})$, where $P(k)\preceq \bar{P}I$ for all $k\in \mathbb{N}_{\geq k_0}$, and $\bar{P}>0$ exists since $P(k)$ is uniformly bounded. Consider now the differential displacement  $\delta x$ for the dynamics \eqref{eq:nonlinearsystem}, defining an infinitesimal displacement at a given time $k$. The shortest path integral with respect to the metric $P(k)$ between $0$ and a given state $x_k$ at time $k$ is given by
    \begin{equation*}
            V_{\ell}(\delta x_k, k):= \int_0^{x_k}\|\delta x\|_{P(k)} = \|x_k\|_{P(k)},
    \end{equation*}
    where the second equality follows from the fact that the shortest path integral corresponds to the Riemannian distance (see e.g.  \cite{singh2017robust,tran2018convergence, tsukamoto2021contraction}). Using the same arguments as in \cite[Thm 2.8]{tsukamoto2021contraction} or \cite[Thm. 15]{tran2018convergence}, one can then show that for any $x_k \in \mathcal{B}(\frac{r^\star}{\bar{P}})$, it holds that $V_{\ell}(\delta x_{k+1}, k+1) \leq \rho V_{\ell}(\delta x_k, k) < r^\star$. This, in turn, implies that $\|x_{k+1}\|_{P(k+1)} < r^\star \implies x_{k+1} \in \mathcal{E}_0(k+1, r^\star) \implies x_{k+1} \in \mathcal{\tilde{D}}_0 $. Taking $r_{\mathcal{\tilde{D}}_0} = \frac{r^\star}{\bar{P}}$ completes the claim that the dynamics are forward invariant within $\mathcal{\tilde{D}}_0$. For any initial state in $\mathcal{B}(r_{\mathcal{\tilde{D}}_0})$, uniform local exponential incremental stability with a rate $\rho$ then follows from standard Lyapunov  arguments, see e.g.\cite[Thm 2.8]{tsukamoto2021contraction}. 
\end{proof}

\subsection*{Proof of Theorem \ref*{the:converse_edelta_s}}
The proof hinges on extending results from \cite{jiang2002converse},\cite{sontag1996new}, \cite{angeli2002lyapunov} and \cite{tran2018convergence}. Before presenting the proof, we provide the following auxiliary definition and theorem. 

\begin{definition} \label{def:local_set_exponential_stability}
\cite{sontag1996new,jiang2002converse} Given a closed, positively invariant set $\mathcal{A}\subset \mathbb{R}^n$, the system \eqref{eq:nonlinearsystem} is uniformly exponentially stable in $\mathcal{D}$ with respect to $\mathcal{A}$, if  there exist constants $d \in \mathbb{R}_+$ and $\lambda\in (0,1)$, such that for all $\xi \in \mathcal{D}$ and $k \in \mathbb{N}_{\geq k_0}$
\begin{equation*}
    |\phi(k,k_0,\xi)|_{\mathcal{A}} \leq  d|\xi|_{\mathcal{A}} \lambda^{k-k_0}.
\end{equation*}
\end{definition}
The extension of the converse Lyapunov results from \cite{angeli2002lyapunov} and \cite{jiang2002converse} to the  case of stability in a forward invariant region is included below for completeness.
\begin{theorem}
    \label{the:set_lyapunov_function}
    If the system \eqref{eq:nonlinearsystem} is uniformly  exponentially stable in $\mathcal{D}$ with respect to some closed set $\mathcal{A}$, then there exists a function $V: \mathbb{N}_{\geq k_0}\times \mathcal{D}\rightarrow \mathbb{R}_+$ and constants $c_1,c_2\in \mathbb{R}_+$ and $\beta \in (0,1)$, such that for all $\xi \in \mathcal{D}$, and $k \in \mathbb{N}_{\geq k_0}$
    \begin{align}
        \label{eq:set_lyapunov_function_1}
        c_1 |\xi|^2_{\mathcal{A}} \leq V(k,\xi) \leq c_2 |\xi|^2_{\mathcal{A}},\\
        \label{eq:set_lyapunov_function_2}
        V\left(k+1,f(k,\xi)\right) \leq \beta^2 V(k,\xi).
    \end{align}
\end{theorem}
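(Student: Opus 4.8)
The plan is to construct the Lyapunov function $V$ explicitly as a finite-horizon sum of squared point-to-set distances along the trajectory, mirroring the appendix proof of Lemma~\ref{lem:delta_V_bound} and the discrete-time converse arguments of \cite{jiang2002converse,angeli2002lyapunov}, but with $\|\cdot\|$ replaced by $|\cdot|_{\mathcal{A}}$ and everything restricted to the forward invariant set $\mathcal{D}$. Concretely, I would fix an integer $N\in\mathbb{N}_+$ (to be chosen at the end) and set, for $(k_0,\xi)\in\mathbb{N}_{\geq k_0}\times\mathcal{D}$,
\begin{equation*}
    V(k_0,\xi) := \sum_{j=0}^{N-1}\left|\phi(k_0+j,k_0,\xi)\right|_{\mathcal{A}}^2 .
\end{equation*}
Because $\mathcal{D}$ is forward invariant, each iterate $\phi(k_0+j,k_0,\xi)$ lies in $\mathcal{D}$, so $V$ is well-defined and non-negative on $\mathbb{N}_{\geq k_0}\times\mathcal{D}$. (If continuity of $V$ is wanted for the downstream use in Theorem~\ref{the:converse_edelta_s}, it follows from continuity of $f$, hence of each solution map $\phi(k_0+j,k_0,\cdot)$, and the $1$-Lipschitz continuity of $x\mapsto|x|_{\mathcal{A}}$.)

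Next I would establish the sandwich bound \eqref{eq:set_lyapunov_function_1}. The lower bound is immediate since the $j=0$ term equals $|\xi|_{\mathcal{A}}^2$, giving $c_1=1$. For the upper bound, uniform exponential stability with respect to $\mathcal{A}$ (Definition~\ref{def:local_set_exponential_stability}) gives $|\phi(k_0+j,k_0,\xi)|_{\mathcal{A}}\leq d\,\lambda^{j}|\xi|_{\mathcal{A}}$, so that $V(k_0,\xi)\leq d^2|\xi|_{\mathcal{A}}^2\sum_{j=0}^{N-1}\lambda^{2j}\leq \tfrac{d^2}{1-\lambda^2}|\xi|_{\mathcal{A}}^2$, i.e. $c_2=\tfrac{d^2}{1-\lambda^2}\geq c_1$.

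For the decrease \eqref{eq:set_lyapunov_function_2}, the key is the cocycle identity $\phi(k,m,\phi(m,k_0,\xi))=\phi(k,k_0,\xi)$ for the time-varying flow. Since $f(k_0,\xi)=\phi(k_0+1,k_0,\xi)$, this gives $\phi(k_0+1+j,k_0+1,f(k_0,\xi))=\phi(k_0+1+j,k_0,\xi)$, so that
\begin{equation*}
    V(k_0+1,f(k_0,\xi)) = \sum_{j=0}^{N-1}\left|\phi(k_0+1+j,k_0,\xi)\right|_{\mathcal{A}}^2 = \sum_{i=1}^{N}\left|\phi(k_0+i,k_0,\xi)\right|_{\mathcal{A}}^2 .
\end{equation*}
Subtracting $V(k_0,\xi)$ telescopes to $V(k_0+1,f(k_0,\xi))-V(k_0,\xi)=|\phi(k_0+N,k_0,\xi)|_{\mathcal{A}}^2-|\xi|_{\mathcal{A}}^2\leq(d^2\lambda^{2N}-1)|\xi|_{\mathcal{A}}^2$. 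Choosing $N$ large enough that $d^2\lambda^{2N}<1$ and using $|\xi|_{\mathcal{A}}^2\geq V(k_0,\xi)/c_2$ yields $V(k_0+1,f(k_0,\xi))\leq\beta^2 V(k_0,\xi)$ with $\beta^2:=1-\tfrac{1-d^2\lambda^{2N}}{c_2}$, which lies in $(0,1)$ because $1-d^2\lambda^{2N}>0$ and $c_2\geq c_1=1$.

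I do not expect a genuine obstacle here: the construction is a routine adaptation of the classical discrete converse Lyapunov argument, and it parallels almost verbatim the proof of Lemma~\ref{lem:delta_V_bound} already in the appendix. The only points requiring care are bookkeeping ones, namely (i) checking that restricting to the forward invariant $\mathcal{D}$ leaves $V$ and all iterates well-defined (it does, precisely by forward invariance) and (ii) invoking the cocycle identity correctly so that the telescoping collapses as above; both are dispatched by the indexing shown.
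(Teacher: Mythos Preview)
Your proposal is correct and follows essentially the same approach as the paper's proof: the same finite-horizon Lyapunov candidate $V(k_0,\xi)=\sum_{j=0}^{N-1}|\phi(k_0+j,k_0,\xi)|_{\mathcal{A}}^2$, the same constants $c_1=1$, $c_2=\tfrac{d^2}{1-\lambda^2}$, the same telescoping argument for the decrease, and the same choice of $N$ and $\beta^2=1-\tfrac{1-d^2\lambda^{2N}}{c_2}$. Your write-up is slightly more explicit about the cocycle identity and forward invariance, but substantively it is the paper's proof.
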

\begin{proof}
    Following the same line of arguments from the proof of Lemma \ref*{lem:delta_V_bound}, consider a $x\in \mathcal{D}$ and some $N\in \mathbb{R}_+$ and let
    \begin{equation}
        \label{eq:lyapunov_set}
        V(k_0,\xi) = \sum_{k=0}^{N-1}|\phi(k+k_0,k_0,\xi)|^2_{\mathcal{A}},
    \end{equation}
    for some $k_0 \in \mathbb{N}$. Then,
    \begin{equation*}
        V(k_0,\xi) = |\xi|^2_{\mathcal{A}} + \sum_{k=1}^{N-1}|\phi(k+k_0,k_0,\xi)|^2_{\mathcal{A}} \geq |\xi|^2_{\mathcal{A}},
    \end{equation*}
    and, from Definition \ref*{def:local_set_exponential_stability}
    $$
    V(k_0,\xi) \leq \frac{d^2}{1-\lambda^2}|\xi|^2_{\mathcal{A}}.$$
    Next, consider
    \begin{align*}
        &V(k_0+1,f(k_0,\xi)) - V(k_0,\xi)\\
         &= \sum_{k=0}^{N-1} \left(|\phi(k_0+k+1,k_0,\xi)|^2_{\mathcal{A}} - |\phi(k_0+k,k_0,\xi)|^2_{\mathcal{A}}\right)\\
        &= |\phi(k_0+N,\xi)|^2_{\mathcal{A}} - |\xi|^2_{\mathcal{A}} \leq d^2\lambda^{2N}|\xi|^2_{\mathcal{A}} - |\xi|^2_{\mathcal{A}}\\
        &= - \left(1-d^2\lambda^{2N}\right)|\xi|^2_{\mathcal{A}}.
    \end{align*}
    Choosing $N$ large enough such that $d^2\lambda^{2N}<1$ completes the proof with $c_1 =1$, $c_2  = \frac{d^2}{1-\lambda^2}$, and $\beta^2 = 1 - \frac{\left(1-d^2\lambda^{2N}\right)}{c_2}\in (0,1)$ since $c_2\geq c_1 =1$ and $c_2 \in \mathbb{R}_+$.
\end{proof}

\begin{proof}[Proof of Theorem \ref{the:converse_edelta_s}]
    Consider the augmented system
    \begin{equation*}
        \left\{\begin{array}{l}
            \xi_{k+1}=f(k,\xi_k) \\
            \zeta_{k+1}=f(k,\zeta_k).
            \end{array}\right.
    \end{equation*}
We define the diagonal as the set $\Delta : = \{\left[\xi^\top , \xi^\top\right]^\top \in \mathcal{D} \times \mathcal{D}: \xi \in \mathcal{D}\}$. Let $z:= [\xi^\top, \zeta^\top]^\top \in \mathcal{D} \times \mathcal{D}$, then it is shown in \cite{angeli2002lyapunov} that 
\begin{equation}
    \label{eq:set_equivalence}
    |z|_{\Delta} = \frac{\sqrt{2}}{2} \|\xi-\zeta\|.
\end{equation}
Then, considering the evolution of the combined system
\begin{equation}
    \label{eq:combined_incremental_system}
    z_{k+1} = \mathcal{F}(k,z_k): = \begin{bmatrix}
        f(k,\xi_k)\\
        f(k,\zeta_k)
    \end{bmatrix},
\end{equation}
 one can note that $z_k \in \mathcal{D}\times \mathcal{D}$ for all $k\in \mathbb{N}_{\geq k_0}$, for any $(\xi_{k_0}, \zeta_{k_0}) \in \mathcal{D} \times \mathcal{D}$, since $\mathcal{D}$ is forward invariant. It follows that system \eqref{eq:nonlinearsystem} is exponentially incrementally stable in $\mathcal{D}$ if and only if \eqref{eq:combined_incremental_system} is  exponentially stable in $\mathcal{D}$ with respect to the diagonal $\Delta$. 
 
 Moreover, using Theorem \ref*{the:set_lyapunov_function}, the combined system \eqref{eq:combined_incremental_system} admits a Lyapunov function satisfying \eqref{eq:set_lyapunov_function_1}-\eqref{eq:set_lyapunov_function_2}. Given a $z\in \mathcal{D}\times \mathcal{D}$, it follows from the equivalence in \eqref{eq:set_equivalence}, and the proof of Theorem \ref*{the:set_lyapunov_function}, that \eqref{eq:incremental_lyapunov_1}-\eqref{eq:incremental_lyapunov_2} are satisfied with $c_1 =\frac{\sqrt{2}}{2}$, and $c_2  = \frac{d^2}{\sqrt{2}-\lambda^2\sqrt{2}}$ for the following Lyapunov function
 \begin{equation*}
    V(k_0, z) = \sum_{k=0}^{N-1}|\phi_\mathcal{F}(k_0+k,k_0,z)|^2_{\Delta},
\end{equation*}
where $\phi_\mathcal{F}:\mathbb{N}_{\geq k_0} \times\mathbb{N}_{\geq k_0}\times \mathbb{R}^{2n} \rightarrow \mathbb{R}^{2n}$ is the state transition matrix for the system \eqref{eq:combined_incremental_system}.

We show the inequality \eqref{eq:lyapunov_difference_incremental} next. For any $(\xi,\zeta,\tilde{\xi},\tilde{\zeta}) \in \mathcal{D}\times \mathcal{D}\times \mathcal{D}\times \mathcal{D}$, $k \in \mathbb{N}_{\geq k_0}$, $w: = [\tilde{\xi}^\top, \tilde{\zeta}^\top]^\top$ and $z$ defined as above
\begin{align*}
    \begin{split}
        &|V(k_0,z) - V(k_0,w)| \\
        &=\left|\sum_{k=0}^{N-1} \left(|\phi_\mathcal{F}(k_0+k,k_0,z)|^2_{\Delta} - |\phi_\mathcal{F}(k_0+k,k_0,w)|^2_{\Delta}\right)\right|
    \end{split}
    \\
    \begin{split}
        &\leq\sum_{k=0}^{N-1}\left(\|\phi_\mathcal{F}(k_0+k,k_0,z) -\phi_\mathcal{F}(k_0+k,k_0,w)\|\right) \\
        &\qquad \qquad \cdot \left(|\phi_\mathcal{F}(k_0+k,k_0,z)|_{\Delta} + |\phi_\mathcal{F}(k_0+k,k_0,w)|_{\Delta}\right)
    \end{split}
    \\
    \begin{split}
        &\leq \sum_{k=0}^{N-1}\frac{d\lambda^k}{\sqrt{2}}\left\Vert\begin{bmatrix}
        \phi(k_0+k,k_0,\xi) - \phi(k_0+k,k_0,\tilde{\xi}) \\
        \phi(k_0+k,k_0,\zeta) - \phi(k_0+k,k_0,\tilde{\zeta})
    \end{bmatrix}\right\Vert\\
    &\qquad \qquad \qquad \qquad \qquad \qquad \qquad \cdot\left(\|\xi-\zeta\|+\|\tilde{\xi}-\tilde{\zeta}\|\right)
    \end{split}
    \\
    \begin{split}
    &\leq \left(\|\xi-\tilde{\xi}\| + \|\zeta-\tilde{\zeta}\|\right)\left(\|\xi-\zeta\|+\|\tilde{\xi}-\tilde{\zeta}\|\right)\sum_{k=0}^{N-1}\frac{d\lambda^k L^k}{\sqrt{2}},
    \end{split}
\end{align*}
where the last two inequalities follow from the Lipschitz continuity and exponential stability (with respect to $\Delta$) of the solutions. The first inequality follows from  the following. For any two vectors $\xi,\zeta\in \mathcal{D}$ and closed, convex set $\mathcal{A}$
\begin{align*}
    |\xi|_\mathcal{A} - |\zeta|_{\mathcal{A}} &= |\xi|_{\mathcal{A}} - \|\zeta-\zeta_p\|\leq \|\xi-\zeta_p\| - \|\zeta-\zeta_p\|\\
    &\leq |\|\xi-\zeta_p\| - \|\zeta-\zeta_p\|| \leq \|\xi-\zeta\|,
\end{align*}
where $\zeta_p \in \mathcal{A}$ is such that $|\zeta|_{\mathcal{A}} = \|\zeta-\zeta_p\|$, which exists and is unique, since $\mathcal{A}$ is closed and convex. Finally, choosing $c_3 = \sum_{k=0}^{N-1}\frac{d\lambda^k L^k}{\sqrt{2}}$ completes the proof.
\end{proof}

\subsection*{Proof of Theorem \ref{prop:eis-eiiss}}
\begin{proof}
    Given the nonlinear system \eqref{eq:nonlinearsystem}, consider the evolution of two, respectively unperturbed and perturbed trajectories
    \begin{align*}
        x_{k+1} &= f(k,x_k),\\
        y_{k+1} &= f(k,y_k) +  w_k,
    \end{align*} 
    for some $(x_{k_0}, y_{k_0}) \in \mathcal{D} \times \mathcal{D}$, where $w_k \in  \mathcal{B}(r_w)$, for some $r_w \in \mathbb{R}_+$ is such that $y_k\in \mathcal{D}$ for all $k\in \mathbb{N}_{\geq k_0}$. Given $f$ is uniformly exponentially incrementally stable in $\mathcal{D}$, then from Theorem \ref*{the:converse_edelta_s}, there exists a Lyapunov function $V:\mathbb{N}_{\geq k_0} \times \mathcal{D}\times \mathcal{D}\rightarrow \mathbb{R}_+$ satisfying \eqref{eq:incremental_lyapunov_1}-\eqref{eq:lyapunov_difference_incremental}. Then, for any $(x,y) \in \mathcal{D} \times \mathcal{D}$, $k\in \mathbb{N}_{\geq k_0}$, and $w\in \mathcal{B}(r_w)$
    \begin{align*}
        \begin{split}
            &V\left(k+1,f(k,x),f(k,y)+w\right) - V\left(k,x,y\right)=
        \end{split}
        \\
        \begin{split}
            &=V\left(k+1,f(k,x),f(k,y)\right) - V\left(k,x,y\right)\\
             &+ V\left(k+1,f(k,x),f(k,y)+w\right)- V\left(k+1,f(k,x),f(k,y)\right)
        \end{split}
        \\
        \begin{split}
            &\leq - \left(1- \beta^2\right) c_1 \|x-y\|^2\\
             &\quad + c_3\|w\|\left(\|f(k,x)-f(k,y)-w\| + \|f(k,x)-f(k,y)\|\right)\\
            &\leq - \left(1- \beta^2\right) c_1 \|x-y\|^2 + c_3\|w\|^2 + 2c_3L_f \|w\|\|x-y\|, 
        \end{split}
    \end{align*}
    where the first inequality follows from Theorem \ref*{the:converse_edelta_s}, and the second from the triangle inequality. Completing the square, and denoting $c_4 := (1-\beta^2)c_1$ it follows from above that
    \begin{align*}
        \begin{split}
            &V\left(k+1,f(k,x),f(k,y)+ w\right) - V\left(k,x,y\right) \leq
        \end{split}
        \\
        \begin{split}
        &\leq \frac{-3c_4}{2}\|x-y\|^2\\
        &\qquad \qquad + \left(\frac{\sqrt{c_4}}{\sqrt{2}}\|x-y\| + \frac{c_3L_f\sqrt{2}\|w\|}{\sqrt{c_4}}\right)^2 - \frac{2c_3^2L_f^2 \|w\|^2}{c_4}
        \end{split}
        \\
        \begin{split}
        &\leq -\frac{c_4}{2}\|x-y\|^2 + \frac{2c_3^2L_f^2 \|w\|^2}{c_4^2}\\
        &\leq -\frac{c_4}{2c_2}V(k,x,y)+ \frac{2c_3^2L_f^2 \|w\|^2}{c_4},
        \end{split} 
    \end{align*}
    where the second inequality follows by the fact that $\left(a +b\right)^2\leq 2a^2 + 2b^2$ for any $a,b \in \mathbb{R}$, and the last inequality from Theorem \ref*{the:converse_edelta_s}. Finally, rearranging the Lyapunov equations it follows that
    \begin{equation*}
        V\left(k+1,f(k,x),f(k,y)+w\right) \leq \rho^2 V\left(k,x,y\right) + c_5 \|w\|^2,
    \end{equation*}
    for all $k\in \mathbb{N}_{\geq k_0}$, with $\rho^2:= 1 - \frac{c_4}{2c_2}\in (0,1)$, since $c_2>c_4$, and $c_5:= \frac{2c_3^2L_f^2}{c_4}$. Unrolling the recursion and using \eqref{eq:incremental_lyapunov_1}
    \begin{align*}
        &c_1\|x_k-y_k\|^2\\
         &\leq \rho^{2(k-k_0)} V(k_0,x_{k_0},y_{k_0}) + c_5\sum_{i=k_0}^{k-1}\rho^{2(k-i-1)} \|w_i\|^2 \\
        &\leq c_2 \rho^{2(k-k_0)} \|x_{k_0}-y_{k_0}\|^2 +c_5\sum_{i=k_0}^{k-1}\rho^{2(k-i-1)} \|w_i\|^2.
    \end{align*}
    Dividing by $c_1$ and taking the square root, completes the proof
    \begin{align*}
        \|x_k-y_k\| \leq \sqrt{\frac{c_2}{c_1}} \rho^{(k-k_0)} \|x_{k_0}&-y_{k_0}\|\\
        & +\sqrt{\frac{c_5}{c_1}}\sum_{i=k_0}^{k-1}\rho^{k-i-1} \|w_i\|.
    \end{align*}
\end{proof}
\end{appendix}
\bibliographystyle{ieeetr}
\bibliography{bibliography.bib}

\end{document}